\documentclass[sigplan, nonacm]{acmart}



\makeatletter
\def\@ACM@checkaffil{
    \if@ACM@instpresent\else
    \ClassWarningNoLine{\@classname}{No institution present for an affiliation}%
    \fi
    \if@ACM@citypresent\else
    \ClassWarningNoLine{\@classname}{No city present for an affiliation}%
    \fi
    \if@ACM@countrypresent\else
        \ClassWarningNoLine{\@classname}{No country present for an affiliation}%
    \fi
}
\makeatother
\usepackage{amsmath}
\usepackage{graphicx}
\usepackage{amsmath,amsfonts}
\usepackage{amsthm}
\usepackage{balance}
\usepackage{xcolor}
\usepackage{fancyvrb}
\usepackage{algorithm}
\usepackage{algpseudocode}
\usepackage{subcaption}
\usepackage{xspace}
\usepackage{paralist}
\usepackage{url}
\usepackage{pgfplots}
\usepackage{pgfplotstable}
\pgfplotsset{compat=newest}
\usepackage{tikz}

\usepackage{bm}
\usepackage{wrapfig}
\usepackage{inputenc}  
\usepackage{lipsum}  
\usepackage{caption}
\usepackage{indentfirst}
\usepackage[htt]{hyphenat} 
\usepackage[hang,flushmargin]{footmisc}  
\usepackage{mathtools}
\usepackage{bold-extra}
\usepackage[all]{nowidow}
\usepackage{booktabs,multirow}

\newtheorem{theorem}{Theorem}
\newtheorem{definition}{Definition}
\newtheorem{setup}{Setup}

\newtheorem{lemma}{Lemma}

\newtheorem{proposition}{Proposition}

\newtheorem{terminology}{Terminology}




\newcommand{\ignore}[1]{}

\newcommand{\mycomment}[1]{}


\DeclareMathOperator*{\argmin}{argmin}


\newcommand{\ultraverse}{\textsf{Ultraverse}\xspace}



\newcommand{\para}[1]{\vspace{0.05in}\noindent{\bf{#1}}}







\input{savespace.tex}

\usepackage{hyperref}


\pagestyle{plain}
\pagenumbering{arabic}
\settopmatter{printfolios=true}


\pdfoutput=1
\begin{document}

\makeatletter
\everypar={\looseness=-1}
\makeatother

\title{\ultraverse: An Efficient \textit{What-if} Analysis Framework for Software Applications Interacting with Database Systems}

\author{Ronny Ko}
\affiliation{\institution{Osaka University}\country{Japan}, \institution{Ohio State University}\country{USA}, \institution{Sensor Tech}\country{Republic of Korea}}
\email{hrko@g.harvard.edu}

\author{Chuan Xiao}
\affiliation{\institution{Osaka University}\country{Japan}, \institution{Nagoya University}\country{Japan}}
\email{chuanx@ist.osaka-u.ac.jp}

\author{Makoto Onizuka}
\affiliation{\institution{Osaka University}\country{Japan}}
\email{onizuka@ist.osaka-u.ac.jp}

\author{Zhiqiang Lin}
\affiliation{\institution{Ohio State University}\country{USA}}
\email{zlin@cse.ohio-state.edu}

\author{Yihe Huang}
\affiliation{\institution{Databricks}\country{USA}}
\email{yihehuang@g.harvard.edu}

\begin{CCSXML}
<ccs2012>
   <concept>
       <concept_id>10003752.10010070.10010111.10003623</concept_id>
       <concept_desc>Theory of computation~Data provenance</concept_desc>
       <concept_significance>500</concept_significance>
       </concept>
   <concept>
       <concept_id>10011007.10010940.10010992.10010993</concept_id>
       <concept_desc>Software and its engineering~Correctness</concept_desc>
       <concept_significance>300</concept_significance>
       </concept>
   <concept>
       <concept_id>10010147.10010341.10010342</concept_id>
       <concept_desc>Computing methodologies~Model development and analysis</concept_desc>
       <concept_significance>300</concept_significance>
       </concept>
 </ccs2012>
\end{CCSXML}

\ccsdesc[500]{Theory of computation~Data provenance}
\ccsdesc[300]{Software and its engineering~Correctness}
\ccsdesc[300]{Computing methodologies~Model development and analysis}

\keywords{what-if analysis, data provenance, query dependency analysis}

\begin{abstract}
Existing \textit{what-if} analysis systems are predominantly tailored to operate on either only the application layer or only the database layer of software. This isolated approach limits their effectiveness in scenarios where intensive interaction between applications and database systems occurs. To address this gap, we introduce \ultraverse, a \textit{what-if} analysis framework that seamlessly integrates both application and database layers. \ultraverse employs dynamic symbolic execution to effectively translate application code into compact SQL procedure representations, thereby synchronizing application semantics at both SQL and application levels during \textit{what-if} replays. A novel aspect of \ultraverse is its use of advanced query dependency analysis, which serves two key purposes: (1) it eliminates the need to replay irrelevant transactions that do not influence the outcome, and (2) it facilitates parallel replay of mutually independent transactions, significantly enhancing the analysis efficiency. \ultraverse is applicable to existing unmodified database systems and legacy application codes. Our extensive evaluations of the framework have demonstrated remarkable improvements in \textit{what-if} analysis speed, achieving performance gains ranging from 7.7x to 291x across diverse benchmarks. 
\end{abstract}

\maketitle



\section{Introduction}
\label{sec:introduction}

\textit{What-if} analysis in computer systems is a pivotal technique for simulating hypothetical modifications in data records or transaction histories, aiming to predict outcomes under altered conditions. 
This allows individuals and organizations to explore different scenarios and understand potential impacts before making decisions (e.g., to understand how changes in strategy could affect market share, sales, and customer behavior). \textit{What-if} analysis is particularly useful in business intelligence~\cite{forbes}, and is also available in commodity tools such as Microsoft Excel~\cite{what-if-microsoft}, Power BI~\cite{what-if-power-bi}, Tableau~\cite{what-if-tableau}, and Oracle Analytics~\cite{what-if-oracle}.

However, the existing tools have fundamental limitations, because they only focus on individual data records, failing to capture the comprehensive data flows across the database and applications. This is a problem, because a majority of real-life software applications involve \textit{both} database \& application data flows. Ignoring these factors during a \textit{what-if} analysis will result in semantically incorrect results from the application's viewpoint. Within this context, designing \textit{what-if} analysis in application software encounters two significant challenges: preserving the correctness of the outcomes and ensuring the efficiency of computing the outcome.



\para{Correctness Issue}: Today's most software is implemented using application languages with databases as persistent storage. Therefore, it is crucial for \textit{what-if} analysis to consider data flows between application code and database's state. 
\autoref{fig:new-follows-code} illustrates such an example using JavaScript with MySQL. This example implements an e-commerce web server's request handler function, where the \texttt{NewOrder} function serves a user request of placing a new order. This application function is essentially an \textit{application-level} transaction internally executing two SQL queries. The first \texttt{SELECT} query verifies if the requesting user has a registered shipping address in the \texttt{Address} table. If yes, the second \texttt{INSERT} query places the actual order by updating the \texttt{Orders} table. If the user does not have an address, the function returns an error. Given this request handler setup, suppose that user Alice had a registered address and placed an order successfully, and later we want to conduct a \textit{what-if} analysis of an alternate scenario where Alice had attempted to place an order without a registered address. 
According to \autoref{fig:new-follows-code}'s web application code, if Alice had a registered address, the \texttt{NewOrder} call would traverse the \texttt{true} branch of the \texttt{if} statement as the \texttt{SELECT} query would return an address record. On the other hand, during the \textit{what-if} analysis where Alice hadn't had a registered address, the \texttt{NewOrder} call should instead traverse the \texttt{false} branch as the \texttt{SELECT} query would return no address record. To correctly simulate this \textit{what-if} analysis scenario based on the changed past (i.e., Alice does not have an address), we should first modify the \texttt{Address} table's past state such that it did not store Alice's address, and then replay the \texttt{NewOrder} call. During this, notice that simply replaying only the individual DBMS queries (i.e., \texttt{SELECT} and \texttt{INSERT}) with a modified \texttt{Address} table would not give a correct analysis result, because DBMS queries are unaware of application-level semantics, which means the DBMS won't interpret the return value of the \texttt{SELECT} query as a trigger of the application-level \texttt{false} branch that does not replay the \texttt{INSERT} query.
A number of state-of-the-art DBMS-oriented \textit{what-if} analysis techniques 
(\cite{mahif, caravan, tiresias, qfix, whatif1, whatif2, whatif3}) have been proposed to efficiently answer \textit{what-if} database queries. However, they all fail to serve application-level \textit{what-if} analyses, because they are designed only for database queries and query logs, not capturing the higher-level application code's control flow logic or data flows that can impact the database's state. As a result, their \textit{what-if} analysis leads to inaccurate results from the application's viewpoint. \looseness=-1


\begin{figure}
\begin{footnotesize}
\begin{Verbatim}[frame=single, baselinestretch=0.7, commandchars=\\\{\}]
\textcolor{blue}{{1: }}\textbf{function} NewOrder \textbf{(}orderer_uid, order_id\textbf{):} 
\textcolor{blue}{{2: }}   \textbf{var} result_rows \textbf{=} \textbf{SQL_exec(}\textit{\textcolor{brown}{"SELECT COUNT(*)  FROM Address}}
\textcolor{blue}{{3: }}    \textit{\textcolor{brown}{ WHERE owner_id = "}} \textbf{+} orderer_uid);
\textcolor{blue}{{4: }}   \textbf{if (}result_rows[0]['COUNT(*)'] \textbf{!=} 0\textbf{)}
\textcolor{blue}{{5: }}      \textbf{SQL_exec(}\textcolor{brown}{"INSERT INTO Orders (oid, ord_uid) VALUES ("}
\textcolor{blue}{{6: }}        + order_id + \textcolor{brown}{", "} + orderer_uid + \textcolor{brown}{")"}\textbf{);}
\textcolor{blue}{{7: }}   \textbf{else} 
\textcolor{blue}{{8: }}      \textbf{return} \textit{\textcolor{brown}{"Error: User "}} + orderer_uid + \textit{\textcolor{brown}{" has no address"}}\textbf{;}
\end{Verbatim}
\end{footnotesize}
\caption{An E-commerce web server's user request handler.}
\label{fig:new-follows-code}
\end{figure}

\para{Efficiency Issue}: Some research works focusing on application attack recovery and forensics~\cite{warp, rail, chromepic, regexnet} manage to avoid the consistency issue by replaying both the application code and their associated DBMS queries. However, these approaches often lack scalability. For instance, given that each headless (graphic-less) Chromium browser instance requires around 100MB of RAM, the \textit{what-if} analysis overhead for a large user base (say, 1 million) becomes practically infeasible. 
Such scalability is also a prominent issue in \textit{what-if} DBMSes. Mahif~\cite{mahif}, the state-of-the-art DBMS for historical \textit{what-if} queries of the four basic types (\texttt{INSERT}, \texttt{UPDATE}, \texttt{DELETE}, and \texttt{SELECT}), exhibits this problem. The system's resource consumption increases exponentially with the linear growth of the transaction history size, making it incapable of processing a transaction history that contains more than even 2,000 SQL queries (\autoref{subsec:comparison}). Further, Mahif fails to capture complex semantics such as SQL-level \texttt{TRANSACTION}s or application-level transactions. \looseness=-1

To address the requirements of correctness and efficiency in application-wide \textit{what-if} analysis, we introduce \ultraverse, a novel framework for application-level \textit{what-if} analysis. Conventionally, \textit{what-if} analysis is akin to a ``retroactive operation" that modifies a \textit{past} operation in a series of committed ones, such as cancelling a previously committed insertion of `5' into a queue at time \textit{t=3}. 
\ultraverse stands out as the first DBMS to efficiently support retroactive operations on both SQL queries and application-level transactions while preserving the correctness of both SQL-level and application-level semantics. Our framework consists of two components: \textbf{(1) An SQL transpiler} transforms application code into its semantically equivalent SQL procedures; and \textbf{(2) A retroactive operation DBMS plugin} efficiently executes \textit{what-if} analysis on the transpiled SQL procedures. \looseness=-1

    \para{SQL Transpiler (\autoref{sec:transpiler})} translates each application-level transaction (developed in application languages) into a compact SQL \texttt{PROCEDURE} such that executing the SQL procedure has the same effect to the persistent database as executing the original application-level transaction.
    During retroactive operation, \ultraverse replays these transpiled \texttt{PROCEDURE}s rather than the original application code, ensuring both a correct (i.e., transactionally atomoic) and fast replay. This replay is fast because the transpiler prunes application logic that doesn't affect persistent storage. Additionally, the transpiler consolidates multiple SQL queries belonging to the same application-level transaction into a single SQL \texttt{PROCEDURE}, thereby reducing round-trip communications between the DBMS client and server. In the \autoref{fig:new-follows-code} example, the transpiled code would include the \texttt{SELECT} and \texttt{INSERT} queries within one \texttt{PROCEDURE}, reducing the round-trip time (RTT) from two to one. This reduction becomes even more significant when an application-level transaction uses loops to execute a larger number of individual SQL queries, as all these queries will execute within a single RTT as part of the single \texttt{PROCEDURE} call. \ultraverse's SQL transpiler is compatible with legacy application code and doesn't require manual modifications by developers. However, it faces the significant challenge of transpiling modern application languages (like JavaScript, PHP) having dynamic features that are difficult to analyze statically, such as dynamic variable type resolution, modifiable function pointers as object properties, asynchronous event calls, etc. Converting such dynamic behaviors into semantically equivalent SQL code is non-trivial. To address this, \ultraverse's transpiler utilizes abstract syntax trees and dynamic symbolic execution (\autoref{subsec:dse}). \looseness=-1

    \para{Retroactive Operation DBMS Plugin (\autoref{sec:design})} reduces the number of queries to be replayed in order to achieve high efficiency during \textit{what-if} analysis. For this, we introduce a fine-grained query dependency analysis to exclude queries whose execution results are deterministically unaffected by the modified \textit{what-if} target query (or queries). In detail, we perform query dependency analysis on the SQL \texttt{PROCEDURE} code produced by the SQL transpiler. This analysis occurs in two orthogonal aspects: column-wise (\autoref{subsec:columnwise-dependency-analysis}) and row-wise (\autoref{subsec:rowwise-analysis}) analysis, which respectively consider all the table columns and rows (i.e., cells) each query could potentially read or write.     
    Additionally, \ultraverse safely replays multiple queries simultaneously if their read/write sets (generated by the query dependency analysis) have no conflicts, further accelerating the replay. Such parallel execution (enabled by the the query dependency analysis) is a powerful enabler of multi-core OS resources. \ultraverse's hash-jumper (\autoref{subsec:hashjump}) enables further speedup by detecting the opportunities to safely early-stop the retroactive replay without harming correctness of the result. \ultraverse's retroactive operation plugin is applicable to unmodified DBMSes that support SQL, enhancing its practical deployability. \ultraverse supports rich SQL semantics\footnote{\ultraverse supports DDL (schema updates), transactions, stored procedures, triggers, control flow branches, temporary variables, and native APIs.}(\autoref{subsec:columnwise-dependency-analysis}), which are mostly unsupported by prior \textit{what-if} DBMSes. \looseness=-1

We implemented \ultraverse and evaluated its \textit{what-if} analysis performance against a baseline application built with NodeJS + MySQL. Across various BenchBase benchmarks \cite{oltp-bench} (TPC-C, TATP, Epinions, and SEATS), \ultraverse exhibited a speedup ranging from 7.7x to 512x (\autoref{subsec:performance}), and was found to be 6450x faster than Mahif \cite{mahif} (\autoref{subsec:comparison}). We also tested \ultraverse on an open-source e-commerce web application (AStore \cite{astore})\ignore{ and machine-learning data analytics (Apache Hivemall \cite{hivemall})}, with \ultraverse achieving a \textit{what-if} analysis speedup between 18.7x and 601x. The additional space overhead for \textit{what-if} analysis using \ultraverse is marginal, e.g., 12 to 110 bytes for the log size per query, in contrast to hundreds of bytes for MySQL's query log.  \looseness=-1

\para{Contributions}. Our work makes the following contributions: 
\begin{itemize}
\item We design an efficient and correct \textit{what-if} analysis framework for database-using applications.
\item We design the novel query dependency analysis for efficient retroactive operation for database systems. 
\item We implement and demonstrate the efficiency of the \ultraverse prototype through an empirical evaluation.
\end{itemize}

\section{System Overview}
\label{sec:overview}

\begin{figure}[!t]
\includegraphics[width=.5\textwidth]{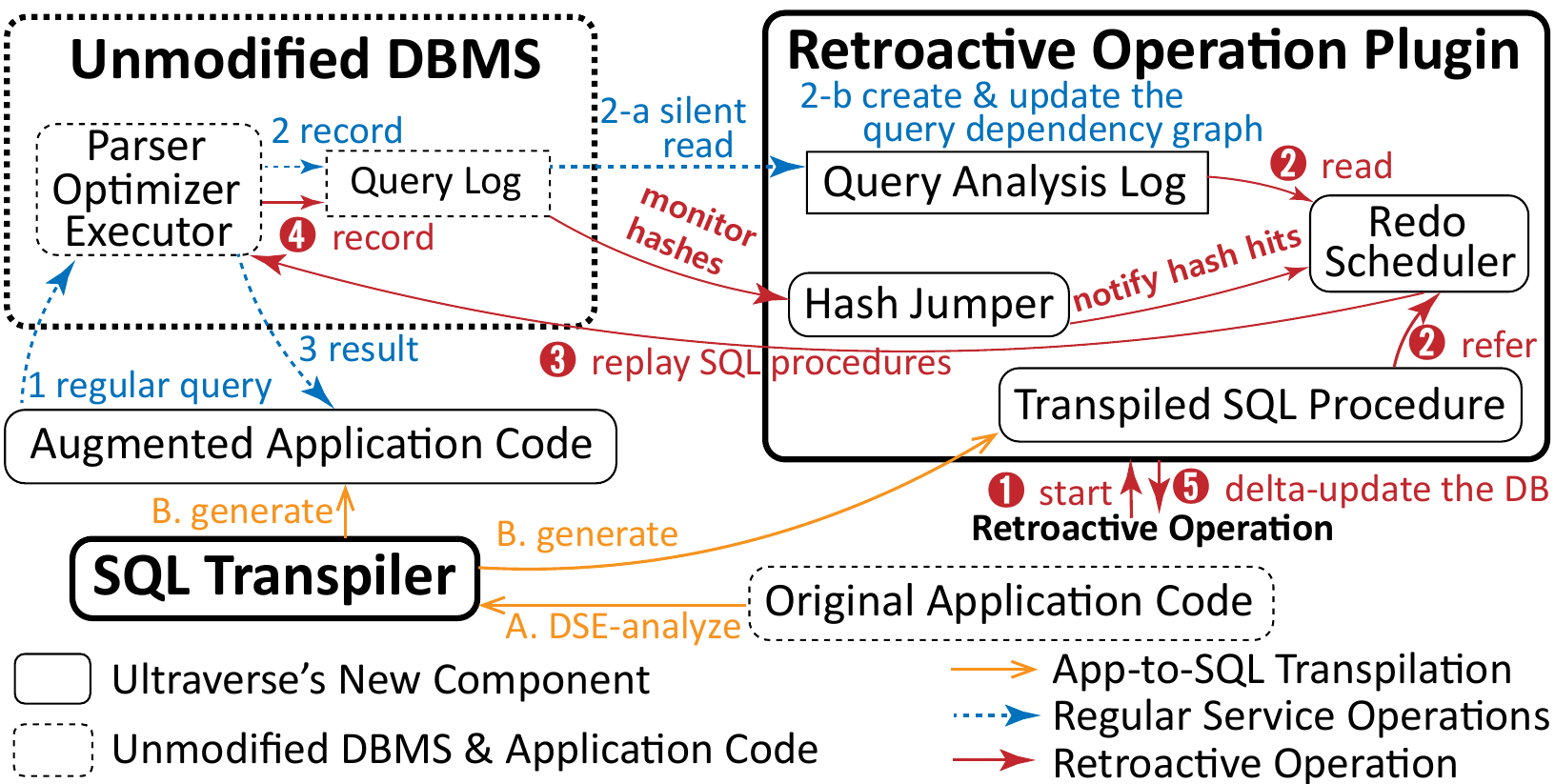}
  \caption{The comprehensive architecture of \ultraverse.}
  \label{fig:system-overview}
\end{figure}

\autoref{fig:system-overview} shows \ultraverse's comprehensive system architecture, comprising the SQL transpiler and the retroactive DBMS plugin. The SQL transpiler accepts the original application code as input and generates two components as outputs: (1) augmented application-level transaction code (\autoref{fig:new-follows-transpiled-javascript}); and (2) its semantically equivalent SQL \texttt{PROCEDURE} (\autoref{fig:new-follows-transpiled-sql}). The augmented application-level transaction is executed during the application's regular service operations, during which it internally sends SQL queries to the unmodified DBMS as usual, which in turn executes the queries and logs them in the query log. At the same time, the augmented application code also asynchronously records the information of the invoked application-level transaction (line 2 in \autoref{fig:new-follows-transpiled-javascript}), so that \ultraverse's query analyzer can include this transaction in the query dependency analysis. \ultraverse's smallest granularity of query dependency analysis is the table columns and rows (i.e., cells) that each query potentially accesses. As the query analyzer reads the query log, it can deduce read-write dependencies among the queries as well as among the application-level transactions.  \looseness=-1
When a user makes a request to retroactively add, remove, or modify past queries or application-level transactions, \ultraverse's query analyzer examines the query dependency log and identifies only those queries/transactions that need to be rolled back and replayed. During the replay phase, \ultraverse executes the equivalent SQL \texttt{PROCEDURE} of each targeted application-level transaction, and executes multiple mutually independent (i.e., having no read/write dependencies) \texttt{PROCEDURE}s in parallel to improve the speed, while still ensuring a strongly serialized final database state-- as if all queries were committed in the same sequential order as the original execution. \looseness=-1

\section{SQL Transpiler}
\label{sec:transpiler}

\begin{figure}[t!]
\begin{footnotesize}
\begin{Verbatim}[frame=single, baselinestretch=0.7, commandchars=\\\{\}]
\textcolor{blue}{{1:}}\textbf{function} NewOrder\textbf{(}orderer_uid, order_id\textbf{)}
\textcolor{blue}{{2:}}\{  \textbf{Ultraverse_log(}\textcolor{brown}{`function NewOrder(}$\{orderer_uid\}\textcolor{brown}{,} 
\textcolor{blue}{{3:}}     $\{order_id\}\textcolor{brown}{)`}\textbf{)};
\textcolor{blue}{{4:}}   ...  \textcolor{red}{// the same function body of NewOrder() in Figure 1}
\textcolor{blue}{{5:}}\}
\end{Verbatim}
\end{footnotesize}
\caption{Augmented JavaScript code of \autoref{fig:new-follows-code}.}
\label{fig:new-follows-transpiled-javascript}
\end{figure}

\begin{figure}[t!]
\begin{footnotesize}
\begin{Verbatim}[frame=single, baselinestretch=0.7, commandchars=\\\{\}]
{\textbf{DECLARE PROCEDURE} NewOrder\textbf{(IN} orderer_uid \textbf{INT}, }
{\textbf{IN} order_id \textbf{VARCHAR(8))} NewOrder_Label\textbf{:BEGIN}}
       {\textbf{DECLARE} sql_out1_0_count \textbf{INT};}
       {\textbf{SELECT COUNT(*) INTO} sql_out1_0_count \textbf{FROM} Address}
         {\textbf{WHERE} owner_uid = orderer_uid;}
       {\textbf{IF (}sql_out1_0_count != 0\textbf{) THEN}}
          {\textbf{INSERT INTO} Orders \textbf{VALUES (}orderer_uid, order_id\textbf{)};}
       {\textbf{ELSE}}
          {\textbf{SELECT CONCAT(}\textcolor{brown}{"Error: User "}, orderer_uid,} 
          { \textcolor{brown}{" has no addresss"}\textbf{)};}
          {\textbf{LEAVE} NewOrder_Label;}
       {\textbf{END IF};}
\end{Verbatim}
\end{footnotesize}
\caption{Transpiled \texttt{SQL} \texttt{PROCEDURE} version of \autoref{fig:new-follows-code}.}
\label{fig:new-follows-transpiled-sql}
\end{figure}


\ultraverse's SQL transpiler is designed to analyze application code and identify the set of SQL queries that belong to the same application-level transaction, in order to atomically replay those queries as well as their wrapping application code as a single transaction. Then, the transpiler generates \textit{augmented} application code (as shown in \autoref{fig:new-follows-transpiled-javascript}) that asynchronously records (line 2) which application-level transaction is called in runtime. The transpiler also generates an SQL \texttt{PROCEDURE} equivalency of each application-level transaction (as shown in \autoref{fig:new-follows-transpiled-sql}), which captures all (and only those) data flows within the application-level transaction affects the database state. Later, these SQL \texttt{PROCEDURE}s are executed by the retroactive DBMS plugin (\autoref{sec:design}) during the retroactive replay. To transpile the application code and its associated SQL queries into such SQL \texttt{PROCEDURE}s, \ultraverse's approach is to analyze application code not only statically with abstract syntax trees (AST), but also dynamically based on dynamic symbolic execution (DSE).

\subsection{Dynamic Code Analysis}
\label{subsec:dse}

Static code analysis is a code scanning technique to examine the behaviors of a target program and its data flows without executing the program, but it has limitations in capturing dynamic runtime behaviors of modern application programming languages~\cite{javascript}. In case of JavaScript, variables has no static types and are resolved dynamically in runtime. Control flow targets also can be resolved dynamically (e.g., \texttt{var} \texttt{result = myObject.methodName(arg1, arg2)}, where the value of \texttt{.methodName} is dynamically assigned). Furthermore, some APIs are a blackbox involving undeterminism (e.g., \texttt{rand()} or a response from a remote endpoint).  
It is difficult for static analysis to correctly transpile such dynamic behaviors into a statically typed language (e.g., SQL), because their dynamic types, control flow targets, or return values are concretized only in runtime.


To address the challenge of dynamism, \ultraverse dynamically analyzes the application code based on dynamic symbolic execution (DSE). Symbolic execution is originally a technique to discover  viable execution paths of an application program and generate an inventory of program inputs that lead to each path. Symbolic execution is commonly employed for software debugging and malware analysis. To maintain efficiency, our research strategically uses DSE to traverse only truly reachable execution paths in real-time, effectively pruning a vast number of unreachable, hypothetical execution paths.
\looseness=-1


The underlying concept of \ultraverse involves executing the target application based on DSE and projecting all dynamic runtime behaviors of the application code into a concretized execution path tree. This tree serves as a flowchart-like depiction of all potential program execution scenarios \textcolor{black}{(i.e., all control flow graphs)}. \ultraverse adopts a \textit{concolic} (i.e., concrete + symbolic) strategy for symbolic execution~\cite{ball2015deconstructing}, fully executing each testcase before exploring alternate paths. This method proves more practical than a purely static program analysis, which often identifies an impractical number of unreachable alternate control flows. \looseness=-1

DSE is dynamic in nature. Each testcase is executed by an actual runtime programming language interpreter (e.g., JavaScript engine or Python interpreter), ensuring each execution is a real-time action and does not produce false positives (i.e., it explores only reachable paths). This approach is highly compatible with dynamic programming languages such as JavaScript, where dynamic code evaluation and complex type-coercion often stymie effective static analysis. \ultraverse regards the input parameters of application-level transactions and the return values of database API calls as symbolic, and expresses the values of the variables storing these data flows as symbolic expressions. The runtime values of all other variables are considered as concretized constants to ensure deterministic execution. \looseness=-1

\subsection{Procedure of Transpilation}
\label{subsec:transpilation}

\ultraverse analyzes application code with DSE to convert its each transaction into a semantically equivalent SQL \texttt{PROCEDURE}. This transpilation process involves three steps: (1) application code instrumentation; (2) execution of the instrumented application; and (3) Z3-to-SQL transpilation.  \looseness=-1

\begin{figure}[t!]
  \includegraphics[width=0.5\textwidth]{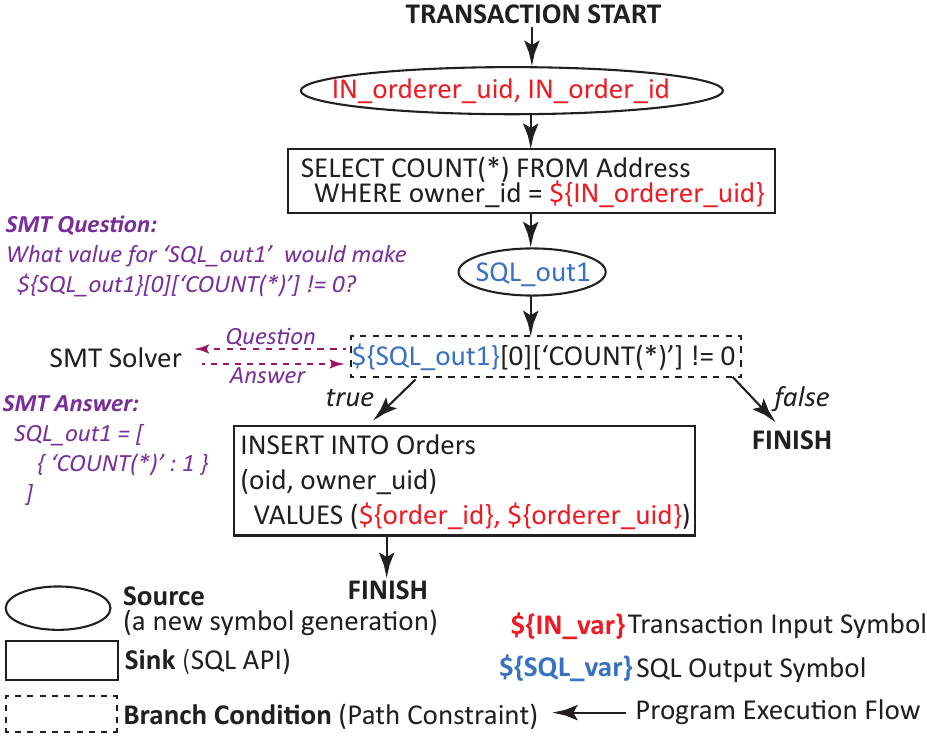}
  \caption{DSE's execution path tree of the \texttt{"NewOrder"} application-level transaction in \autoref{fig:new-follows-code}.}
  \label{fig:dse-example}
\end{figure}

\para{Step 1: Application Code Instrumentation}. 
This process creates an AST of the target application and installs a hook function at every operation of an application’s code (e.g., binary operations, function entry/exit, conditional branches, or read/write of variables). After then, as the instrumented application code executes via an unmodified runtime language interpreter, the injected hook silently builds symbolic expressions of all (intermediate) variables and branch conditions at every dynamic operation. Symbols in \ultraverse's DSE analysis are defined as follows: (1) input parameters to each application-level transaction; (2) return values of database API calls; and (3) return values of all nondeterministic native APIs (e.g., \texttt{Math.random()}, \texttt{Date.gettime()}). These symbols are considered as blackboxes that contain unpredictable values at each run. 
In \autoref{fig:new-follows-code}, the transaction input symbols are \texttt{src\_user} and \texttt{dst\_user}; the database's return value symbol is \texttt{result\_rows}. The hooks render symbolic expressions in the Z3 script language to be used by the Z3 Theorem Prover, which helps discover concrete values for symbols (i.e., program inputs) leading to each branch condition in the program. \looseness=-1

To minimize the overhead of instrumentation and analysis time, \ultraverse instruments only the application-level transaction functions, which can be auto-detected by \ultraverse or manually specified by the user. For example, if the application uses the ExpressJS web application framework, \ultraverse regards the framework's \texttt{router} event calls that respond to a user's HTTP request as application-level transaction calls. All nested functions called by the top-level functions are instrumented as well.  \looseness=-1

\para{Step 2: Execution of Instrumented Application.}
\ultraverse runs the instrumented application generated at Step 1. Our scope of symbolic execution is the application-level transaction function (from the function entry to its exit). In the initial execution, the program is executed with a randomized initial symbolic value set as a seed testcase. The execution starts with the seed values as input, and at each database API call, the instrumented hook bypasses DBMS access (treating its internal logic as a blackbox) and returns a pre-set symbolic value instead. As the execution finishes, the program trace is collected and a path condition is developed. This condition contains each of the logical constraints on symbols across the program's control flow. If the SMT solver determines that there is a feasible symbolic value set which would have led the transaction differently at each conditional branch, such new transaction inputs and return values of database API calls are derived and added to the testcase queue. This process repeats until either there are no more new execution paths or a user-set analysis timeout is reached. The final result is a comprehensive program execution path tree consisting of only truly reachable execution paths and a list of symbolic input values leading to each path. Each path in the tree compactly captures only those data flows affecting the database's state. \looseness=-1

\autoref{fig:dse-example} illustrates the \texttt{NewOrder} application-level transaction's (\autoref{fig:new-follows-code}) execution path tree, which covers all possible execution paths of the \texttt{NewOrder} transaction.
At this point, we have successfully transformed the application code subject to dynamism into a concretized program execution tree. \looseness=-1

\para{Step 3: Z3-to-SQL Transpilation.} In this final step, \ultraverse's transpiler converts the Z3 symbolic execution tree into a logically equivalent SQL \texttt{PROCEDURE}. Each Z3 operator in the program execution tree is mapped to its semantically equivalent SQL operator or API (e.g., Z3’s \texttt{str.++(a, b)} operator is equivalent to SQL’s \texttt{CONCAT(a, b)} API). The end result is a modified application-level transaction which internally calls a newly generated SQL \texttt{PROCEDURE}, running the SQL code equivalent to only the application logic contributing to directly/indirectly updating the database. \autoref{fig:new-follows-transpiled-sql} is the transpiled SQL version of \autoref{fig:new-follows-code}. While \autoref{fig:new-follows-transpiled-sql} is logically equivalent to \autoref{fig:new-follows-code}, its processing time is shorter because it costs 1 RTT between the DBMS server and client instead of 2 RTTS (by executing \texttt{SELECT} and \texttt{INSERT} within a single \texttt{PROCEDURE} call instead of executing \texttt{SELECT} and \texttt{INSERT} separately).  \looseness=-1

\subsection{Handling Practical Challenges} 
\label{subsec:dse-discussion}

Code analysis by concolic execution is often insufficient to capture all possible execution behaviors of a target program. We discuss these issues and describe how we address them. \looseness=-1

\para{Handling Path Explosion: } DSE analysis can face a path explosion issue under certain conditions. Specifically, if one symbol is used as a loop condition, the number of dynamically explorable paths can infinitely grow. \ultraverse detects such potential path explosions based on path generation patterns in the path tree. Specifically, \ultraverse uses the information of AST and the loop summarization technique~\cite{10.1145/2001420.2001424, 8241837} to detect loop patterns. \looseness=-1

\para{Handling Unreached Path: } Due to limitations of computational resources, the SMT solver may fail to resolve some symbolic constraints of branch conditions before the timeout. For example, a branch condition may involve complex regular expressions of string variables. The execution paths with constraints that the SMT solver fails to solve may or may not be reachable in actual runtime. For such an uncertainly unexplored branch path in the final transpiled \texttt{PROCEDURE}, \ultraverse inserts an \texttt{SQLSTATE} command with a reserved signal number. If this signal is triggered during regular service or \textit{what-if} replay, \ultraverse learns about the symbolic values that led to this path, runs a delta DSE analysis on this path, and applies a delta update to the transpiled \texttt{PROCEDURE} to incorporate this path.  \looseness=-1

\para{Blackbox APIs:} Some JavaScript native APIs are difficult to model into Z3. Such examples include arbitrary dynamic code execution (i.e., \texttt{eval()}), non-deterministics functions (e.g., \texttt{Math.random()}), or network APIs communicating with unknown external endpoints. Whenever such unpredictable APIs are called during symbolic execution, \ultraverse treats them as blackbox APIs, and spawns \& links a new symbol to their return values, which is similar to spawning an SQL symbol (e.g., \texttt{SQL\_out1} in \autoref{fig:dse-example}) upon each DBMS API call. And later, during the \textit{what-if} replay simulation based on the transpiled SQL procedure, when it encounters such a blackbox API call, \ultraverse provides two options: (1) the user (i.e., \textit{what-if} analyst) can handcode the return value of each unpredictable API call as part of \textit{what-if} analytics; or (2) \ultraverse computes the return value of the blackbox API call by executing the actual JavaScript native API (e.g., \texttt{eval()} or \texttt{Math.random()}) on the fly. In the latter case, \ultraverse's transpiled SQL procedure is instrumented in such a way that it temporarily returns from the SQL procedure, runs the JavaScript native API call, captures its return value, and resumes the SQL procedure from where it left off. \looseness=-1

\para{Server-Client Communication: } If the target application is a web browser-based service, it may contain client-side JavaScript logic that pre-processes or post-processes the input/return values of application-level transactions to be executed by the server-side code. If such client-side logic contains data flows affecting the server-side database's state, such client-side code should also be factored in to ensure a sound \textit{what-if} analysis. \ultraverse can be extended to handle such cases by running DSE on the client-side webpage as well. Specifically, it will treat a webpage's DOM nodes that store the client's inputs (e.g., \texttt{<input>}) as symbols and trace their evolving symbolic expressions throughout the webpage's logic, which includes the client-to-server communication via JavaScript's HTTP APIs (e.g., \texttt{XMLHTTPRequest.send()}). During the DSE analysis of the webpage, the client-side code's reserved variables that fingerprint the client-specific identity (e.g., \texttt{navigator.userAgent}) are also considered as client-side symbols. \looseness=-1

In \autoref{appendix:dynamic-application}\footnote{\label{footnote:appendix}The URL for the full paper with Appendix: \url{https://arxiv.org/abs/2211.05327}}, we provide examples of application-level dynamism and explain how \ultraverse's transpiled code handles them.

\subsection{Correctness of Transpilation}
\label{subsec:transpilation-correctness}

We define the correctness of transpilation as follows: a transpilation of an application-level transaction into an SQL \texttt{PROCEDURE} is correct if and only if the execution of both transactions (i.e., the original one and the transpiled one) always have the identical effect to their associated persistent SQL database for all possible sets of symbolic values belonging to the transaction (i.e., transactional input arguments, the database calls' return values, and non-deterministic API calls' return values).

As explained in \autoref{subsec:dse}, an application implemented by a dynamic programming language can trigger undeterministic behaviors of a program by: dynamically resolving variable types, dynamically resolving control flow targets, and using undeterministic return values of blackbox APIs. \ultraverse's transpiler handles dynamic type coercion  under control by dynamically monitoring and capturing each concretized type and applying the equivalent SQL variable type during SQL code conversion. \ultraverse further handles undeterministic blackbox APIs by using the technique of dynamic symbol spawning (\autoref{subsec:dse-discussion}). Lastly, \ultraverse handles \textit{symbolic} control flow targets based on the combination of the following two techniques: (1) identifies all available jump targets discovered during the DSE analysis and incorporates them into the final transpiled code with \textit{if}-statements conditioned on the symbolic jump target (i.e., the function name to be invoked); and (2) any other jump targets not identified during the DSE anlaysis (but exists) can be detected during the regular operations or retroactive operations, at which point the transpiled SQL code captures it on the fly and delta-updates the transpiled code by incorporating the newly discovered jump target. In \autoref{appendix:dynamic-application}, we explain how \ultraverse handles various types of dynamism with examples.

\ultraverse ensures that the application's all intermediate states that can occur during a retroactive operation are detected as one of those states either discovered during the offline DSE analysis or detected in real time. Based on such full capturing of all possible program states (either offline or online), \ultraverse preserves the correctness of retroactive operation.

Our \ultraverse prototype assumes that the application is written in a single programming language. Meanwhile, \ultraverse in principle supports multi-language applications  based on the same proposed techniques if a cross-language DSE engine is available. \ultraverse also assumes that the application exclusively uses SQL databases.

\section{Retroactive Database System}
\label{sec:design}




Once the SQL transpiler (\autoref{sec:transpiler}) has transpiled application-level transactions to semantically equivalent SQL \texttt{PROCEDURE}s, our next task is to analyze their read/write dependencies to enable efficient retroactive opeation, which will replay only those transactions dependent on the retroactive target transactions.


Consider a database $\mathbb{D}$, a set $\mathbb{Q}$ of queries $Q_i$ where $i$ represents the query's {commit order} (i.e., query index), and {$Q'_\tau$ is the target query to be retroactively added, removed, or changed at the commit order $\tau$ within $\{Q_1, Q_2, ... Q_\tau ... Q_{|\mathbb{Q}|}\} = \mathbb{Q}$}. 
In case of retroactively adding a new query $Q'_\tau$, $Q'_\tau$ is to be {inserted (i.e., executed)} right before $Q_\tau$. 
In case of retroactively removing the existing query $Q_\tau$ $(i.e., Q'_
\tau = Q_\tau)$, $Q_\tau$ is to be simply removed from the committed query list.
In case of retroactively {changing the existing query $Q_\tau$ to $Q'_\tau$, $Q_\tau$ is to be replaced by $Q'_\tau$}. 
The retroactive operation on the target query $Q'_\tau$ is equivalent to transforming $\mathbb{D}$ to a new database state that matches the one generated by the following procedure: 
\begin{enumerate}
    \item \textbf{Rollback Phase}: roll back $\mathbb{D}$'s state to commit index $\tau - 1$ by rolling back $Q_{|\mathbb{Q}|}, Q_{|\mathbb{Q}-1|}, \ldots Q_{\tau+1}, Q_\tau$.
    \item \textbf{Replay Phase}: do one of the following:
    \begin{itemize}
        \item To retroactively add $Q'_\tau$, execute $Q'_\tau$ and then replay $Q_\tau, \ldots Q_{|\mathbb{Q}|}$. 
        \item To retroactively remove $Q'_\tau$, replay $Q_{\tau+1}, \ldots Q_{|\mathbb{Q}|}$ (without $Q_\tau$). 
        \item To retroactively change $Q_\tau$ to $Q'_\tau$, execute $Q'_\tau$ and then replay $Q_{\tau+1}, \ldots Q_{|\mathbb{Q}|}$.  
    \end{itemize}
\end{enumerate}

Instead of exhaustively rolling back and replaying all $Q_\tau, \ldots Q_{|\mathbb{Q}|}$, \ultraverse introduces query dependency analysis to rollback and replay only those queries dependent on the retroactive target query $Q'_\tau$, speeding up the process while guaranteeing the expected same final state of $\mathbb{D}$.  \looseness=-1

\subsection{A Motivating Example}
\label{subsec:example}

\autoref{fig:dependency-graph-website-columnwise} depicts a comprehensive example of \ultraverse's query dependency graph for an e-commerce web service as an extension of \autoref{fig:new-follows-code}, which comprises four tables and one procedure. The \texttt{Users} table stores each user's id, nickname, and email. The \texttt{Address} table stores each user's registered shipping address. The \texttt{Orders} table records placed orders from users. The \texttt{Stats} table logs the trend of user purchases over time. The \texttt{NewOrder} procedure places a new order only if the intended user has a registered address. \looseness=-1

Initially, the service creates the \texttt{Users}, \texttt{Address}, \texttt{Orders}, and \texttt{Stats} tables (Q1$\sim$Q4) as well as the \texttt{NewOrder} procedure (Q5). Then, Alice signs up (Q6), registers a shipping address (Q7), and places an order (Q8). Bob also signs up (Q9) and places an order (Q10). After all these, the statistical data of user purchases gets recorded (Q11).

In \autoref{subsec:columnwise-dependency-analysis}, we will explain \ultraverse's column-wise dependency analysis technique (inspired by column-level data lineage~\cite{column-lineage}) that allows the \textit{what-if} analysis to only roll back Q8, Q10, and Q11, and then replay Q10 and Q11. In other words, we can avoid unnecessarily rolling back and replaying Q9, as the table columns Q9 reads (i.e., \texttt{Users.uid}) remain unaffected by the retroactive removal of Q8 that writes to table columns \texttt{Orders.*}. In \autoref{subsec:rowwise-analysis}, we will also explain \ultraverse's row-wise dependency optimization technique which further avoids the unnecessary rollback and replay of Q10.

\looseness=-1

\begin{figure}[t!]
  \includegraphics[width=.5\textwidth]{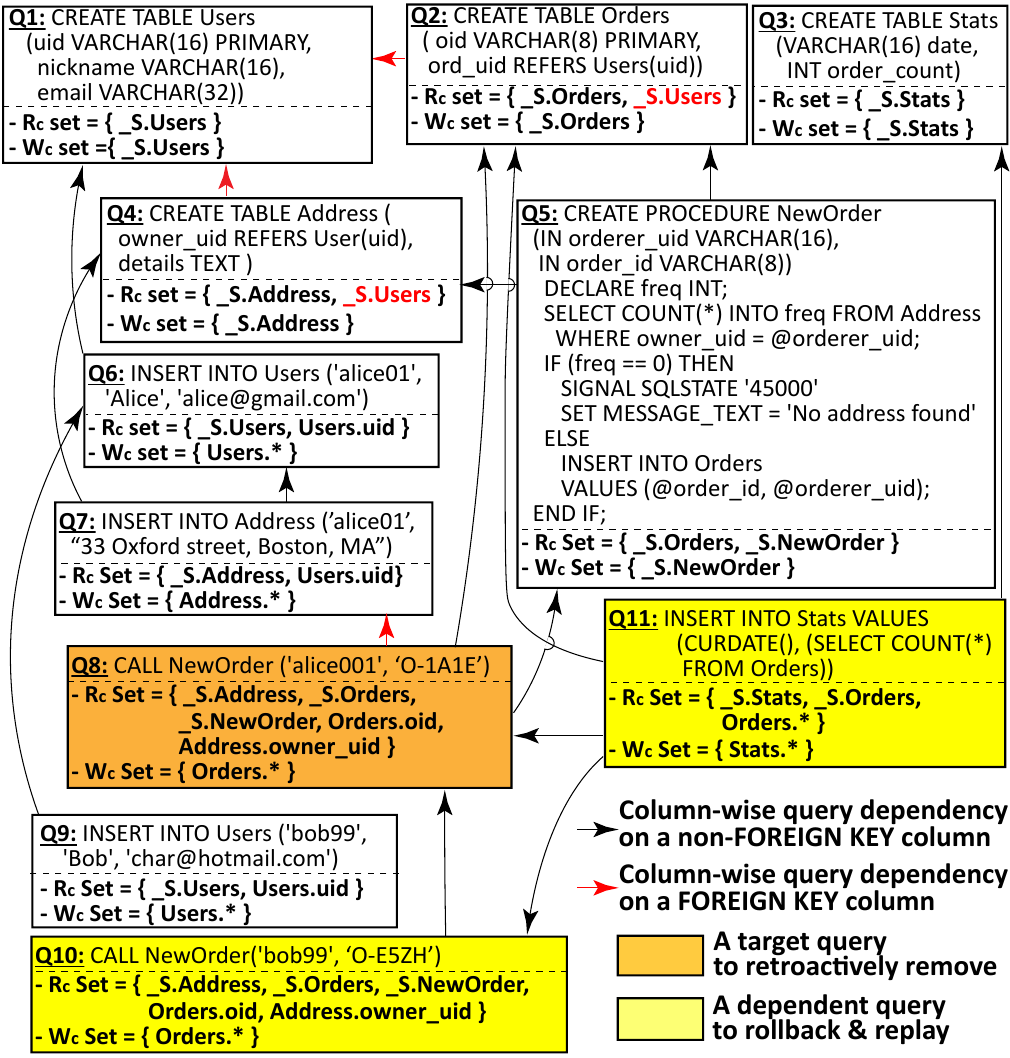}
  \caption{An E-commerce service's query dependency.}
  \label{fig:dependency-graph-website-columnwise}
\end{figure}

\subsection{Column-wise Dependency Analysis}
\label{subsec:columnwise-dependency-analysis}


\autoref{tab:set-policy} in \autoref{appendix:set-policy} presents all SQL statement types supported by \ultraverse for column-wise query dependency analysis. Each query is assigned a column-wise read set ($R_c$) and write set ($W_c$), based on which \ultraverse derives query dependencies. A query's $R_c$ set lists the columns of tables/views being read during its execution. The $W_c$ set lists the columns it writes. Besides the descriptions in \autoref{tab:set-policy}, we add the following remarks: \looseness=-1

\begin{itemize}

\item \textbf{Branch Conditions:} Control flow directions (e.g., \texttt{IF}, \texttt{CASE}) are challenging to  statically predict, since they often depend on the database's dynamically evolving state. \ultraverse handles this uncertainty by considering both directions of conditional branches found in \texttt{PROCEDURE}, \texttt{TRANSACTION}, or \texttt{CREATE TRIGGER} queries, and merging the $R_c/W_c$ sets from both true and false blocks. This approach overestimates the $R_c/W_c$ sets, potentially enlarging the dependency graph, but at this cost, we ensure the correctness of retroactive operations. This is because replaying more queries than minimally needed do not degrade the correctness. \looseness=-1

\item \textbf{Tracking Data Flows: } We track the data flows between SQL's \texttt{DECLARE} variables or (nested) queries in \texttt{PROCEDURE},  \texttt{TRANSACTIONs}, and \texttt{TRIGGER}. This is done by accumulating the $R$ set of all intermediate \texttt{SELECT} statements that mediate such data flows. Some examples are \texttt{"}\texttt{SELECT} \texttt{col\textsubscript{1}} \texttt{INTO} \texttt{var\textsubscript{a}} \texttt{FROM} \texttt{Table\textsubscript{1}} \texttt{WHERE..."}, or \texttt{"}\texttt{UPDATE} \texttt{Table\textsubscript{1}} \texttt{SET} \texttt{col\textsubscript{1}=} \texttt{(SELECT ...)}\texttt{ WHERE ..."}. \looseness=-1

\item \textbf{SQL Keywords Used in \texttt{PROCEDURE}:} Control flow or variable manipulation keywords used exclusively in \texttt{PROCEDUREs} such as \texttt{ITERATE}, \texttt{LEAVE LOOP}, \texttt{WHILE}, \texttt{REPEAT}, \texttt{DECLARE}, \texttt{SET variable}, \texttt{LABEL}, \texttt{CURSOR}, or \texttt{SIGNAL SQLSTATE} do not have any $R_c/W_c$ sets themselves. \looseness=-1

\item \textbf{A Query Accessing Multiple Databases: } The same $R/W$ set analysis policy in \autoref{tab:set-policy} applies to such a query, where each table name is prepended by its associated database name.  \looseness=-1

\item \textbf{Updatable \texttt{VIEWs}:} If a query performs \texttt{INSERT}, \texttt{UPDATE} or \texttt{DELETE} on an \texttt{VIEW}, the columns of the parent table/views that this \texttt{VIEW} references also get included in the query's $W_c$ set. \looseness=-1

\item \textbf{\texttt{AUTO\_INCREMENT}}: If a table's primary key column is set to \texttt{AUTO\_INCREMENT}, then the \texttt{INSERT} queries on this table include this primary key column in their $R_c$ set, as each insertion implicitly reads the latest primary key record. \looseness=-1

\item \textbf{\texttt{\_S.tablename:}} \texttt{\_S} is \ultraverse's virtual table for monitoring any changes in the database schema. Each column in \texttt{\_S} represents a schema object (i.e., table, view, procedure, or trigger). \texttt{\_S} is not an actual table in the database, but \ultraverse's internally reserved system variable. \ultraverse adds the \texttt{\_S.tablename} keyword in the $R/W$ sets of queries to remember their read/write access to the schema of the \texttt{"tablename"} table. \looseness=-1 

\end{itemize}

\begin{table}[t!]
\small
\setlength\tabcolsep{2.5pt}
\begin{tabular}{|l|}
\hline
\textbf{\underline{Notations}}\\
\begin{tabular}{ll}
$\bm{Q_n}$&: \textit{n-th} committed query\\
$\bm{\tau}$&: The retroactive target query's index\\
$\bm{R_c(Q_n)}$ | $\bm{W_c(Q_n)}$&: $Q_n$'s column-wise read/write set\\
$\bm{c}$&: a table's column\\
$\bm{Q_n} \rightarrow \bm{Q_m}$&: $\bm{Q_n}$ depends on $\bm{Q_m}$ \\
$\bm{A} \Longrightarrow \bm{B}$&: If $\bm{A}$ is true, then $\bm{B}$ is true\\
\end{tabular}\\
\textbf{\underline{Column-wise Query Dependency Rule}}\\
\setlength\tabcolsep{3.5pt}
\begin{tabular}{@{\hspace{-0.1cm}}ll@{\hspace{-0.0cm}}}
1.&$\exists c : ((c \in W_c(Q_m)) \wedge (c \in R_c(Q_n))) \wedge (m < n)$\\
&$  \Longrightarrow Q_n \rightarrow Q_m$\\
2.&$(Q_n \rightarrow Q_m) \wedge (Q_m \rightarrow Q_l) \Longrightarrow Q_n \rightarrow Q_l$\\
\end{tabular}\\
\hline
\end{tabular}
\caption{\ultraverse's column-wise query dependency rules.}
\label{tab:dependency-rule}
\end{table}

\ultraverse's query analyzer asynchronously derives the $R_c/W_c$ sets of all committed queries during regular service operations. The $R_c/W_c$ sets are used for efficient retroactive operations of adding, removing, or modifying past queries and updating the database accordingly. This process involves: \textit{(i)} undoing only those tables affected by the target query and its dependent queries; \textit{(ii)} modifying, adding, or removing the target query; \textit{(iii)} replaying only those queries dependent on the target query. \ultraverse uses a query dependency graph (\autoref{fig:dependency-graph-website-columnwise}) where nodes are queries and arrows are dependencies.\looseness=-1

In \ultraverse, one query is said to \textit{depend on} another if its execution could alter the latter's result. \autoref{tab:dependency-rule} defines \ultraverse's four column-wise query dependency rules.
Rule 1 states that if $Q_m$ writes to a column of a table/view and later $Q_n$ reads the same column, then $Q_n$ depends on $Q_m$. 
For instance, in \autoref{fig:dependency-graph-website-columnwise}, Q10$\rightarrow$Q8, because Q10 reads the \texttt{Orders.oid} column written to by Q8.
Note that our query dependency differs from the dependency in conflict graphs~\cite{trans}, which includes \textit{read-then-write}, \textit{write-then-read}, and \textit{write-then-write} cases. In contrast, our rule excludes the \textit{read-then-write} and \textit{write-then-write} cases, because a prior query's read/write operation on a data record does not affect the results of the later query that (over-)writes to the same record. 
Rule 2 states that if $Q_n$ depends on $Q_m$ and $Q_m$ depends $O_l$, then $Q_n$ also depends on $Q_l$ {(transitivity)}. For example, in \autoref{fig:dependency-graph-website-columnwise}, Q10$\rightarrow$Q7, because Q10$\rightarrow$Q8 (on \texttt{Orders.oid}) and Q8$\rightarrow$Q7 (on \texttt{Addresses.owner\_uid}). \looseness=-1 

Applying these read/write dependency rules to \texttt{\_S} (i.e., \ultraverse's virtual table for monitoring changes in the DB schema), we can derive the dependency rules between DDL queries and DML queries as follows: the DML queries that read a schema (e.g., \texttt{TABLE}/\texttt{VIEW} access, \texttt{PROCEDURE} call, or \texttt{TRIGGER}-linked query) depend on the DDL query committed earlier in time that wrote/modified this schema (i.e., \texttt{CREATE}, \texttt{DROP}, \texttt{ALTER}, or \texttt{RENAME}). Therefore, if a DDL query that modifies a schema is retroactively replayed, all dependent DDL/DML queries that post-access this schema are also to be replayed. In the example of \autoref{fig:dependency-graph-website-columnwise}, Q8 and Q10 that \texttt{CALL} the \texttt{NewOrder} procedure depends on Q5 that \texttt{CREATE}s it, thus if Q5 is the retroactive target query, then Q8 and Q10 should be also rolled back and replayed. \looseness=-1


In \autoref{fig:dependency-graph-website-columnwise}'s query dependency graph, we intentionally omit all queries whose $W_c$ set is empty (e.g., standalone \texttt{SELECT} queries), because such read-only queries do not affect the database's state during the \textit{what-if} replay. However, if a \texttt{SELECT} query is used inside \texttt{PROCEDURE}, \texttt{TRANSACTION}, \texttt{TRIGGER}, or inside another query as a nested query, then we merge the $R_c$ set of such \texttt{SELECT} queries into its wrapping query, because in such a case the \texttt{SELECT} query serves as an intermediate data flow between queries, tables, or SQL variables. As an additional remark, the red arrows in \autoref{fig:dependency-graph-website-columnwise} represent column-wise dependencies induced by \texttt{FOREIGN KEY} relationships-- if a column's value is retroactively changed, the \texttt{FOREIGN} \texttt{KEY} columns in other tables that reference this column may also be affected. In that sense,
adding the red arrows ensure rollback and replay of queries accessing such potentially affected foreign key columns. \looseness=-1

\autoref{appendix:columnwise-dependency-analysis}\footref{footnote:appendix} provides a proof of the column-wise dependency analysis.  \looseness=-1

\subsection{Row-wise Query Dependency Analysis}
\label{subsec:rowwise-analysis}
This section elaborates on \ultraverse's row-wise (horizontal) dependency analysis, an extension of the column-wise (vertical) dependency analysis described in \autoref{subsec:columnwise-dependency-analysis}. Essentially, if two queries access different table rows, their operations do not affect each other, and therefore their operations are mutually independent. \looseness=-1

\begin{table}[h!]
\setlength\tabcolsep{1pt}
\footnotesize
\begin{tabular}{|l|}
\hline
\texttt{Q6.INSERT INTO Users VALUES (`alice01',`Alice',`al@gmail.com')}\\
\texttt{\textcolor{white}{....}--> R{\tiny{row}}\footnotesize=\{\}, W{\tiny{row}}\footnotesize=\{ <Users.uid : `alice01'> \}}\\
\texttt{...}\\
\texttt{Q9.INSERT INTO Users VALUES (`bob99',`Bob',`bob@yahoo.com')}\\
\texttt{\textcolor{white}{....}--> R{\tiny{row}}\footnotesize=\{\}, W{\tiny{row}}\footnotesize=\{ <Users.uid : `bob99'> \}}\\
\texttt{...}\\
\texttt{Q12.UPDATE Users SET email=`alice@aol.com' WHERE uid=`alice01'}\\
\texttt{\textcolor{white}{....}--> R\tiny{row}\footnotesize = W{\tiny{row}}\footnotesize = \{ <Users.uid : `alice01'> \}}\\
\texttt{Q13.UPDATE Users SET email=`bob@hotmail.com' WHERE uid=`bob99'}\\
\texttt{\textcolor{white}{....}--> R\tiny{row}\footnotesize = W{\tiny{row}}\footnotesize = \{ <Users.uid : `bob99'> \}}\\
\hline
\end{tabular}
\caption{An extended scenario of \autoref{fig:dependency-graph-website-columnwise}.}. 
\label{tab:rowwise-queries}
\end{table}

\autoref{tab:rowwise-queries} provides an illustrative example, expanding \autoref{fig:dependency-graph-website-columnwise} by including Q12 and Q13. Q12 has a row-wise dependency on Q6, reading the same \texttt{Users} table's row that Q6 previously wrote (the row whose \texttt{uid} value is \texttt{`alice01'}). Similarly, Q13 depends row-wise on Q9, demonstrating a \textit{write-then-read} dependency on the row with \texttt{uid} value \texttt{`bob99'} in the \texttt{Users} table. However, Q13 is row-wise independent from Q6, because they access different rows in the \texttt{Users} table. Similarly, Q12 is row-wise independent from Q9, because they access mutually disjoint rows. Consequently, if Q6 is retroactively removed, (row-wise dependent) Q12 must be rolled back and replayed, but not (row-wise independent) Q9 or Q13. \looseness=-1

The crux of row-wise query dependency analysis is to pinpoint the table rows a query reads/writes based on the information in the query statement. In DML queries, \ultraverse chooses a specific column in each table as \textit{row identifier} column (or RI column), whose values represent the row IDs. In \autoref{tab:rowwise-queries}, \texttt{Users.uid} can serve as the \texttt{Users} table's RI column, because \texttt{Users.uid} is a primary key (i.e., each \texttt{uid} value is unique for each row). \looseness=-1

In column-wise dependency analysis (\autoref{subsec:columnwise-dependency-analysis}), $R_c$/$W_c$ elements represent the table columns a query reads/writes. In row-wise analysis, $R_r$/$W_r$ elements are denoted as \texttt{<RI\_columnName : value>}, indicating the RI column name and its value accessed by the query. In a \texttt{TRANSACTION}/\texttt{PROCEDURE}, the RI value of each executed query is either a constant or a symbolic expression found during DSE's execution path exploration (\autoref{subsec:transpilation}).
\autoref{tab:rowwise-queries} describes the $R_r$/$W_r$ sets of Q6, Q9, Q12, and Q13, which are DML queries (e.g., \texttt{INSERT, UPDATE, DELETE}). For DML queries like \texttt{UPDATE, DELETE} and \texttt{SELECT}, if their \texttt{WHERE} clause does not condition on any RI column's value of the table they access (say they access \texttt{table1}), then their $R_r$ is \texttt{{table1.colA : *}}, where \texttt{colA} is \texttt{table1}'s RI column and \texttt{*} is a wildcard. This wildcard implies that the query can potentially read any rows of \texttt{table1}. For DDL queries like \texttt{CREATE, ALTER} or \texttt{DROP}, their $R_r$ and $W_r$ sets include \texttt{".\_S.tablename = *"}, where the wildcard indicates the event that the schema of \texttt{
tablename"} has been changed. All DML queries implicitly include \texttt{"\_S.tablename = *"} in their $R_r$ set for the tables they read/write, so that if the table's schema ever retroactively changes, then all subsequent DML \& DDL queries operating on this table are regarded as dependent due to their \textit{write-then-read} relationship on \texttt{"\_S.tablename = *"} and thus they consequently roll back \& replay.  \looseness=-1

\ultraverse's row-wise dependency rule is formally detailed in \autoref{tab:dependency-rule2}, echoing \autoref{tab:dependency-rule}'s column-wise rule but based on $R_r$/$W_w$ sets instead of $R_c$/$W_c$ sets. Two queries are row-wise dependent if the earlier query's $W_r$ set and the later query's $R_r$ set has at least one common element whose RI column name and value both match. \looseness=-1

\para{Selection of an RI Column:} To execute row-wise analysis, \ultraverse must define at least one RI column per table in the target database. \ultraverse scans the query log and automatically selects suitable RI columns to maximize parallel query execution during retroactive replay (refer to \autoref{subsec:rollback-replay}). This approach aims to reduce the average runtime of retroactive operations. \looseness=-1

\begin{table}[t!]
\small
\setlength\tabcolsep{2.5pt}
\begin{tabular}{|l|}
\hline
\textbf{\underline{Notations}}\\
\begin{tabular}{ll}
$\bm{R_r(Q_n)}$ | $\bm{W_r(Q_n)}$&: $Q_n$'s row-wise read/write sets\\
$\bm{\langle c: v \rangle}$&: The element of $R_r$ or $W_r$ ($c$: column, $v$: value)\\
& \textcolor{white}{,} , indicating the table row(s) a query accesses\\
\end{tabular}\\
\textbf{\underline{Row-wise Query Dependency Rule}}\\
\setlength\tabcolsep{3.5pt}
\begin{tabular}{@{\hspace{-0.1cm}}ll@{\hspace{-0.0cm}}}
1.&$\exists c : ((\langle c: v \rangle \in W_r(Q_m)) \wedge (\langle c: v \rangle \in R_r(Q_n))) \wedge (m < n)$\\
&$  \Longrightarrow Q_n \rightarrow Q_m$\\
\end{tabular}\\
\begin{tabular}{l@{\hspace{-0.0cm}}}
* Rule 2 is the same as that in \autoref{tab:dependency-rule}.\\
\end{tabular}\\
\hline
\end{tabular}
\caption{\ultraverse's row-wise query dependency rules.}
\label{tab:dependency-rule2}
\end{table}

\para{Merging RI values:} Instances may arise when an \texttt{INSERT} query sets a new row's RI column \texttt{columnA}'s value as \texttt{value1}, and later an \texttt{"UPDATE SET columnA=..."} query alters this RI value to \texttt{value2}. In such a case, \texttt{<columnA : value1>} and \texttt{<columnA : value2>} potentially refer to the same physical row of the table. To handle such an ambiguity, \ultraverse treats \texttt{<columnA : value1>} and \texttt{<columnb : value2>} as identical, as \textit{merged} RI values. This conservative strategy increases the granularity of row-wise dependency separation, but ensures the accuracy of retroactive rollback and replay. \looseness=-1

\para{Alias RI Column}: \ultraverse can optionally select another column within the same table as the RI column's alias, enhancing row-wise dependency analysis's flexibility and performance. For example, in \autoref{tab:rowwise-queries}, consider we have a new query Q14: \texttt{"DELETE Users WHERE nickname=`Bob'"}. This query's \texttt{WHERE} clause does not specify the RI column value (i.e., \texttt{Users.uid}), but we can instead use \texttt{Users.nickname} as an \textit{alias} to \texttt{Users.uid}. Establishing \texttt{users.nickname} as an alias RI column allows \ultraverse to create a mapping from \texttt{<Users.nickname : 'Bob'>} to \texttt{<Users.uid : 'bob99'>}. \ultraverse's query analyzer learns this specific mapping when committing Q9 where Bob's data gets initially inserted into the \texttt{Users} table. Using this mapping, \ultraverse concludes that Q14's \texttt{R{\scriptsize{r}} = W{\scriptsize{r}} = {<Users.uid : `bob99'>}}. Note that if two RI values merge, then their associated alias mappings also merge.  \looseness=-1

\para{Operators in RI Value Expressions}: 
\ultraverse handles logical operators on RI column values and interprets them accordingly. For instance, in a \texttt{WHERE} clause, if an \texttt{AND} operator combines multiple column values, these columns formulate a multi-dimensional RI value, like \texttt{<tableName.(col1, col2) : (col1's value, col2's value)>}, similar to an SQL table's primary key comprising multiple columns. Queries access only those rows that meet both operands in the \texttt{AND} expression of the \texttt{WHERE} clause. Row-wise query dependency analysis captures this by computing the $R_w/W_r$ set intersection: two multi-dimensional RI values match only if their every degree's sub-value (e.g., RI col1's values, RI col2's values, ...) matches each other. Conversely, when a \texttt{WHERE} clause combines multiple column values with \texttt{OR} operators, the final $R_w/W_r$ set is a union of them, because such queries can potentially access any rows satisfying any operand in the \texttt{OR} expression. \ultraverse also handles arithmetic operators on RI column values: in such a case, the query's $R_r/W_r$ sets are regarded as symbolic expressions, and such a query's $R_r/W_r$ sets get concretized at the moment of retroactive operation, by resolving the symbolic values according to the retroactive scenario. 
\looseness=-1

\begin{wrapfigure}{r}{0.20\textwidth}
  \setlength{\columnsep}{0pt}
  \begin{center}
  \vspace{-10pt}
  \includegraphics[width=0.20\textwidth]{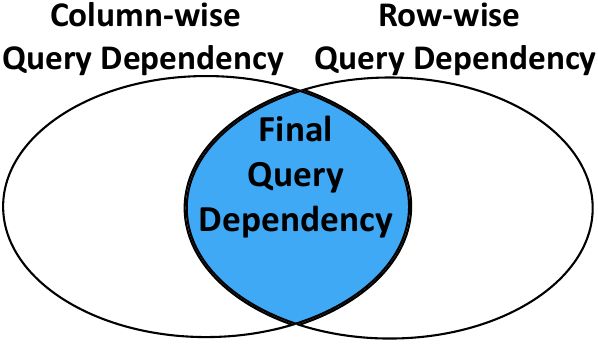}
  \vspace{-5pt}
  \end{center}
\end{wrapfigure}

Finally, \ultraverse regards two queries to have a dependency if and only if they are dependent both column-wise and row-wise (i.e., cell-wise), as depicted in the Venn diagram. 
Consequently, \autoref{subsec:columnwise-dependency-analysis}'s column-wise query dependency graph can be further reduced by applying the row-wise analysis. We show the full row-wise dependency graph for \autoref{fig:dependency-graph-website-columnwise} in \autoref{appendix:rowwise-dependency-graph}\footref{footnote:appendix}.  \looseness=-1

\autoref{appendix:rowwise-dependency-analysis}\footref{footnote:appendix} provides a proof of the row-wise dependency analysis.
\looseness=-1


\subsection{Efficient Rollback and Replay}
\label{subsec:rollback-replay}
Given a query dependency graph, \ultraverse rollbacks and replays only the queries dependent on the target query as follows:\looseness=-1

\begin{enumerate}
    \item \textbf{Rollback Phase}: This step rolls back each table that appears in any read or write set of the query dependency graph drawn for the target query, and then copies these tables into a temporary database. \looseness=-1
    \item \textbf{Replay Phase}: At this stage, the target query gets either added, removed, or modified as per the user's request. All the queries dependent on the target query are then replayed (within the temporary database) in parallel as much as possible, while ensuring the correctness of the final database state (i.e., ensuring strongly serialized commits). \looseness=-1
    \item \textbf{Database Update}: The original database gets locked, and the changes from the temporary database are reflected on the \textit{mutated} tables (to be explained) in the original database. Upon completion, the original database gets unlocked, and the temporary database gets deleted. \looseness=-1
\end{enumerate}

During the above process, every table in the original database is categorized as either a mutated, consulted, or irrelevant table. A table is considered \textit{mutated} if the table is included in the write set of at least one query dependent on the target query. \textit{Consulted} tables appear in the read set of dependent queries but not in the write set. Irrelevant tables are neither \textit{mutated} nor \textit{consulted} ones. \looseness=-1

During the rollback phase, \ultraverse uses either system versioning of temporal databases~\cite{system-versioning} or check-pointed backup databases to rollback \textit{mutated} and \textit{consulted} tables, as well as any logical \texttt{INDEX}es to their state at the first-accessed commit time after the retroactive operation's target time. \textit{Consulted} tables should be also rolled back because their former states are needed for access while replaying the dependent queries that update \textit{mutated} table(s).
Affected by this, other non-dependent queries that have those \textit{consulted} tables in their write set will also be replayed during the replay phase. During replay, the intermediate values of a \textit{consulted} table will be read by replayed queries; at the end of replay, \textit{consulted} tables will have the same state as before the rollback. \looseness=-1

In step 2's replay phase, the past commit order of dependent queries should be preserved, because otherwise, they could lead to inconsistency of the final database state-- leading to a different universe than the desired state. To ensure this  \textit{strict serializability} while achieving a fast replay speed, \ultraverse uses query dependency information to simultaneously execute multiple queries in parallel whose $R/W$ sets do not overlap. Such a parallel query execution is safe because if two queries access different table cells, as they do not cause a race condition to each other. This improves the replay speed while guaranteeing the same final database state as strongly serialized commits. A replay arrow from $Q_n \rightarrow Q_m$ is created if $n < m$ and the two queries have a conflicting operation~\cite{conflict} (a \textit{read-write}, \textit{write-read}, or \textit{write-write}) on the table/view's same column \textit{and} same RI value. \looseness=-1

Throughout the retroactive operation, \ultraverse continues to serve regular SQL operations to its clients. Retroactive operations are conducted on a temporary database, so normal service remains uninterrupted. After the rollback and replay phases, \ultraverse locks the original database, updates \textit{mutated} table tuples from the temporary database, and then unlocks it. Any new regular queries committed during the rollback and replay phases are accounted for in a new subsequent set of rollback and replay phases before unlocking the database. \looseness=-1

\para{Replaying Non-determinism:} During regular operations, \ultraverse records the concrete return value of each non-deterministic SQL function call (e.g., \texttt{CURTIME()} or \texttt{RAND()}). During the replay phase, \ultraverse enforces each query's non-deterministic function call to return the same value as before. In cases where a retroactively newly added query calls a timing function like \texttt{CURTIME()} that was not called during the regular operation, \ultraverse estimates its return value based on the (past) commit timestamps of its neighboring queries. Similarly, when replaying an \texttt{INSERT} query that had a primary key value determined by \texttt{AUTO\_INCREMENT}, the replay uses the same primary key value as in the past. If a new \texttt{INSERT} query is retroactively added, \ultraverse assigns a new primary key value that was never used in the past. \looseness=-1



\subsection{Hash-Jumper}
\label{subsec:hashjump}
During a retroactive operation, if the operation is expected not to change the final state of the database, the operation can be terminated in advance. This avoids unnecessary computational effort in situations where the retroactive operation doesn't change the final result. \looseness=-1

The \autoref{fig:hash-jumper} scenario illustrates the value of the Hash-jumper technique. Here, the \texttt{Membership} table (Q14) is created to record each user's membership level by using the \texttt{UpdateMembership} procedure (Q15). Alice's initial membership is initialized as `\texttt{gold}' (Q16), Bob's as `\texttt{gold}', and many more user memberships are initialized or updated in the subsequent operations.
Later at some point, suppoes that Alice's membership gets changed to `\texttt{diamond}' (Q99) due to her active purchases. Now, suppose we run a \textit{what-if} analysis to explore the effects of retroactively removing Alice's membership initialization (Q16).
During this retroactive operation, upon rolling back and replaying Q99, notice that the \texttt{Membership} table's state becomes the same as the query commit time of Q99 before the retroactive operation, because Alice's membership level gets overwritten to `\texttt{diamand}' upon executing Q99, regardless of what its value was before. Also, notice that all subsequent queries after Q99 are identical to the ones before the retroactive operation. Consequently, upon retroactively replaying Q99, we come to the realization that the final state of the \texttt{Membership} table after the whole retroactive operation will become the same as before the retroactive operation, and thus we can early-terminate the effectless replay operations (skipping the remaining replay of Q100$\sim$Q1000). \looseness=-1

\begin{figure}[t!]
  \includegraphics[width=.5\textwidth]{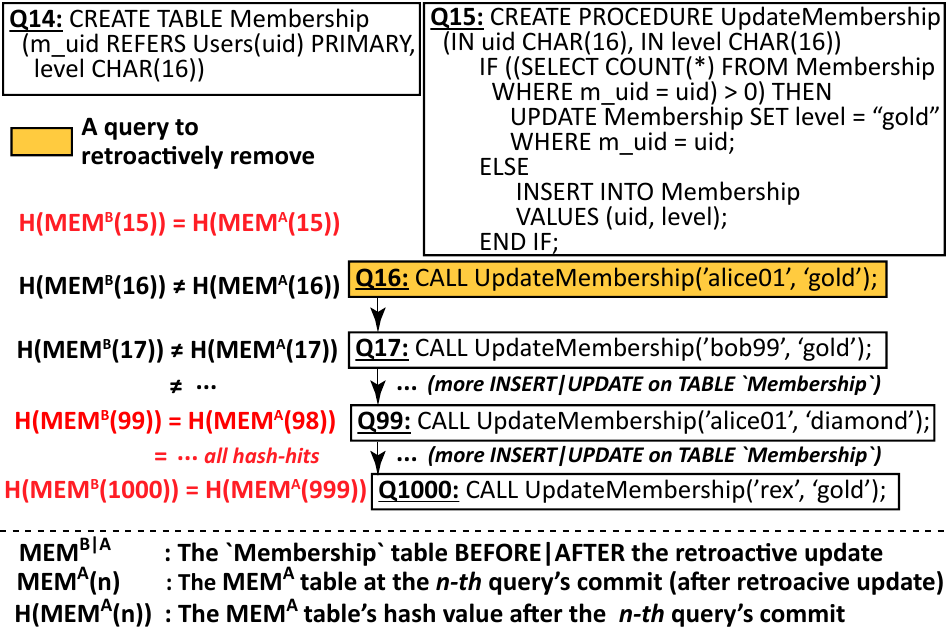}
  \caption{An example of Hash-jump.} 
  \label{fig:hash-jumper}
\end{figure}

\ultraverse introduces a mechanism called Hash-jumper to address such cases. The primary objective of Hash-jumper is to identify situations where replaying the remaining queries in the sequence will have no effect on the final state of the database. Upon recognizing such a situation, Hash-jumper terminates the retroactive operation.
In order to achieve this, \ultraverse logs the hash value of the modified tables at each query's commit during regular operations. During a retroactive operation's replay phase, Hash-jumper asynchronously runs in the background, comparing the hash value of the replayed query's output to the previously logged hash value. If the hash values match for all \textit{mutated} tables and there are no more queries to be retroactively added or removed after the current timestamp, a ``hash-hit" event is declared. This implies that the rest of the replay will simulate the same database evolution as before (taking all the same intermediate \texttt{IF} branch directions as before if there exist), and the final database state will be the same as before the retroactive operation. Thus, \ultraverse early-terminates the replay process and retains the original tables. \looseness=-1

The efficiency of the hash function is crucial in this scenario. To support large databases, an ideal hash function's computation time should not be influenced by the size of the table. To meet this requirement, \ultraverse designs an efficient table hash algorithm. The initial hash value of an empty table is set to 0. When a query is executed and the rows to be added or deleted are recorded, the query analyzer computes the hash value for each of these rows using a collision-resistant hash function (e.g., SHA-256). This hash value is then either added (for insertion) or subtracted (for deletion) from the target table's current hash value modulo p (i.e., the size of the collision-resistant hash function's output range). The time to compute the table hash for each query is constant with respect to the target table's size and is linear with the number of rows to be inserted or deleted. \looseness=-1

Given that the collision-resistant hash function's output is uniformly distributed, the collision rate of table hashes, irrespective of the number of rows in the tables, is bounded by $\frac{1}{p}$. In the case of SHA-256, this value is $2^{-256}$ ($< 10^{-77}$), which is negligibly smaller than the uncorrectable bit error rate (UBER) of today's flash or disk drives ($10^{-16} \sim 10^{-13}$). In that sense, Hash-jumper provides a robust and efficient mechanism for early termination of effectless retroactive operations.
See \autoref{appendix:hashing}\footref{footnote:appendix} 
for the proof and discussion on Hash-jumper's false positives \& negatives.
Nevertheless, \ultraverse offers the option of literal table comparison upon detecting hash-hits.
\looseness=-1

\section{Evaluation}
\label{sec:evaluation}

\begin{table}[t]
\setlength{\tabcolsep}{2.5pt}
\footnotesize
\renewcommand\arraystretch{1.2}
\centering
\begin{subtable}{0.5\textwidth}
\centering
\resizebox{\linewidth}{!}{
\begin{tabular}{|l||ccc|ccc|ccc|ccc|}
\toprule
\multirow{2}{*}{
{\textbf{Bench.}}}
  & \multicolumn{3}{c}{\textit{\textbf {250 Queries}}} &
\multicolumn{3}{c}{\textit{ \textbf {500 Queries}}} & \multicolumn{3}{c}{\textit{ \textbf {1K Queries}}} & \multicolumn{3}{c|}{\textit{ \textbf {2K Queries}}} \\ 
\cmidrule{2-4} \cmidrule{5-7} \cmidrule{8-10} \cmidrule{11-13}

 &\textsf{T+D}&\textsf{B}&\textsf{M}&\textsf{T+D}&\textsf{B}&\textsf{M}&\textsf{T+D}&\textsf{B}&\textsf{M}&\textsf{T+D}&\textsf{B}&\textsf{M} \\
\hline
Epinions &0.7s&0.7s&82.3s&0.7s&0.8s&710.2s&0.7s&1.3s&2.3H&0.7s&2.8s&>20.8H\\
TATP &0.6s&0.4s&34.5s&0.6s&0.8s&254.2s&0.6s&2.1s&2358s&0.8s&3.2s&8.3H\\
SEATS &1.4s&1.1s&$\times$&1.7s&1.9s&$\times$&1.9s&3.6s&$\times$&2.9s&8.4s&$\times$\\
TPC-C &1.3s&0.5s&131.2s&1.5s&0.8s&167.2s&1.6s&2.2s&1387s&2.4s&3.4s&4.3H\\
AStore &0.7s&0.6s&37.7s&0.7s&1.2s&291.2s&0.7s&2.5s&2824s&0.7s&6.3s&10.1H\\
\bottomrule
\end{tabular}
}
\caption{\textit{what-if} analysis time (``$\times$" means N/A).}\label{tab:comparison-speed}
\end{subtable}
    \hfil
\begin{subtable}{0.5\textwidth}
\centering
\resizebox{\linewidth}{!}{
\begin{tabular}{|ccc|ccc|ccc|ccc|}
\toprule

  \multicolumn{3}{|c}{\textit{\textbf {250 Queries}}} &
\multicolumn{3}{c}{\textit{ \textbf {500 Queries}}} & \multicolumn{3}{c}{\textit{ \textbf {1K Queries}}}  & \multicolumn{3}{c|}{\textit{ \textbf {2K Queries}}}  \\ 
\cmidrule{1-3} \cmidrule{4-6} \cmidrule{7-9} \cmidrule{10-12}

 \textsf{T+D}&\textsf{B}&\textsf{M}&\textsf{T+D}&\textsf{B}&\textsf{M}&\textsf{T+D}&\textsf{B}&\textsf{M}&\textsf{T+D}&\textsf{B}&\textsf{M} \\
\hline
43MB&60MB&3.5GB&43MB&62MB&10.9GB&48MB&62MB&40.8GB&54MB&64MB&>126GB\\
43MB&204MB&2.8GB&43MB&197MB&8.1GB&48MB&201MB&29.2GB&54MB&201MB&111GB\\
43MB&191MB&$\times$&56MB&197MB&$\times$&71MB&211MB&$\times$&92MB&244MB&$\times$\\
43MB&60MB&1.9GB&54MB&62MB&6.0GB&64MB&64MB&23.4GB&88MB&64MB&87.8GB\\
43MB&60MB&2.8GB&46MB&64MB&7.9GB&49MB&64MB&25.6GB&61MB&65MB&90.2GB\\
\bottomrule
\end{tabular}
}
\caption{RAM overhead for \textit{what-if} analysis.} 
\label{tab:comparison-overhead}
\end{subtable}

\caption{Speed of \ultraverse (\textsf{T+D}), Baseline (\textsf{B}) and Mahif~\cite{mahif} (\textsf{M})  across various transaction history sizes (250$\sim$2K queries).}
\label{tab:comparison2}
\end{table}

\begin{table}[t]
\setlength{\tabcolsep}{2.5pt}
{
\renewcommand\arraystretch{1.2}
\centering
\resizebox{0.5\textwidth}{!}{
\begin{tabular}{|l||cccc|cccc|cccc|}
\toprule
\multirow{2}{*}{
{{\textbf{Bench.}}}}
  & \multicolumn{4}{c}{\textit{\textbf {{\textsf{DB scale} = 1x}}}} &
\multicolumn{4}{c}{\textit{ \textbf {{\textsf{DB scale} = 10x}}}} & \multicolumn{4}{c|}{\textit{ \textbf {{\textsf{DB scale} = 100x}}}} \\ 
\cmidrule{2-5} \cmidrule{6-9} \cmidrule{10-13} 
 &\textbf{\textsf{DB\textsubscript{size}}}&\textsf{T+D}&\textsf{B}&\textsf{M}&\textbf{\textsf{DB\textsubscript{size}}}&\textsf{T+D}&\textsf{B}&\textsf{M}&\textbf{\textsf{DB\textsubscript{size}}}&\textsf{T+D}&\textsf{B}&\textsf{M}\\
\hline
{Epinions} &2.6MB&0.7s&0.4s&63.5s&20.1MB&0.8s&0.4s&63.4s&383MB&0.9s&0.7s&63.5s\\
{TATP} &38MB&0.6s&0.6s&60.1s&451MB&0.6s&0.6s&60.2s&7.8GB&0.7s&0.6s&60.4s\\
{SEATS} &50MB&1.4s&0.8s&$\times$&1.9GB&1.6s&1.0s&$\times$&4.7GB&1.7s&1.1s&$\times$\\
{TPC-C} &110MB&0.8s&0.8s&134.8s&1.1GB&0.9s&0.8s&135.7s&10.2GB&1.0s&0.8s&137.3s\\
{AStore} &60.5MB&0.7s&0.7s&36.6s&608MB&0.8s&0.8s&36.8s&5.99GB&0.9s&0.8s&37.6s\\
\bottomrule
\end{tabular}
}
}
\caption{\textit{what-if} analysis time across various DB sizes.}
\label{tab:comparison-db-size}
\end{table}

\para{Implementation:} \ultraverse's query dependency analysis (\autoref{sec:design}) is applicable to unmodified commodity DBMSes as far as they use SQL to manage databases and records committed queries (e.g., MySQL, Postgres, MariaDB). Further, \ultraverse's SQL transpiler (\autoref{sec:transpiler}) is compatible with general-purpose application programming languages that can be executed by dynamic symbolic execution (e.g., JavaScript, Python, C). Among these vast options, we chose MySQL and JavaScript as the \ultraverse prototype's host DBMS and host application language, as they are widely deployed.
We implemented the retroactive DBMS plugin (column-wise \& row-wise dependency analysis, replay scheduler, and Hash-jumper) in C++. The query analyzer reads MySQL's binary log~\cite{mysql-binary-log} to retrieve committed queries and computes each query's $R/W$ sets and table hashes. The replay scheduler uses a lockless queue~\cite{lockless-queue} and atomic compare-and-swap instructions ~\cite{compare-and-swap} to reduce contention among threads simultaneously dequeuing the queries to replay. For database rollback, there are 3 options: \textit{(i)} sequentially apply an inverse operation to every committed query; \textit{(ii)} use a temporal database to stage all historical table states; \textit{(iii)} assume periodic snapshots of backup DBs (e.g., every 3 days, 1 week). Our evaluation chose option 3 as it is the easiest approach in practice: creating system backups is a common practice and we can load a particular backup DB of our interest without incurring the rollback delay.
For the SQL transpiler, we implemented the JavaScript instrumentation software based on Jalangi2~\cite{jalangi2}, and used ExpoSE~\cite{Loring:2017} for path constraint resolution. We implemented the DSE analysis software of the server-side \& client-side JavaScript code based on the Chromium browser, Puppeteer API~\cite{puppeteer}, and MITMProxy~\cite{mitmproxy}. We used NodeJS and Python to implement the Z3-to-Python transpiler. \looseness=-1

\para{Test Setup:} We evaluated \ultraverse on Digital Ocean's VM equipped with 8 vCPUs, 16GB RAM, and 640GB SSD. Our evaluation compared three systems: (1) a baseline NodeJS application; (2) a transpiled NodeJS application with application-level transactions converted into compact SQL \texttt{PROCEDURE}s; (3) a non-transpiled NodeJS application with query dependency analysis using the retroactive DBMS plugin; and (4) a transpiled NodeJS application with query dependency analysis using the retroactive DBMS plugin. We denote these four systems as \textsf{B}, \textsf{T}, \textsf{D}, and \textsf{T+D}. \looseness=-1

Test cases were sourced from four BenchBase~\cite{oltp-bench} micro-benchmarks: TPC-C, TATP, Epinions, and SEATS. We converted the Java source code of these benchmarks JavaScript (NodeJS) for testing. The macro-benchmark used was AStore~\cite{astore}, an open-source e-commerce web application using the ExpressJS framework. AStore comprises 61 application-level transactions, of which 20 are database-updating transactions. We first compare the best one (\textsf{T+D}) with the baseline (\textsf{B}) and Mahif (\textsf{M})~\cite{mahif}(\autoref{subsec:comparison}). Then, we assess the performance (\autoref{subsec:performance}), overhead (\autoref{subsec:overhead}), and scalability (\autoref{subsec:scalability}) of the 4 systems (\textsf{B}, \textsf{T}, \textsf{D}, and \textsf{T+D}). Due to space limit, we provide \ultraverse's analysis details of all benchmarks (e.g., query dependency graphs) in \autoref{appendix:benchmarks}\footref{footnote:appendix}.
\looseness=-1

\subsection{Overall Comparison with Mahif}
\label{subsec:comparison}
We first compared \ultraverse's (\textsf{T+D}) effectiveness against the baseline (\textsf{B}) and Mahif (\textsf{M})~\cite{mahif}, a state-of-the-art \textit{what-if} DBMS. In particular, we used the open-source Mahif prototype~\cite{mahif-prototype}. Because Mahif demands a considerable amount of memory, we conducted the comparison on a bigger virtual machine with 12 CPU cores and 128GB RAM for Mahif. As Mahif is not scalable over the transaction history size and can only support a limited number of historical queries, we chose the history size of 250, 500, 1000, and 2000 queries, with the query dependency ratio of 50\%. Due to Mahif's inability to support string attributes, Mahif could not run the SEATS benchmark, since all the \texttt{UPDATE/DELETE/INSERT} queries in SEATS involved string attributes (marked as $\times$ in the tables). 
\looseness=-1

\para{\textit{What-if} Answering Speed: } \autoref{tab:comparison-speed} compares the \textit{what-if} analysis speed of \ultraverse and Mahif. In overall, \ultraverse (0.6s$\sim$2.9s) was 6450x faster than Mahif (34.5s$\sim$20.8H). The reason is because Mahif's runtime overhead increased exponentially as the number of historical queries increased, due to its growth of symbolic expression costs. On the other hand, \ultraverse only needed to parse and analyze the dependency of each query and replay only the dependent queries. Comparing \ultraverse with the baseline, \ultraverse's speed benefit by SQL transpilation and query dependency analysis became greater with the increasing number of queries. Aside the transaction history size, the database size did not significantly affect the \textit{what-if} analysis time of \ultraverse and Mahif (see \autoref{tab:comparison-db-size}), because their number of queries to be replayed was unaffected by the database size. \looseness=-1

\para{Overhead: } \autoref{tab:comparison-overhead} compares the memory usage of \ultraverse and Mahif. In overall, \ultraverse (43$\sim$92MB) was 1370x more memory-efficient than Mahif (requiring 1.9G$\sim$126GB). The memory required by \ultraverse is primarily used for temporarily storing each query statement, replaying each dependent query, and maintaining the dependent $R/W$ sets, while Mahif required a steeply growing amount of memory with the number of queries. \looseness=-1

\para{Correctness: } Unlike \ultraverse, Mahif failed to preserve the accuracy of application-level semantics during its \textit{what-if} simulation for TPC-C, SEATS, and TPCC transactions, because Mahif's analysis only handles individual SQL queries and does not capture the application-level data flows as a whole. Moreover, Mahif's \textit{what-if} analysis did not support the use of SQL \texttt{TRANSACTION}/\texttt{PROCEDURE} that integrate multiple queries. Mahif also lacked support for certain SQL features and query types, such as using string, boolean, or date/time type as column attributes, native SQL APIs (e.g., \texttt{CURRENT\_TIMESTAMP()} or \texttt{RANDOM()}), or DML (\texttt{CREATE, DROP, ALTER}) queries. Due to these limitations, Mahif didn't preserve the correctness of \textit{what-if} analysis in application-level or SQL-level transactions. \looseness=-1

\subsection{Detailed Performance}
\label{subsec:performance}

\para{\textit{What-if} Simulation Speed: } 
Next, we evaluated the performance of the \textsf{B}, \textsf{T}, \textsf{D}, and \textsf{T+D} versions. Note that unlike Mahif, all these 4 systems preserve correctness of the application-level \textit{what-if} simulation. We used a backup database as the starting point and a transaction history of 1 million queries. Each \textit{what-if} simulation test selected 1\% queries in the transaction history for retroactive operation. The transaction commit history was then replayed on the backup database based on these retroactive modifications.
\looseness=-1

\begin{table}
\renewcommand\arraystretch{1.2}
   \footnotesize
\centering
\begin{subfigure}[t]{0.5\textwidth}
\centering
{\includegraphics[width=\linewidth]{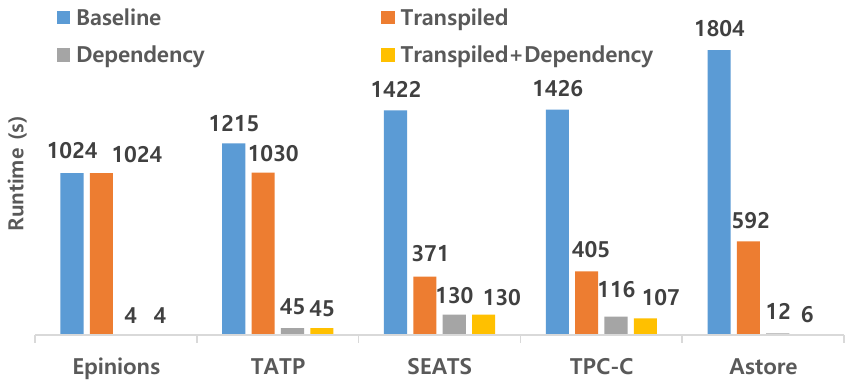}}
\caption{\textit{What-if} analysis runtime for 1 million queries.}
\label{fig:retroactive-operation}
\end{subfigure}

\begin{subtable}[t]{0.5\textwidth}
  \centering  
  \begin{tabular}{|c||c|c|c|c|c|c|}  
    \hline
    & Epinions & TATP & SEATS & TPC-C & AStore \\\hline\hline
    \textbf{at 10\%} & 5s & 52s & 131s & 107s & 8s \\\hline
    \textbf{at 25\%} & 12s & 128s & 65s & 51s & 20s \\\hline
    \textbf{at 50\%} & 23s & 257s & 0.18H & 0.14H & 40s \\\hline
    \textbf{at 100\%} & 46s & 512s & 0.36H & 0.29H & 77s \\\hline
  \end{tabular}
\caption{Hash-jumper's runtime across various hash-hit points.}  
  \label{tab:hash-jumper}
\end{subtable}

\begin{subtable}[t]{0.5\textwidth}
   \footnotesize
  \centering
  \begin{tabular}{|c||c|c|c|c|c|c|}  
    \hline
    & Epinions & TATP & SEATS & TPCC & AStore \\\hline\hline
    \textbf{\textsf{B}} & 1.11ms & 2.62ms & 7.99ms & 32.8ms & 9.1ms \\\hline 
    \textbf{\textsf{T/T+D}} & 1.11ms & 2.06ms & 6.79ms & 8.6ms & 7.1ms \\\hline    
  \end{tabular}
  \caption{Runtime of regular application-level transactions.}  
  \label{tab:regular-transaction-speed}
\end{subtable}

\caption{Performance of \ultraverse.}
\label{tab:performance}
\end{table}

Different replay strategies were adopted for the four systems. The baseline (\textsf{B}) version serially replayed all committed application function calls of the unmodified application-level transactions-- serial execution is necessary to ensure strong serializability of the \textit{what-if} replay. The transpiled (\textsf{T}) version serially replayed the function calls of the transpiled application-level transactions. The dependency-analyzed (\textsf{D}) version replayed the non-transpiled application-level transactions based on column-wise and row-wise query dependency analysis. The \ultraverse (\textsf{T+D}) version directly replayed the raw SQL \texttt{PROCEDURE}s from the log based on column-wise and row-wise query dependency analysis. \looseness=-1

\autoref{fig:retroactive-operation} compares the runtime of \textit{what-if} simulation for the transaction history window of 1 million queries.
The \textsf{T+D} version was 23.60x faster on average than the \textsf{B} version, and the \textsf{T} version was 2.01x faster on average than the \textsf{B} version. The \textsf{T} version gained speed primarily due to the SQL transpilation technique, which merged the calls of individual queries belonging to the same application-level transaction into a single SQL \texttt{PROCEDURE} call, thus reducing the number of round-trip communications between the DBMS client and server. This advantage became more pronounced when each application-level transaction contains a higher number of query executions by using loops (e.g., SEATS, TPC-C, AStore). The \textsf{D} version did not get such a speedup benefit of transpilation, but still gained speedup by replaying only dependent application-level transactions. \looseness=-1

The \textsf{D} version and \textsf{T+D} version gained a speed from column-wise and row-wise query dependency analysis, which reduced the number of queries to replay and increased the level of query execution parallelism during replay. Benchmarks such as Epinions, TATP, and AStore benefited from both replay query reduction and parallel replay. SEATS and TPC-C benefited only from replay parallelism, as all transactions in these two benchmarks depended on the retroactive operation's target transactions in both column-wise and row-wise terms. \looseness=-1

In all 4 system versions, reading the query log and replaying queries were performed in parallel. However, because the disk batch-reading speed of SSD NVMe was faster than the DBMS's serial execution of queries, the critical path of \textit{what-if} simulation was the DBMS's replay runtime rather than the disk I/O of query logs. \looseness=-1

\para{Hash-jumper: }
The \ultraverse version gained an additional speedup with its Hash-jumper feature enabled, as shown in \autoref{tab:hash-jumper}. This test, unlike the one referred to in \autoref{fig:retroactive-operation}, used 10 million queries. The Hash-jumper prototype has an upper-bound hash collision rate of approximately {$1.16 x 10^{-77}$}. Each benchmark's test cases were set up to have a hash-hit at various timestamp locations (10\%, 25\%, 50\%, and 100\%) in the transaction history. The sooner a hash-hit was detected by \ultraverse, the sooner the \textit{what-if} replay terminated, finishing the effectless operation. The test case with no hash-hit (i.e., at 100\%) implicitly measures the overhead of enabling Hash-jumper. This overhead was on average 2.4\%, reflected in the slowdown of the replay speed compared to disabling Hash-jumper. However, this overhead can be compensated by increasing the number of CPUs and RAMs, since the hash computation can be done asynchronously and independently from the replay.

\begin{table}

\begin{subtable}[t]{0.5\textwidth}
  \footnotesize
  \centering
\resizebox{\linewidth}{!}{
  \begin{tabular}{|l||c|c|c|c|c|}  
    \hline
    & Epinions & TATP & SEATS & TPC-C & AStore \\\hline\hline
    \textbf{Analysis Time} & 21.3s & 31.7s & 145.2s & 151.3s & 187.8s \\\hline
  \end{tabular}
  }
  \caption{SQL transpiler's application code analysis time.}
  \label{tab:transpilation} 
\end{subtable}

\begin{subtable}[t]{0.5\textwidth}
   \footnotesize
  \centering
\resizebox{\linewidth}{!}{
  \begin{tabular}{|l||c|c|c|c|c|c|}  
    \hline
    & Epinions & TATP & SEATS & TPC-C & AStore \\\hline\hline
    \textbf{MySQL's Query Log} & 291b & 376b & 524b & 489b & 441b \\\hline
    \textbf{\ultraverse's Log} & 12b & 48b & 35b & 110b & 76b \\\hline
  \end{tabular}
  }
  \caption{Average log size per query (bytes).}
  \label{tab:ultraverse-space-overhead} 
\end{subtable}

\begin{subtable}[t]{0.5\textwidth}
   \footnotesize
  \centering  
  \begin{tabular}{|c||c|c|c|c|c|c|}  
    \hline
    & TPC-C & TATP & Epinions & SEATS & RS \\\hline\hline
    \textbf{\textsf{T+D}} & 7.0\% & 0.7\% & 3.9\% & 3.9\% & 6.4\% \\\hline 
    \textbf{\textsf{T+D+H}} & 9.5\% & 0.7\% & 4.9\% & 4.2\% & 6.4\% \\\hline
  \end{tabular}
  \caption{Query dependency logger's overhead.}  
  \label{tab:logging-overhead}
\end{subtable}

\begin{subtable}[t]{0.5\textwidth}
   \footnotesize
  \centering
\begin{tabular}{|l||c|c|c|c|c|c|}  
    \hline
    & TPC-C & TATP & Epinions & SEATS & AStore \\\hline\hline
    \textbf{Slowdown} & 8.2\% & 7.2\% & 14.1\% & 16.5\% & 13.3\% \\\hline
  \end{tabular}
  \caption{Overhead of simultaneously running \textit{what-if} simulation and regular operations in the same machine.}
  \label{tab:simultaneous-overhead} 
\end{subtable}

\caption{Overhead of \ultraverse.}
\label{tab:overhead}
\end{table}

\para{Regular Transaction Speed: }
\ultraverse's SQL transpiler merges the queries executed separately within the same application-level transaction function into a single SQL \texttt{PROCEDURE}, effectively reducing the RTT between the DBMS client and server. This transpilation technique can be used not only for \textit{what-if} analysis, but also for speeding up the processing of regular application-level transactions. \autoref{tab:regular-transaction-speed} compares the average transaction runtime of each benchmark for the baseline and transpiled versions. On average, the baseline version took 10.7ms to process a transaction, whereas the transpiled version took 5.13ms-- 105.1\% faster. 
The speed of Epinions was the same, because its each transaction was a single query. On the other hand, SEATS, TPC-C, and AStore had a greater speedup because they had loops of executing multiple queries in a single transaction. 
In this local machine test, the RTT between the DBMS client and server was about 1ms. However, in a remote DBMS setup with a larger RTT (e.g., >20ms), the speedup from the transpiled version would be multiplied by the increase in RTT. Meanwhile, this speedup due to RTT gradually diminishes as each transaction requires more time for local computation (to be discussed in \autoref{subsec:scalability}).
\looseness=-1


\subsection{Overhead}
\label{subsec:overhead}

\para{Transpilation Time: }
The SQL transpilation time consists of application code instrumentation, path exploration with dynamic symbolic execution, and the final conversion into SQL code. 
Among the benchmarks, TPCC used \textit{for}-loops (e.g., batch-ordering products); Astore used black-box APIs (e.g., native JavaScript APIs unmodeled by Z3); and a majority of the benchmarks used SQLSTATE singals within various error-detecting \textit{if}-branches (e.g., no flight seats available).
As shown in \autoref{tab:transpilation}, the transpilation time for each benchmark varied from 21.3 seconds to 187.8 seconds. This time increased with the number of application-level transactions or the number of execution paths (e.g., loops). This transpilation overhead only occurs once per application during offline code analysis. \looseness=-1

\begin{table}[t]
\fontsize{6}{7}\selectfont
\centering
\setlength{\tabcolsep}{2.2pt}

\renewcommand\arraystretch{1.2}
\centering
\begin{subtable}{0.5\textwidth}
\centering
\begin{tabular}{|l||cccc|cccc|cccc|}
\toprule
\multirow{2}{*}{
{\textbf{Bench.}}}
  & \multicolumn{4}{c}{\textit{\textbf {History Size: 1M}}} &
\multicolumn{4}{c}{\textit{ \textbf {: 10M}}} & \multicolumn{4}{c|}{\textit{ \textbf {: 100M}}} \\ 
\cmidrule{2-5} \cmidrule{6-9} \cmidrule{10-13}
&\textsf{B}&\textsf{T}&\textsf{D}&\textsf{T+D}&\textsf{B}&\textsf{T}&\textsf{D}&\textsf{T+D}&\textsf{B}&\textsf{T}&\textsf{D}&\textsf{T+D} \\
\hline
Epinions &1024s&1024s&4s&4s&2.85H&2.85H&48s&45s&28.6H&28.6H&501s&490s\\
TATP &1215s&1030s&57s&45s&3.3H&1.1H&1277s&505s&32.9H&11.6H&4.1H&1.5H\\
SEATS &1422s&371s&465s&130s&3.95H&1.02H&1.4H&0.4H&39.4H&10.2H&13.9H&3.6H\\
TPC-C &1426s&405s&369s&116s&4.0H&1.1H&0.9H&0.3H&39.6H&18.4H&6.4H&2.9H\\
AStore &1804s&592s&31s&12s&5.0H&1.7H&418s&135s&50.4H&17.0H&1.0H&0.4H\\
\bottomrule
\end{tabular}
\caption{\textit{What-if} analysis time for various transaction history sizes.}
\label{tab:retroactive-windows}
\end{subtable}
    
\begin{subtable}{0.5\textwidth}
\centering
\begin{tabular}{|l||ccc|ccc|ccc|}
\toprule
\footnotesize
\multirow{2}{*}{
{\textbf{Bench.}}}
  & \multicolumn{3}{c}{\textit{\textbf {DB Size: 1x}}} &
\multicolumn{3}{c}{\textit{ \textbf {: 5x}}} & \multicolumn{3}{c|}{\textit{ \textbf {: 10x}}} \\ 
\cmidrule{2-4} \cmidrule{5-7} \cmidrule{8-10} &\textsf{T}&\textsf{D}&\textsf{T+D}&\textsf{T}&\textsf{D}&\textsf{T+D}&\textsf{T}&\textsf{D}&\textsf{T+D} \\
\hline
Epinions &1.0x&256x&256x&1.0x&256x&256x&1.0x&256x&256x\\
TATP &1.3x&21.3x&27.0x&1.3x&21.5x&26.2x&1.3x&21.5x&25.8x\\
SEATS &3.8x&3.1x&10.9x&3.7x&3.2x&10.9x&3.7x&3.2x&10.5x\\
TPC-C &3.2x&3.9x&13.4x&3.0x&3.8x&12.7x&2.7x&3.9x&11.5x\\
AStore &3.1x&99.4x&150x&3.0x&96.8x&148x&3.0x&97.6x&145x\\
\bottomrule
\end{tabular}
\caption{Speed against Baseline across various DB sizes.}
\label{tab:database-sizes} 
\end{subtable}
    
\begin{subtable}{0.5\textwidth}
\centering
\begin{tabular}{|l||ccc|ccc|ccc|ccc|}
\toprule
\footnotesize
\multirow{2}{*}{
{\textbf{Bench.}}}
  & \multicolumn{3}{c}{\textit{\textbf {Dependency: 1\%}}} &
\multicolumn{3}{c}{\textit{ \textbf {: 10\%}}} & \multicolumn{3}{c}{\textit{ \textbf {: 50\%}}} & \multicolumn{3}{c|}{\textit{ \textbf {: 100\%}}} \\ 
\cmidrule{2-4} \cmidrule{5-7} \cmidrule{8-10} \cmidrule{11-13} &\textsf{T}&\textsf{D}&\textsf{T+D}&\textsf{T}&\textsf{D}&\textsf{T+D}&\textsf{T}&\textsf{D}&\textsf{T+D}&\textsf{T}&\textsf{D}&\textsf{T+D} \\
\hline
Epinions &1.0x&366x&366x&1.0x&35.5x&35.5x&1.0x&6.7x&6.7x&1.0x&3.6x&3.6x\\
TATP &1.3x&343x&421x&1.3x&34.4x&41.7x&1.3x&6.0x&8.2x&1.3x&3.2x&4.1x\\
SEATS &-&-&-&-&-&-&-&-&-&3.8x&3.2x&11.2x\\
TPC-C &-&-&-&-&-&-&-&-&-&3.2x&4.4x&13.4x\\
AStore &3.1x&351x&836x&3.1x&34.7x&94.3x&3.1x&7.2x&18.3x&3.1x&3.4x&9.3x\\
\bottomrule
\end{tabular}
\caption{Speedup against Baseline across various query dependencies.}
\label{tab:dependency-rates} 
\end{subtable}
\caption{Scalability of \ultraverse.}
\label{tab:comparison}
\end{table}

\para{Query Dependency Log Size: }
\autoref{tab:ultraverse-space-overhead} compares the size of MySQL's binary query log and the additional log generated by \ultraverse. On average, MySQL's binary log is 424 bytes per query, while \ultraverse introduces an additional 56 bytes per query, resulting in a 7.6\% increase in the log storage overhead.
\looseness=-1

\para{Logging Read/Write Sets and Hashes: }
During regular operations, \ultraverse's DBMS plugin runs an asynchronous background process that derives column-wise and row-wise $R$/$W$ sets of committed queries and table hashes. This incurs additional computational overhead. The overhead this background logging activity imposes on the regular query processing operation of the DBMS is measured in \autoref{tab:logging-overhead}, where \textsf{T+D} refers to the \ultraverse version and \textsf{T+D+H} refers to \ultraverse with Hash-jumper enabled. Across all benchmarks, the overhead from the logging activity ranged from 0.6\% to 9.5\%. However, this overhead can be offset in a practical setting by copying the DBMS's MySQL binary log to another machine and having \ultraverse generate its $R$/$W$ set logs on that machine. This is feasible because \ultraverse's logging daemon operates asynchronously and independently from the DBMS's regular operations.
\looseness=-1

\para{Concurrent \textit{What-if} Simulation with Regular Operations: }
The impact on the DBMS's regular operations while the \textit{what-if} simulation is concurrently running is shown in \autoref{tab:simultaneous-overhead}. The average slowdown overhead ranged from 3.3\% to 16.5\%. This overhead can also be reimbursed if the \textit{what-if} simulation is run on a different machine, using a copied backup database and transaction history.
\looseness=-1

\subsection{Scalability}
\label{subsec:scalability}

\noindent\textbf{Transaction History Size: }
The runtime of the \textit{what-if} simulation was measured across different transaction history sizes: 1M, 10M, and 100M queries. \autoref{tab:retroactive-windows} compares the runtimes of the 4 versions (i.e., \textsf{B, T, D,} and \textsf{T+D}). The runtimes of all 4 systems generally increased proportionally with the size of the transaction history, as their number of transactions to replay also increased linearly.
\looseness=-1

\para{Database Size: }
The runtime of the \textit{what-if} simulation was also measured across different database sizes: 1x, 5x, and 10x,  ranging from 10MB to 10GB in overall. The speedup of the \textsf{T}, \textsf{D}, and \textsf{T+D} versions compared to the \textsf{B} version is shown in \autoref{tab:database-sizes}. The speedup remained generally consistent as the database size increased. However, benchmarks with computationally intensive transactions, such as SEATS and TPC-C, experienced more slowdowns for \textsf{T} and \textsf{T+D}. This occurred because the benefits of RTT reduction from transpilation were weakened by the increased computational time for transactions due to the increased database size (e.g., larger tables takes a more CPU time for JOIN operations). \looseness=-1

\para{Query Dependency Rate: }
The runtime of the \textit{what-if} simulation was measured across different query dependency rates: 1\%, 10\%, 50\%, and 100\%. The speed of \textsf{T} remained roughly the same regardless of changes in the query dependency rate since it had to replay all queries in the transaction history, anyway. For the \textsf{D} and \textsf{D+T} versions, as the query dependency rate increased and they had to replay more queries than before, the speedup rate decreased. However, even when the dependency rate was 100\% (requiring the replay of all queries), the \textsf{D} and \textsf{D+T} versions were still faster than the \textsf{B} and \textsf{T} versions. This is because query dependency analysis enabled spontaneous parallel replay of multiple queries with non-conflicting $R$/$W$ sets, resulting in a speedup ranging from 2.7x to 4.2x. For the SEATS and TPC-C benchmarks, we report only for a 100\% query dependency rate, because in those benchmarks almost all transactions depended on each other from the global view of the entire transaction history due their complex data record correlations. However, it's important to note that their multiple consecutive queries could still be spontaneously replayed in parallel if their column-wise or row-wise $R$/$W$ sets had no \textit{read-write}, \textit{write-write}, and \textit{write-read} conflicts. \looseness=-1



\section{Discussion}
\label{appendix:discussion}

\para{Using \ultraverse for Concurrency Control:}
Besides \textit{what-if} analysis, \ultraverse can be also used for efficient concurrency control to \textit{safely} execute multiple transactions in parallel. For example, Calvin~\cite{calvin} and Bohm~\cite{bohm} are transaction schedulers designed to efficiently manage distributed or versioned databases. However, their usability is limited because they cannot learn about a transaction's read/write sets without executing it. To handle this limitation, they use a read-lock detection scheme to identify transactions doing dirty reads, and the whole scheduling restarts whenever this happens. To eliminate such inefficiency of expensively restarting transactions, the schedulers can adopt \ultraverse's fine-grained query dependency analysis to attain the prior knowledge of transaction dependency and do accurate scheduling. \looseness=-1



\para{Managing Many \textit{what-if} Scenarios: } \ultraverse manages different versions of \textit{what-if} transaction history by attaching \textit{what-if} tags on each node of the transaction dependency graph to mark checkpoints as the starting point of a new \textit{what-if} scenario. The \textit{what-if} tags are essentially used to mark a branch that creates a different universe (i.e., different final database state). All such universes collectively form \ultraverse. \looseness=-1

\para{Replaying Interactive Human Decisions: } Besides \ultraverse's \textit{what-if} analysis based on a software's system-level (i.e., SQL-level and application-level) dependency, the analysis could be further enhanced by incorporating the realistic replay of interactive \textit{human minds}. 
For instance, in a \textit{what-if} analysis of a stock market, we can realistically replay the interactive decisions of the stock-trading users. The extended version of \ultraverse supports configurable human decisions, which is implemented based on trigger rules. For example, \ultraverse can be configured to suppress Alice's \texttt{StockPurchase} transaction during the \textit{what-if} replay if a certain stock symbol's price goes higher than Alice's buy-threshold. Such triggering rules can be either configured manually by a \textit{what-if} analyst or assisted by an instruction-tuned LLM~\cite{instruction-tuning} to simulate interactive human decisions upon replaying each user transaction. 
\looseness=-1

\para{Symbol Explosion: } There could be corner cases where a path explosion (i.e., circular-dependent loop conditions) occurs within dynamically generated code (e.g. \texttt{eval()}). Such execution paths within dynamic code are difficult to detect. This is fundamentally a challenging problem in the research field of DSE. In such circumstances,  \ultraverse's dynamic symbol spawning technique (\autoref{subsec:dse-discussion}) would spawn as many symbols as the number of paths executed within the dynamic code. We regard this an interesting topic for future research and aim to devise a more efficient solution.

\para{Cross-app \& Cross-network \textit{What-if} Analysis: }  \ultraverse considers remote network entities as a blackbox, creating a new \textit{network-data} symbol for each network API call. Our future work aims to enable precise \textit{what-if} analysis across multiple application services.
\looseness=-1

\section{Related Work}
\label{sec:related}


\noindent\textbf{Temporal Databases}~\cite{temporal-database-intro} store each record with a reference of valid time, denoting the beginning and ending time a database system regards a record to be valid in its database~\cite{temporal-valid-transaction}. Several database languages support temporal queries~\cite{temporal-survey, temporal-language}. SQL:2011~\cite{temporal-sql} incorporates temporal features and MariaDB~\cite{mariadb-code} supports its usage. 

\noindent \textbf{Database Versioning} (e.g., OrpheusDB~\cite{rdbms-versioning} or Apache Iceberg~\cite{iceberg}) allows users to efficiently store and load past versions of SQL tables by traversing a version graph, which is similar to how \ultraverse manages different \textit{what-if} universes (\autoref{subsec:dse-discussion}). While temporal and versioning databases can query a database's past states, they do not support \textit{efficient} retroactive replay of transactions like \ultraverse. \looseness=-1


\para{Database Recovery/Replication} efficiently recovers a database by logging values (SiloR~\cite{silor} Kuafu~\cite{kuafu}) or by replaying queries (~\cite{checkpoint-recovery, ganymed, lazy-replica, remusdb, nondeterminism}). But they are not designed for retroactive modification of past transactions with strong serialization~\cite{strong-serializability} like \ultraverse. 

\para{Attack Recovery:} Several methods have been developed to recover from attacks. CRIU-MR~\cite{criu-mr} recovers a malware-infected Linux container by selectively removing malware during checkpoint restoration. ChromePic~\cite{chromepic} replays browser attacks by transparently logging the user's page navigation activities and reconstructing the logs for forensics. RegexNet~\cite{regexnet} identifies malicious DoS attacks of sending expensive regular expression requests to the server, and recovers the server by isolating requests containing the identified attack signatures. However, the prior works do not address how to retroactively undo the damages on the server's persistent database, both efficiently and correctly from application semantics.
Warp~\cite{warp} and Rail~\cite{rail} selectively remove problematic user request(s) or patch the server's code and reconstruct the server's state accordingly. However, these techniques require replaying the heavy browsers ($\sim$100MB per instance) during their replay phase, which is not scalable for large services that have a large number of users or transactions (e.g., $\geq$ 10M) to replay. On the other hand, \ultraverse's strength lies in its scalable efficiency: \ultraverse transpiles the application code into the equivalent SQL \texttt{PROCEDURE} which contains only the minimally required SQL queries (affecting the database's state), whose \textit{what-if} replay is faster and lighter-weight. Further, \ultraverse proposes that novel DBMS-level optimization techniques: column-wise \& row-wise query dependency analysis and Hash-jumper. \looseness=-1

\para{Provenance in Databases:} 
\textit{What-if-provenance}~\cite{caravan} speculates the output of a query if a hypothetical modification is made to the database. \textit{Why-provenance}~\cite{why-provenance} traces the origin tuples of each output tuple by analyzing the lineage~\cite{lineage} of data generation. \textit{How-provenance}~\cite{how-provenance} explains the way origin tuples are combined to generate each output (e.g., ORCHESTRA~\cite{orchestra}, SPIDER~\cite{spider}).
\textit{Where-provenance}~\cite{why-provenance} traces each output's origin tuple and column, annotates each table cell, and propagates labels~\cite{annotation} (e.g., Polygen~\cite{polygen}, DBNotes~\cite{dbnotes}).
Mahif~\cite{mahif} is a recent work that answers historical \textit{what-if} queries, which computes the delta difference in the final database state given a modified past operation. Mahif leverages symbolic execution to ignore a subset of transaction history that is provably unaffected by the modified past operation. However, unlike \ultraverse, Mahif is not scalable over the transaction history size (\autoref{subsec:comparison}). 
Most importantly, all prior database provenance works do not capture application code's semantics during \textit{what-if} analysis.

\para{SQL transpiler:} 
QBS~\cite{qbs} is a system that uses the theory of finite ordered relation~\cite{ordered-relation} to selectively transform fragments of application logic into SQL queries to enhance an application's runtime speed. While both \ultraverse and QBS can transpile application code into the SQL equivalent, they have different goals (and thus use different techniques). QBS optimistically searches for any application code snippets that can be converted into more efficient SQL code (e.g., application code's nested for-loop repetitively calling SQL queries can be converted into a single JOIN query). However, QBS is not designed to convert the entire application code. For example, QBS cannot handle the code snippet that has dynamism, such as dynamic data or function pointers. \ultraverse handles them by dynamic symbolic execution and dynamic symbol-spawning. QBS optimistically finds the code portions to optimize, while \ultraverse wholely converts the given code. Thus, QBS and \ultraverse complement each other. For example, once \ultraverse soundly converts application-level transactions into the equivalent SQL procedure, QBS can further optimize it by transforming multiple SQL statements within the SQL procedure into a single JOIN query. ByePy~\cite{byepy} is a Python-to-SQL compiler effectively reducing the communication cost between Python and DBMS. However, unlike \ultraverse, ByePy does not address how to correctly translate an application's dynamic variable type casting, dynamic control flow jumps, and undeterministic return values of APIs. Acorn~\cite{acorn} is designed to improve large-scale dataset analytics by aggressively caching and reusing the intermediate results of query executions. In particular, Acorn's UDF (user-defined function) compiler leverages symbolic execution to convert UDFs into semantically equivalent Spark SQL expressions. However, unlike \ultraverse's SQL transpiler that uses dynamic symbolic execution, Acorn cannot comprehensively capture and incorporate the dynamic behaviors of an application (e.g. dynamic type coercion, dynamic control flow targets, undeterministic APIs) into statically typed SQL code.
\looseness=-1

\section{Conclusion}
\label{sec:conclusion}
\ultraverse efficiently conducts \textit{what-if} analysis for database-using applications, without necessitating manual modifications to legacy application code. Based on abstract syntax trees and dynamic symbolic execution, \ultraverse converts application transaction functions into equivalent SQL \texttt{PROCEDURE}-based application code. Then, \ultraverse incorporates novel DBMS techniques such as column-wise \& row-wise query dependency analysis, and Hash-jumper, significantly enhancing the speed of application-level \textit{what-if} analysis. \looseness=-1




\section*{Acknowledgments}
This work is supported by JSPS Kakenhi JP23K17456, JP23K25157, JP23K28096, and JST CREST JPMJCR22M2. We thank Dr. James Mickens (Harvard Unviersity) for his insights on application-level semantics. We also thank Dr. David Derler (DFINITY) for his comments on the design and analysis of \ultraverse's table hash algorithm. 


\bibliographystyle{abbrv}
\bibliography{110-bibfile} 

\clearpage
\appendix
\normalsize

\onecolumn

\section{The Column-wise \& Row-wise Read/Write Set Policy}
\label{appendix:set-policy}

\begin{table}[h!]
\small
\centering
\noindent
\renewcommand\thetable{A}
\resizebox{\linewidth}{!}{
\begin{tabular}{|l||l|}
\hline
\textbf{{Query Type}} & \textbf{{Column-wise Read \& Write Set Policy}} \\
\hline
\hline
\textbf{\texttt{CREATE} $||$ \texttt{ALTER}} & $R_c=$ \{ \texttt{\_S.`tablename'} (i.e., the DB schema-change-monitoring virtual table's column mapped to this table) \\
\textbf{\texttt{TABLE}}&\textcolor{white}{$R_c=$ \{ } + all \texttt{\_S.`sourceTableName'} of external table(s) referencing with \texttt{FOREIGN KEY}(s) \}\\
& $W_c=$ \{ \texttt{\_S.`tablename'} \}\\
\hline
\textbf{\texttt{DROP} $||$ \texttt{TRUNCATE}} & $R_c=$ \{ \texttt{\_S.`tablename'}  \}\\
\textbf{\texttt{TABLE}}&$W_c=$ \{ \texttt{\_S.`tablename'} \}\\
\hline
\textbf{\texttt{CREATE (OR}} & $R_c=$ \{ \texttt{\_S.`viewname'} + any \texttt{\_S.`sourceViewName'} or \texttt{\_S.`sourceTableName'} externally referencing \}\\
\textbf{\texttt{REPLACE) VIEW}}&$W_c=$ \{ \texttt{\_S.`viewname'} \}\\
\hline
\textbf{\texttt{DROP VIEW}} & $R_c=$ \{ \texttt{\_S.`viewname'} \}\\
& $W_c=$ \{ \texttt{\_S.`viewname'} \}\\
\hline
\textbf{\texttt{CREATE}} $||$ \textbf{\texttt{DROP}} & $R_r=$ \{ \texttt{\_S.`targetname'} (where \texttt{targetname} is the name of the target \texttt{PROCEDURE}/\texttt{FUNCTION}/\texttt{TRIGGER}) \\
\textbf{\texttt{PROCEDURE/TRIGGER}} & $W_r=$ \{ \texttt{\_S.`targetname'} \}\\
\textbf{\texttt{/FUNCTION}} &\\
\hline
\textbf{\texttt{CREATE} \texttt{CONSTRAINT}} & $R_c=$ \{ \texttt{\_S.`tablename'} of its target table + any \texttt{\_S.`sourceTableName'} referencing with \texttt{FOREIGN KEY}(s)\}\\
\textbf{/\texttt{INDEX}}& $W_c=$ \{ \texttt{\_S.`tablename'} of its target table \}\\
\hline
\textbf{\texttt{SELECT}} & $R_c=$ \{ \texttt{\_S.`tablename'} + the columns of the tables (or views) this query's \texttt{SELECT} or \texttt{WHERE} clause accesses\\
&\textcolor{white}{$R_c=$ \{ } + columns of external tables (or views) if this query uses a \texttt{FOREIGN} \texttt{KEY} referencing them\\
&\textcolor{white}{$R_c=$ \{ } + Union of the $R_c$ of this query's inner sub-queries \}  , $W_c=\{\}$ \\
\hline
\textbf{\texttt{INSERT}} & $R_c=$ \{ \texttt{\_S.`tablename'} + the union of the $R_c$ of this query's inner sub-queries + this table's primary key column\\
&\textcolor{white}{$R_c=$ \{ } + the columns of external tables (or views) if this query uses a \texttt{FOREIGN} \texttt{KEY} referencing them \}\\
& $W_c=$ \{ All columns of the target table (or view) this query inserts into \}\\
\hline
\textbf{\texttt{UPDATE}} $||$& $R_c=$ \{ \texttt{\_S.`tablename'} + the union of the $R_c$ of this query's inner sub-queries\\
\textbf{\texttt{DELETE}}&\textcolor{white}{$R_c=$ \} }+ the columns of the target table (or view) this query reads\\
&\textcolor{white}{$R_c=$ \} }+ the columns of external tables (or views) if this query uses a \texttt{FOREIGN} \texttt{KEY} referencing them \\
&\textcolor{white}{$R_c=$ \{ }+ the columns of the tables (or views) read in its \texttt{WHERE} clause \}\\
& $W_c=$ \{ Either the specific updated columns or all deleted columns of the target table (or view)\\
&\textcolor{white}{$R_c=$ \{ } + all external tables' \texttt{FOREIGN KEY} columns that reference this target table's updated/deleted column \}\\
\hline
\textbf{\texttt{BEGIN TRANSACTION} $||$} & $R_c=$ \{ The union of the $R_c$ of all queries within this transaction or procedure \}\\
\textbf{\texttt{CALL PROCEDURE}} &$W_c=$ \{ The union of the $W_c$ of all queries within this transaction or procedure \}\\
\hline
\textbf{\texttt{TRIGGER}-ing queries}&$R_c=$ \{ Add \texttt{\_S.`triggername'} + $R_r$ of its associated \texttt{TRIGGER}-ed query (defined by \texttt{CREATE} \texttt{TRIGGER}) \}\\
&$W_c=$ \{ Add $W_r$ of its associated \texttt{TRIGGER}-ed query \}\\
\hline
\end{tabular}
}
\caption{\ultraverse's policy for generating a column-wise read set ($R_c$) and a write set ($W_c$) for each type of SQL query.}

\label{tab:set-policy}
\end{table}

\newpage

\begin{table*}[h!]
\small
\centering
\noindent
\renewcommand\thetable{B}
\resizebox{\linewidth}{!}{
\begin{tabular}{|l||l|}
\hline
\textbf{{Query Type}} & \textbf{{Row-wise Read \& Write Set Policy }} \\
\hline
\hline
\textbf{\texttt{CREATE} $||$ \texttt{ALTER}} & $R_r=$ \{ $\langle \texttt{\_S.`tablename'} : * \rangle$ \\
\textbf{\texttt{TABLE}}& \textcolor{white}{$R_r=$ \{ } + all $\langle \texttt{\_S.`sourceTableName'} : * \rangle$ of external table(s) referencing with \texttt{FOREIGN} \texttt{KEY}(s) \}
\\
&$W_r=$ \{ $\langle \texttt{\_S.`tablename'} : * \rangle$ \}\\
\hline
\textbf{\texttt{DROP} $||$ \texttt{TRUNCATE}} & $R_r=$ \{ $\langle \texttt{\_S.`tablename'} : * \rangle$ \\
\textbf{\texttt{TABLE}}& $W_r=$ \{ $\langle \texttt{\_S.`tablename'} : * \rangle$ \}\\
\hline
\textbf{\texttt{CREATE (OR}} & $R_r=$ \{ $\langle \texttt{\_S.`viewname'} : * \rangle$ + all $\langle \texttt{\_S.`sourceViewName/sourceTableName'} : * \rangle$ \}\\
\textbf{\texttt{REPLACE) VIEW}}&$W_r=$ \{ $\langle \texttt{\_S.`viewname'} : * \rangle$ \}\\
\hline
\textbf{\texttt{DROP VIEW}} & $R_r=$ $W_r=$ \{ $\langle \texttt{\_S.`viewname'} : * \rangle$ \}\\
\hline
\textbf{\texttt{CREATE}} $||$ \textbf{\texttt{DROP}} & $R_r=$ \{ $\langle \texttt{\_S.`targetname'} : * \rangle$ (where \texttt{targetname} is the name of the target \texttt{PROCEDURE}/\texttt{FUNCTION}/\texttt{TRIGGER}) \\
\textbf{\texttt{PROCEDURE/TRIGGER}} & $W_r=$ \{ $\langle \texttt{\_S.`targetname'} : * \rangle$ \}\\
\textbf{\texttt{/FUNCTION}} &\\
\hline
\textbf{\texttt{CREATE} \texttt{CONSTRAINT}} & $R_r=$ \{ $\langle \texttt{\_S.`tablename'} : * \rangle$ \} \\
/\textbf{\texttt{INDEX}}& $W_r=$ \{ $\langle \texttt{\_S.`tablename'} : * \rangle$ \}\\
\hline
\textbf{\texttt{SELECT}} & $R_r=$ \{ $\langle \texttt{\_S.`tablename'} : * \rangle$ + the RI key(s) of each table that this query reads in its \texttt{WHERE} clause \\
&\textcolor{white}{$R_r=$ \{ } (wildcard if the rows are not specified) \\
&\textcolor{white}{$R_r=$ \{ } + the RI key(s) of those tables referencing with \texttt{FOREIGN} \texttt{KEY} \\
&\textcolor{white}{$R_r=$ \{ } (wildcard if the referenced source column is neither RI column nor alias column) \} \\
&\textcolor{white}{$R_r=$ \{ } + the union of $R_r$ of nested inner queries \} \\ 
&$W_r=\{\}$ \\
\hline
\textbf{\texttt{INSERT}} & $R_r=$ \{ $\langle \texttt{\_S.`tablename'} : * \rangle$  \\
&\textcolor{white}{$R_r=$ \{ } + the union of $R_r$ of nested inner queries\\
&\textcolor{white}{$R_r=$ \{ } + the RI key(s) of those tables referencing with \texttt{FOREIGN} \texttt{KEY} \\
& $W_r=$ \{ The RI key of the inserted row (which should be implicitly identified for each row if \texttt{AUTO\_INCREMENT} is set) \}\\
\hline
\textbf{\texttt{UPDATE}} $||$& $R_r=$ \{ $\langle \texttt{\_S.`tablename'} : * \rangle$ + the union of $R_r$ of nested inner queries\\
\textbf{\texttt{DELETE}}&\textcolor{white}{$R_r=$ \} }+ The RI key(s) of those tables read in its \texttt{WHERE} clause\\
&\textcolor{white}{$R_r=$ \{ } + the RI key(s) of those tables referencing with \texttt{FOREIGN} \texttt{KEY} \\
& $W_r=$ \{ The RI key(s) of the table written according to the constraint specified in its \texttt{WHERE} clause \\
&\textcolor{white}{$R_r=$ \{ } + the RI key(s) of those tables that reference this table by their \texttt{FOREIGN} \texttt{KEY} \\
&\textcolor{white}{$R_r=$ \{ }  (Note: it is those tables that reference this table, not those tables that this table references) \}\\
\hline
\textbf{\texttt{BEGIN TRANSACTION}} & $R_r=$ \{ $\langle \texttt{\_S.`procedurename/functionname'} : * \rangle$ + the union of $R_r$ of all inner queries \}\\
$||$ \textbf{\texttt{CALL PROCEDURE}} &$W_r=$ \{ The union of $W_r$ of all inner queries \}\\
$||$ \textbf{\texttt{FUNCTION()}}&\\
\hline
\textbf{\texttt{TRIGGER}-ing queries}&$R_r=$ \{ Add $\langle \texttt{\_S.`triggername'} : * \rangle$ + $R_r$ of its associated \texttt{TRIGGER}-ed query (defined by \texttt{CREATE} \texttt{TRIGGER}) \}\\
&$W_r=$ \{ Add $W_r$ of its associated \texttt{TRIGGER}-ed query \}\\
\hline
\end{tabular}
}
\caption{\ultraverse's policy for generating a row-wise read set ($R_r$) and a write set ($W_r$) for each type of SQL query.}
\label{tab:row-set-policy}
\end{table*}

\newpage

\section{An example of Row-wise Query Dependency Graph}
\label{appendix:rowwise-dependency-graph}

\begin{figure}[h!]
  \includegraphics[width=\linewidth]{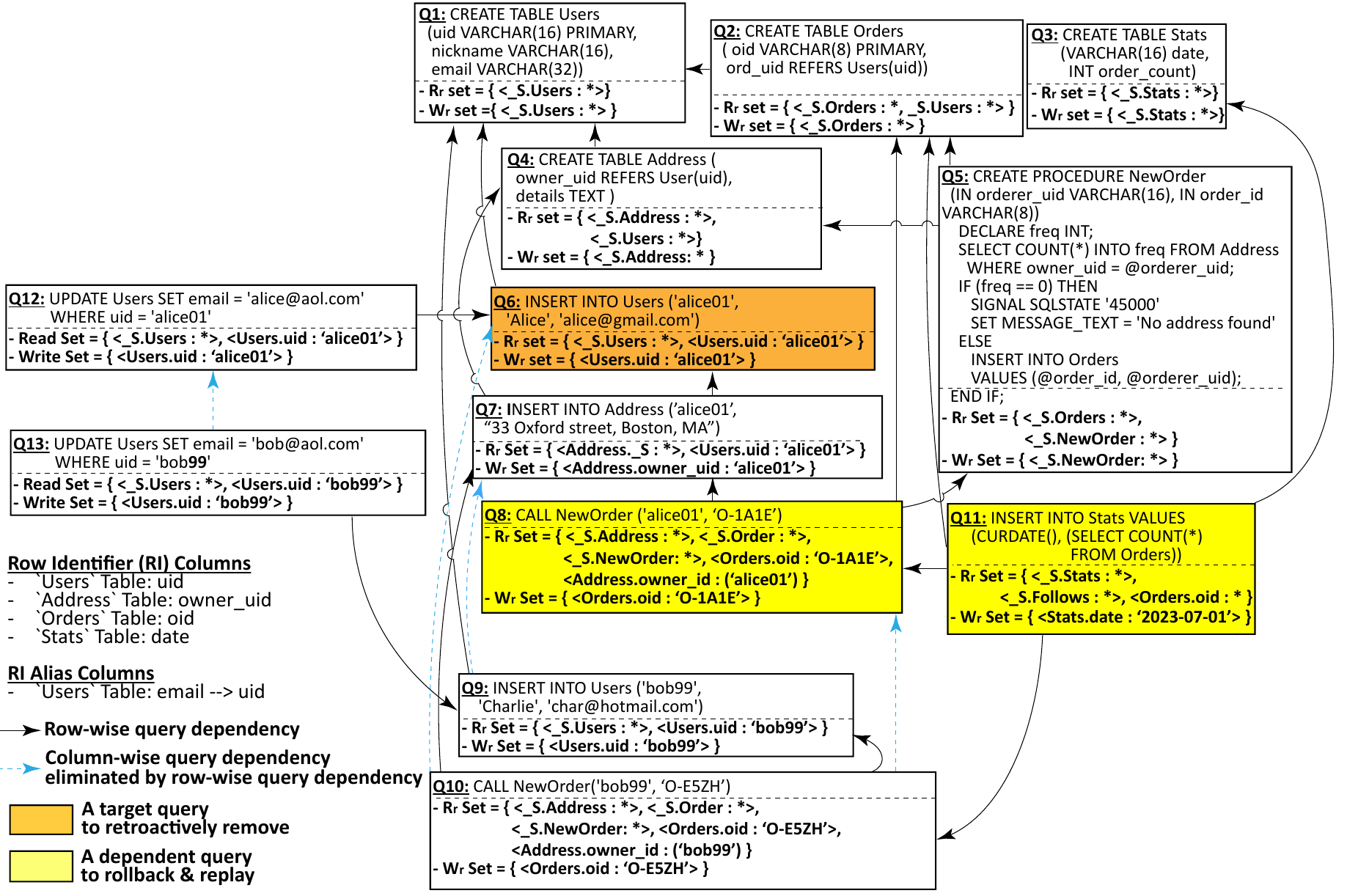}
  \caption{Row-wise query dependency graph for \autoref{fig:dependency-graph-website-columnwise}.}
  \label{fig:dependency-graph-website-rowwise}
\end{figure}

\newpage


\section{Examples of Application-level Dynamism}
\label{appendix:dynamic-application}

\begin{figure*}[h]
\renewcommand\arraystretch{1.2}
   \footnotesize
\centering
\begin{subfigure}[h]{\textwidth}
\begin{footnotesize}
\begin{Verbatim}[frame=single, baselinestretch=0.7, commandchars=\\\{\}]
\textbf{function} dynamic_type \textbf{\textbf{(}}userid, input1, input2, is_string\textbf{\textbf{)}:} 
\{   \textbf{if} \textbf{(}is_string == 1\textbf{)}
       \textbf{SQL_exec(}\textcolor{brown}{`INSERT INTO UserDesc (userid, desc) VALUES (}$\{userid\}\textcolor{brown}{, }$\{input1.concat(input2)\}$\textcolor{brown}{)}`\textbf{)};
    \textbf{else}
       \textbf{SQL_exec(}\textcolor{brown}{`INSERT INTO UserVal (userid, value) VALUES (}$\{userid\}\textcolor{brown}{, }$\{input1 - input2\}$\textcolor{brown}{)}`\textbf{)};
\}
\end{Verbatim}
\end{footnotesize}
\caption{Original application-level transaction.}
\label{fig:dynamic-application-1}
\end{subfigure}
\vspace{1ex}

\begin{subfigure}[h]{\textwidth}
\begin{footnotesize}
\begin{Verbatim}[frame=single, baselinestretch=0.7, commandchars=\\\{\}]
\textbf{function} dynamic_type_augmented \textbf{(}userid, input1, input2, is_string\textbf{)}
\{   Ultraverse_log\textbf{(}`function dynamic_type_augmented (\$\{userid\}, \$\{input1\}, $\{input2\}, $\{is_string\})`\textbf{)};
    \textbf{if} \textbf{(}is_string == 1\textbf{)}
       \textbf{SQL_exec(}\textcolor{brown}{`INSERT INTO UserDesc (userid, desc) VALUES (}$\{userid\}\textcolor{brown}{, }$\{input1.concat(input2)\}$\textcolor{brown}{)}`\textbf{)};
    \textbf{else}
       \textbf{SQL_exec(}\textcolor{brown}{`INSERT INTO UserVal (userid, value) VALUES (}$\{userid\}\textcolor{brown}{, }$\{input1 - input2\}$\textcolor{brown}{)}`\textbf{)};
\}
\end{Verbatim}
\end{footnotesize}
\caption{Augmented application-level transaction.}
\label{fig:dynamic-application-augmented-1}
\end{subfigure}
\vspace{1ex}

\begin{subfigure}[h]{\textwidth}
\begin{footnotesize}
\begin{Verbatim}[frame=single, baselinestretch=0.7, commandchars=\\\{\}]
\textbf{DECLARE PROCEDURE} dynamic_type \textbf{(}\textbf{IN} input1_real \textbf{DOUBLE}, 
  \textbf{IN} input1_str \textbf{VARCHAR}\textbf{(32)}, \textbf{IN} input2_real \textbf{DOUBLE}, 
  \textbf{IN} input2_str \textbf{VARCHAR}\textbf{(32)}, \textbf{IN} is_string_int \textbf{INT}\textbf{)}
\textbf{BEGIN}
   \textbf{IF} \textbf{(}is_string = 1\textbf{)} \textbf{THEN}
      INSERT INTO UserDesc (userid, desc) VALUES (userid, CONCAT(input1_str, input2_str));
   \textbf{ELSE}
      INSERT INTO UserVal (userid, value) VALUES (userid, input1_real - input2_real);
   \textbf{END} \textbf{IF};
\textbf{END}
\end{Verbatim}
\end{footnotesize}
\caption{Transpiled SQL \texttt{PROCEDURE} \textbf{(}with dynamic type resolution\textbf{)}.}
\label{fig:dynamic-application-transpiled-1}
\end{subfigure}

\caption{An example of dynamic type coersion.}
\label{fig:dynamic-code-1}
\end{figure*}

\autoref{fig:dynamic-application-1} shows an example of application-level transaction involving dynamic type coercion, where the input arguments are either a string or numeric type. Their intended type depends on the real-time value of \texttt{is\_string}, and thus it is difficult to determine their type without executing the code. \autoref{fig:dynamic-application-transpiled-1} is a transpiled SQL \texttt{PROCEDURE} that addresses this issue. In \autoref{fig:dynamic-application-transpiled-1}, the arguments cover all possible types of inputs to the transaction (string and double), which are discovered during DSE analysis. This way, the transpiled SQL \texttt{PROCEDURE} projects all possible dynamism of input types in the original application-level transaction into SQL semantics. In addition, notice that the SQL transpiler also generates the augmented application-level transaction (\autoref{fig:dynamic-application-augmented-1}), which is executed instead of \autoref{fig:dynamic-application-1} during the regular operation. The purpose of \autoref{fig:dynamic-application-augmented-1} is to log the type and the input arguments to the application-level transactions (by calling \texttt{Ultraverse\_log}) executed during the regular operation for the purpose of query (or application-level transaction) dependency analysis.

\newpage

\begin{figure*}[t]
\renewcommand\arraystretch{1.2}
   \footnotesize
\centering
\begin{subfigure}[h]{\textwidth}
\begin{footnotesize}
\begin{Verbatim}[frame=single, baselinestretch=0.7, commandchars=\\\{\}]
\textbf{function} dynamic_function_call_augmented \textbf{(}function_name, new_value\textbf{)}
\{   \textbf{return} function_list[function_name]\textbf{(}new_value\textbf{)};
\}
\end{Verbatim}
\end{footnotesize}
\caption{Original application-level transaction.}
\label{fig:dynamic-application-2}
\end{subfigure}
\vspace{1ex}

\begin{subfigure}[h]{\textwidth}
\begin{footnotesize}
\begin{Verbatim}[frame=single, baselinestretch=0.7, commandchars=\\\{\}]
\textbf{function} dynamic_function_call_augmented \textbf{(}function_name, new_value\textbf{)}
\{   Ultraverse_log\textbf{(}`function dynamic_function_call (\$\{function_name\}, $\{new_value\})`\textbf{)};
    \textbf{if} \textbf{(}function_name = \textcolor{brown}{"increment"} || function_name = \textcolor{brown}{"decrement"}\textbf{)}
       \textbf{return} function_list[function_name]\textbf{(}new_value\textbf{)};
    \textbf{else}
       Ultraverse_log\textbf{(}\textcolor{brown}{`[New Exection Path] Found:} 
        \textcolor{brown}{function dynamic_function_call(}\textbf{\$\{}function_name\textbf{\}}\textcolor{brown}{,} \textbf{\$\{}new_value\textbf{\}}\textcolor{brown}{)`}\textbf{)};
\}
\end{Verbatim}
\end{footnotesize}
\caption{Augmented application-level transaction \textbf{(}with delta-updating newly discovered paths\textbf{)}.}
\label{fig:dynamic-application-augmented-2}
\end{subfigure}
\vspace{1ex}

\begin{subfigure}[h]{\textwidth}
\begin{footnotesize}
\begin{Verbatim}[frame=single, baselinestretch=0.7, commandchars=\\\{\}]
\textbf{DECLARE PROCEDURE} dynamic_function_call \textbf{(}\textbf{IN} function_name \textbf{VARCHAR}\textbf{(16)}, \textbf{IN} new_value \textbf{\textbf{IN}T}\textbf{)}
\textbf{BEGIN}
    \textbf{IF} \textbf{(}function_name == \textcolor{brown}{"increment"}\textbf{)} \textbf{THEN}
       \textbf{CALL} increment\textbf{(}new_value, out\textbf{)};
    \textbf{ELIF} \textbf{(}function_name == \textcolor{brown}{"decrement"}\textbf{)} \textbf{THEN}
       \textbf{CALL} decrement\textbf{(}new_value, out\textbf{)};      
    \textbf{END} \textbf{IF};
\textbf{END}
\end{Verbatim}
\end{footnotesize}
\caption{Transpiled SQL \texttt{PROCEDURE}.}
\label{fig:dynamic-application-transpiled-2}
\end{subfigure}

\caption{An example of dynamic control flow targets.}
\label{fig:dynamic-code-2}
\end{figure*}

\autoref{fig:dynamic-application-2} is an example of application-level transaction that involves dynamic control flow targets, where the function name to be called is passed as an input argument. It is difficult to statically determine the target function, because the value of \texttt{function\_name} is resolved dynamically in runtime. \autoref{fig:dynamic-application-transpiled-2} is a transpiled SQL \texttt{PROCEDURE} that addresses this issue. In \autoref{fig:dynamic-application-transpiled-2}, the possible values of the \texttt{function\_name} variable are \texttt{"increment"} and \texttt{"decrement"}, which are discovered during the DSE analysis. Further, \autoref{fig:dynamic-application-augmented-2} has a conditional branch that captures any new function name(s) that is other than the ones (\texttt{"increment"} or \texttt{"decrement"}) discovered during the DSE analysis. If a new value for \texttt{function\_name} is detected during the regular operation, the transaction's input arguments (i.e., \texttt{function\_name} and \texttt{arg}) that lead to triggering this new function gets stashed and used as inputs for delta-DSE analysis, whose newly discovered execution path gets incorporated into the latest transpiled SQL \texttt{PROCEDURE}.

\newpage

\begin{figure*}[t]
\renewcommand\arraystretch{1.2}
   \footnotesize
\centering
\begin{subfigure}[h]{\textwidth}
\begin{footnotesize}
\begin{Verbatim}[frame=single, baselinestretch=0.7, commandchars=\\\{\}]
\textbf{function} external_io \textbf{(}message\textbf{)}
\{   var response = httpRequest.send\textbf{(}message\textbf{)};
    \textbf{if} \textbf{(}response.code == 1\textbf{)}
       \textbf{SQL_exec(}\textcolor{brown}{`INSERT INTO Results (result, log) VALUES ("success", }$\{message\}\textcolor{brown}{)`}\textbf{)};
    \textbf{else}    
       \textbf{SQL_exec(}\textcolor{brown}{`INSERT INTO Results (result, log) VALUES ("fail", }$\{response.error\}\textcolor{brown}{)`}\textbf{)};
\}
\end{Verbatim}
\end{footnotesize}
\caption{Original application-level transaction.}
\label{fig:dynamic-application-3}
\end{subfigure}
\vspace{1ex}

\begin{subfigure}[h]{\textwidth}
\begin{footnotesize}
\begin{Verbatim}[frame=single, baselinestretch=0.7, commandchars=\\\{\}]
\textbf{function} external_io_augmented \textbf{(}message\textbf{)}
\{   Ultraverse_log\textbf{(}\textcolor{brown}{`function external_io(}\textbf{\$\{}message\textbf{\}}\textcolor{brown}{)`}\textbf{)};
    \textbf{var} response = httpRequest.send\textbf{(}message\textbf{)};
    \textbf{if} \textbf{(}response.code == 1\textbf{)}
       \textbf{SQL_exec(}\textcolor{brown}{`INSERT INTO Results (result, log) VALUES ("success", }\textbf{$\{}message\textbf{\}}\textcolor{brown}{)`}\textbf{)};
    \textbf{else}    
       \textbf{SQL_exec(}\textcolor{brown}{`INSERT INTO Results (result, log) VALUES ("fail", }\textbf{$\{}response.error\textbf{\}}\textcolor{brown}{)`}\textbf{)};
\}
\end{Verbatim}
\end{footnotesize}
\caption{Augmented application-level transaction.}
\label{fig:dynamic-application-augmented-3}
\end{subfigure}
\vspace{1ex}

\begin{subfigure}[h]{\textwidth}
\begin{footnotesize}
\begin{Verbatim}[frame=single, baselinestretch=0.7, commandchars=\\\{\}]
\textbf{DECLARE PROCEDURE} external_io \textbf{(}\textbf{IN} message \textbf{VARCHAR}, \textbf{IN} blackbox_symbol_1 \textbf{INT}, 
  \textbf{IN} blackbox_symbol_2 \textbf{VARCHAR}\textbf{)}
\textbf{BEGIN}
   \textbf{IF} \textbf{(}new_symbol_1 == 1\textbf{)} \textbf{THEN}
       INSERT INTO Results (result, log) VALUES (\textcolor{brown}{"success"}, message);
    \textbf{ELSE}    
       INSERT INTO Results (result, log) VALUES (\textcolor{brown}{"fail"}, blackbox_symbol_2);
   \textbf{END} \textbf{IF};
\textbf{END}
\end{Verbatim}
\end{footnotesize}
\caption{Transpiled SQL \texttt{PROCEDURE} \textbf{(}with spawning a new symbol\textbf{)}.} 
\label{fig:dynamic-application-transpiled-3}
\end{subfigure}
\caption{An example of undeterministic blackbox API.}
\label{fig:dynamic-code-3}
\end{figure*}

\autoref{fig:dynamic-application-2} is an example of application-level transaction that involves undeterministic blackbox APIs exchanging messages with an external endpoint. If the external entity is beyond the control of the local \textit{what-if} analyst, it is difficult to statically determine what will be its return value. \autoref{fig:dynamic-application-transpiled-3} is a transpiled SQL \texttt{PROCEDURE} that addresses this issue. In \autoref{fig:dynamic-application-transpiled-3}, the blackbox API's undeterministic return value is treated as a new symbol called \texttt{new\_symbol\_1}. During the \textit{what-if} analysis, the \textit{what-if} analyst can either use the same return value of the blackbox API that was recorded by the augmented application-level transaction (\autoref{fig:dynamic-application-augmented-2}) during the regular operation, or use a different return value to simulate an alternate \textit{what-if} scenario where the remote endpoint responds with a different response code (and a response error).

\newpage

\twocolumn

\section{Benchmarks Analysis by \ultraverse}
\label{appendix:benchmarks}

\subsection{Epinions}
\label{appendix:epinions}

Epinions' dataset and transactions are designed to generate recommendation system networks based on online user reviews. Epinions is comprised of 9 application-level transactions, of which 5 are database-read-only transactions and 4 are database-updating transactions. The transaction dependency graph in \autoref{fig:transaction-dependency-graph-epinions} shows only the later type of transactions.

\begin{figure}[h]
  \centering
  \includegraphics[width=0.5\textwidth]{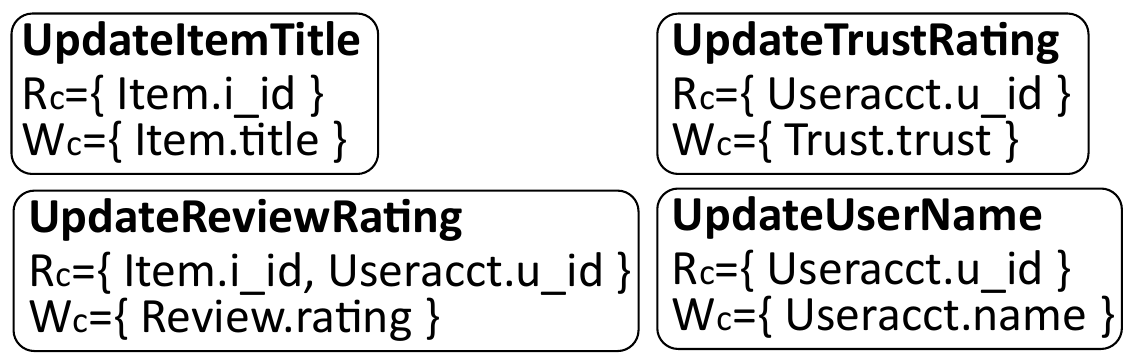}
  \caption{Epinion's column-wise transaction dependency graph (no dependency).}
  \label{fig:transaction-dependency-graph-epinions}
\end{figure}
\noindent The RI (row identification) columns are set as follows:
\begin{itemize}
    \item \texttt{item.i\_id}
    \item \texttt{useracct.u\_id}
    \item \texttt{review.(i\_id, u\_id)}
    \item \texttt{trust.(source\_u\_id, target\_u\_id)}
\end{itemize}

$ $

\noindent No alias columns are set. 

\newpage

\subsection{TATP}
\label{appendix:tatp}

TATP's dataset and transactions are designed for mobile network providers to manage their subscribers. TATP is comprised of 7 application-level transactions, of which 3 are database-read-only transactions and 4 are database-updating transactions. The transaction dependency graph in \autoref{fig:transaction-dependency-graph-tatp} shows only the later type of transactions.

\begin{figure}[h]
  \centering
  \includegraphics[width=1\linewidth]{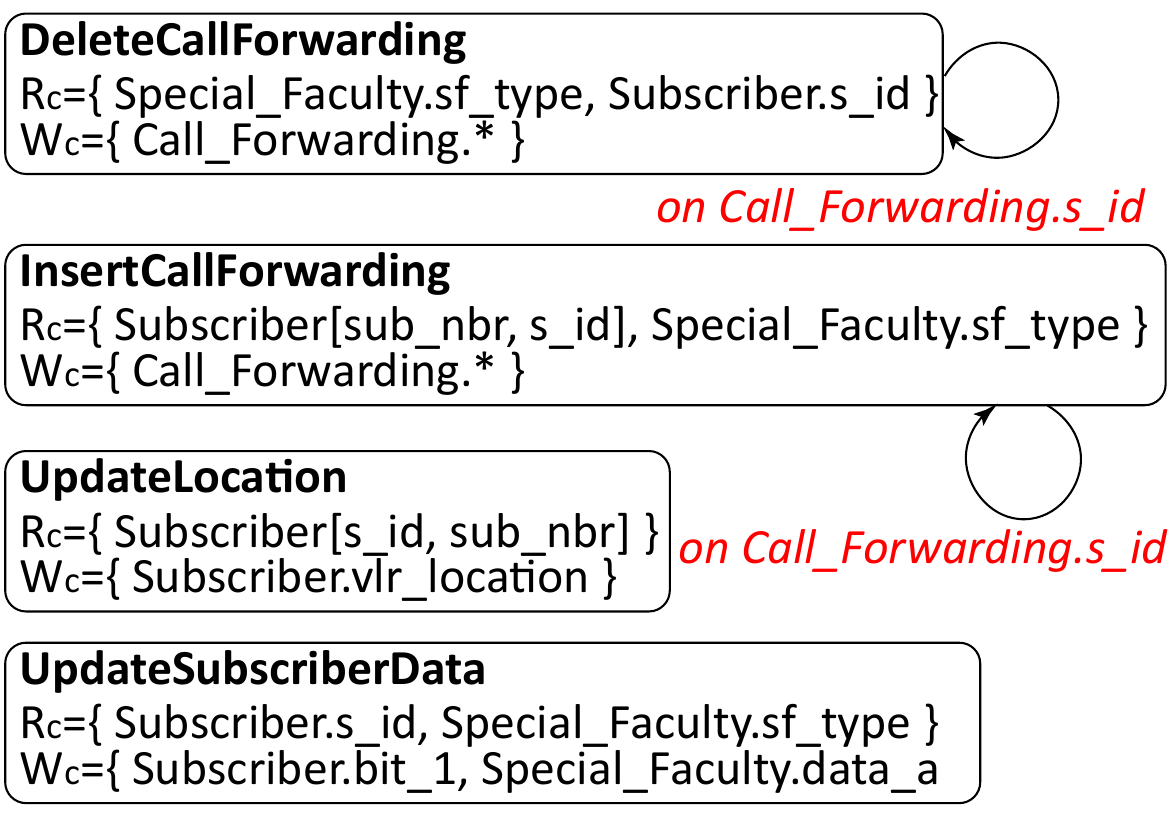}
  \caption{TATP's column-wise1nsaction dependency.}
  \label{fig:transaction-dependency-graph-tatp}
\end{figure}

\noindent The RI columns are set as follows:
\begin{itemize}
    \item \texttt{subscriber.s\_nbr}
    \item \texttt{call\_forwarding.s\_id}
    \item \texttt{special\_facility.s\_id}
\end{itemize}

$ $

\noindent The alias columns are set as follows:
\begin{itemize}
    \item \texttt{subscriber.sub\_nbr} $\rightarrow$ \texttt{subscriber.s\_id}
\end{itemize}

\newpage

\subsection{SEATS}
\label{appendix:seats}

SEAT's dataset and transactions are designed for an online flight ticket reservation system. SEATS is comprised of 7 application-level transactions, of which 3 are database-read-only transactions and 4 are database-updating transactions. The transaction dependency graph in \autoref{fig:transaction-dependency-graph-seats} shows only the later type of transactions.

\begin{figure}[h]
  \centering
  \includegraphics[width=1\linewidth]{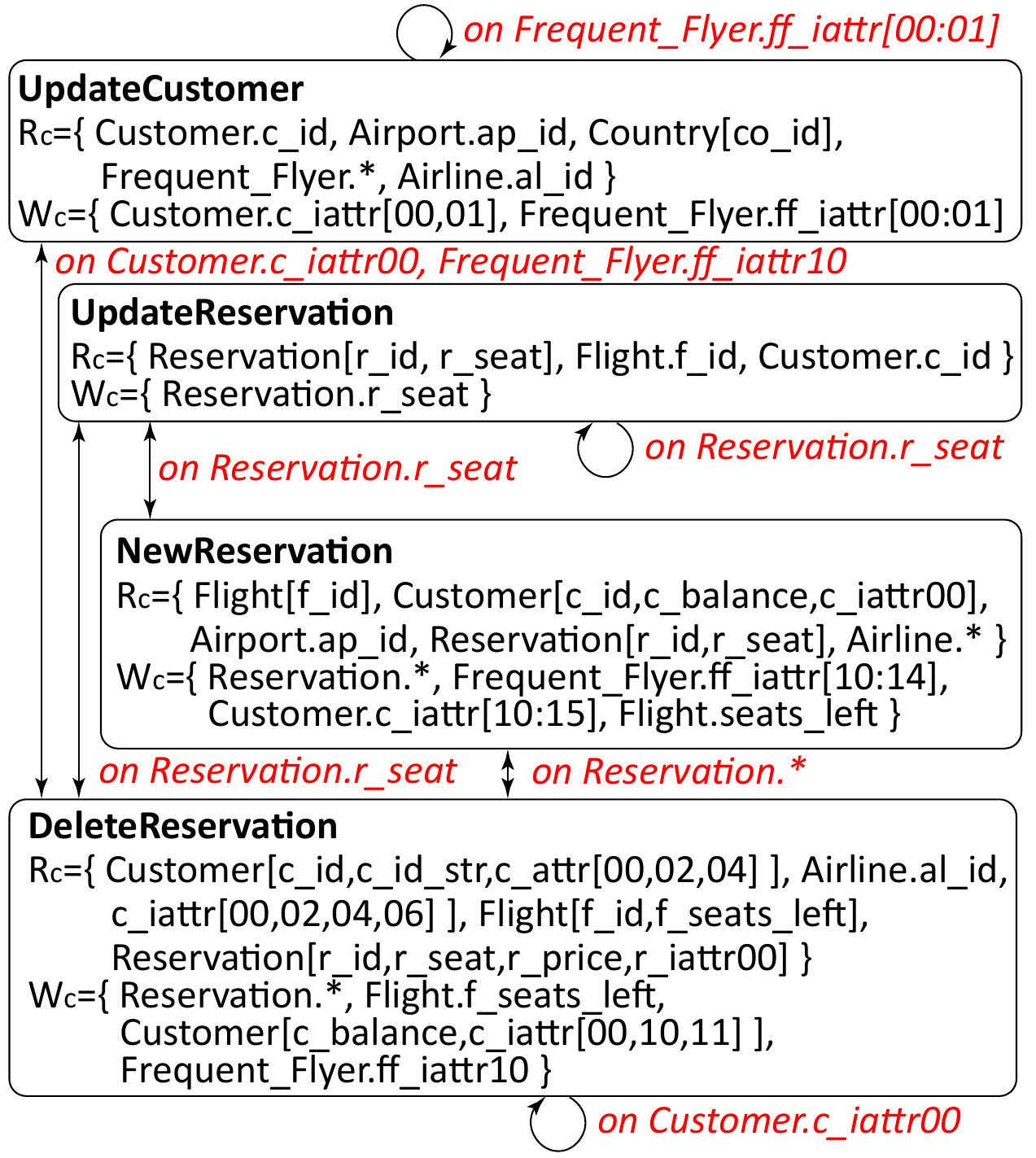}
  \caption{SEAT's column-wise transaction dependency.}
  \label{fig:transaction-dependency-graph-seats}
\end{figure}

\noindent The RI columns are set as follows:
\begin{itemize}
    \item \texttt{customer.C\_ID}
    \item \texttt{flight.F\_ID}
    \item \texttt{frequent\_flyer.FF\_C\_ID}    
    \item \texttt{reservation.(R\_C\_ID, R\_F\_ID)}
    \item \texttt{airport.AP\_ID}
\end{itemize}

$ $

\noindent The alias columns are set as follows:
\begin{itemize}
    \item \texttt{customer.C\_ID\_STR} $\rightarrow$ \texttt{customer.C\_ID}
    \item \texttt{flight.F\_ID} $\rightarrow$ \texttt{flight.F\_AL\_ID}
    \item \texttt{frequent\_flyer.FF\_C\_ID\_STR} $\rightarrow$ \texttt{frequent\_flyer.FF\_C\_ID}
    \item \texttt{airport.AP\_ID} $\rightarrow$ \texttt{airport.AP\_CO\_ID}
\end{itemize}

\newpage

\subsection{TPC-C}
\label{appendix:tpcc}

TPC-C's dataset and transactions are designed to manage product orders shippings for online users in an e-commercial service. TPC-C is comprised of 5 application-level transactions, of which 2 are database-read-only transactions and 3 are database-updating transactions. The transaction dependency graph in \autoref{fig:transaction-dependency-graph-tpcc} shows only the later type of transactions.

\begin{figure}[h]
  \centering
  \includegraphics[width=1\linewidth]{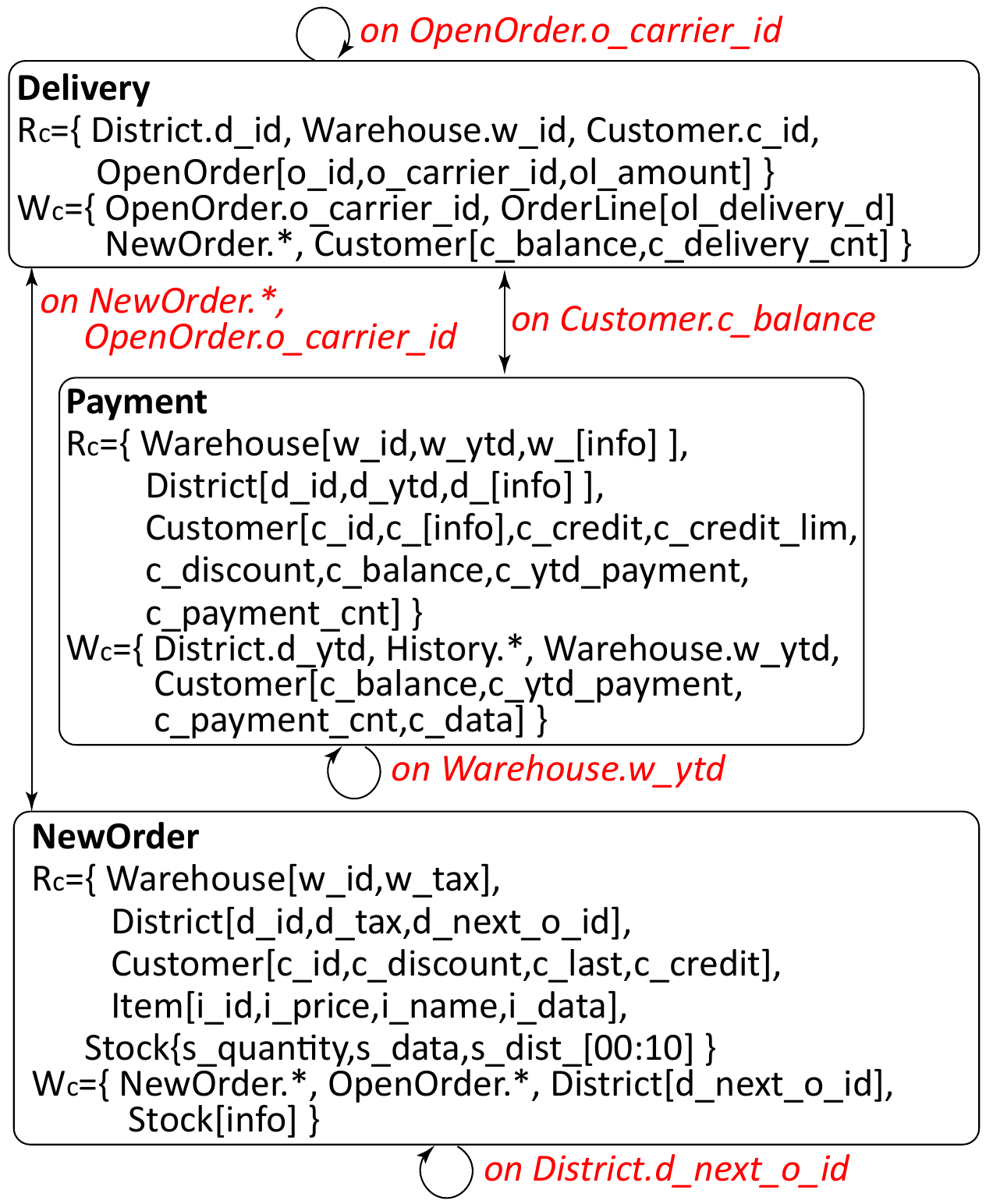}
  \caption{TPC-C's column-wise transaction dependency.}
  \label{fig:transaction-dependency-graph-tpcc}
\end{figure}

\noindent The RI columns are set as follows:
\begin{itemize}
    \item \texttt{warehouse.W\_ID}
    \item \texttt{customer.C\_ID}
    \item \texttt{stocks.S\_W\_ID}
    \item \texttt{order\_line.OL\_W\_ID}
    \item \texttt{district.D\_W\_ID}
    \item \texttt{order.O\_W\_ID}
    \item \texttt{history.H\_C\_W\_ID}
    \item \texttt{item.I\_ID}
\end{itemize}

$ $

\noindent No alias columns are set. 

\newpage

\subsection{AStore}
\label{appendix:astore}

AStore is a ExpressJS-based e-commerce web application where users can purchase products and get the items shipped to their shipping address. AStore is comprised of 61 application-level transactions, of which 41 are database-read-only transactions and 20 transactions are database-updating transactions. The transaction dependency graph in \autoref{fig:transaction-dependency-graph-astore} shows only the later type of transactions.

\begin{figure}[h]
  \centering
  \includegraphics[width=1\linewidth]{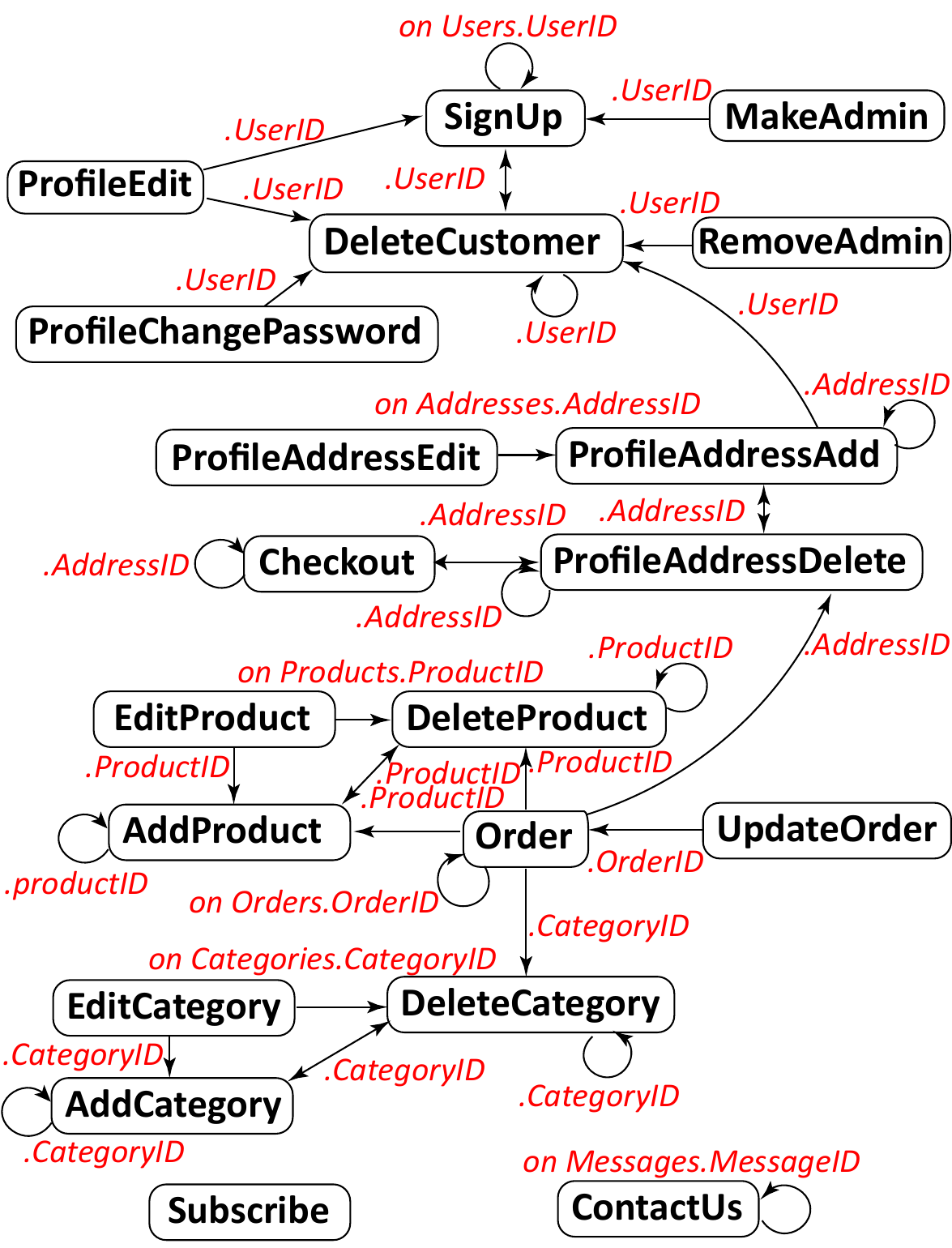}
  \caption{AStore's column-wise transaction dependency.}
  \label{fig:transaction-dependency-graph-astore}
\end{figure}
\noindent The RI columns are set as follows:
\begin{itemize}
    \item \texttt{Users.UserID}
    \item \texttt{Addresses.AddressID}
    \item \texttt{Categories.CategoryID}
    \item \texttt{Products.ProductID}
    \item \texttt{Orders.OrderID}
    \item \texttt{OrderDetails.OrderID}
    \item \texttt{Messages.MessageID}
    \item \texttt{Subscribers.Email}
\end{itemize}

$ $

\noindent No alias columns are set.

\newpage

\newpage

\newpage

\section{Formal Analysis of \ultraverse's Retroactive Operation Techniques}
\label{appendix:analysis}

The formal definition of a retroactive operation is as follows:

\begin{definition}[Retroactive Operation]
Let $\mathbb{D}$ a database and $\mathbb{Q}$ a set of all committed queries $Q_1, Q_2, \cdots Q_{|\mathbb{Q}|}$ where the subscript represents the query's index (i.e., commit order). Let $\mathbb{Q}_{ \langle i, j \rangle}$ be a subset of $\mathbb{Q}$ that contains from i-th to j-th queries in $\mathbb{Q}$, that is $\{Q_i, Q_{i+1}, \cdots, Q_j \}$ (where $i \leq j$). Let $\psi$ be the last query's commit order in $\mathbb{Q}$ (i.e., $|\mathbb{Q}|$). Let $\mathcal{M}: \mathbb{D}, \mathbb{Q} \rightarrow \mathbb{D'}$ be a function that accepts an input database $\mathbb{D}$ and a set of queries $\mathbb{Q}$, executes queries in $\mathbb{Q}$ in ascending order of query indices, and outputs a resulting database $\mathbb{D'}$. Let $\mathcal{M}^{-1}: \mathbb{D}, \mathbb{Q} \rightarrow \mathbb{D'}$ be a function that accepts an input database $\mathbb{D}$ and a set of queries $\mathbb{Q}$, rolls back queries in $\mathbb{Q}$ in descending order of query indices, and outputs a resulting database $\mathbb{D'}$. Given a database $\mathbb{D}$ and a set of committed queries $\mathbb{Q}$, a retroactive operation with a target query $Q'_\tau$ is defined to be a transformation of $\text{ }\mathbb{D}$ to a new state that matches the one generated by the following procedure: 
\begin{enumerate}
\setlength{\itemindent}{1.2em}
    \item Roll back $\mathbb{D}$'s state to commit index $\tau - 1$ by computing $\mathbb{D} := \mathcal{M}^{-1}(\mathbb{D}, \mathbb{Q}_{\langle \tau, \psi \rangle})$.
    \item Depending on the database user's command, do one of the following retroactive operations: 
    \begin{itemize}
        \item In case of retroactively adding $Q'_\tau$, newly execute $Q'_\tau$ by computing $\mathbb{D} := \mathcal{M}(\mathbb{D}, Q'_\tau)$, and then replay all subsequent queries by computing $\mathbb{D} := \mathcal{M}(\mathbb{D}, \mathbb{Q}_{\langle \tau, \psi \rangle})$.
        \item In case of retroactively removing $Q_\tau$, skip replaying $Q_\tau$, and replay all subsequent queries by computing $\mathbb{D} := \mathcal{M}(\mathbb{D}, \mathbb{Q}_{\langle \tau + 1, \psi \rangle})$.
        \item In case of retroactively changing $Q_\tau$ to $Q'_\tau$, newly execute $Q'_\tau$ by computing $\mathbb{D} := \mathcal{M}(\mathbb{D}, Q'_\tau)$, and replay all subsequent queries by computing $\mathbb{D} := \mathcal{M}(\mathbb{D}, \mathbb{Q}_{\langle \tau + 1, \psi \rangle})$.
    \end{itemize}
\end{enumerate}
\end{definition}

The goal of \ultraverse's query analysis is to reduce the number of queries to be rolled back and replayed for a retroactive operation, while preserving its correctness.

\begin{setup}[\ultraverse's Query Analysis] \textcolor{white}{.}

\begin{tabular}{ll}
\textbf{Input}&: $\mathbb{D, Q}, \langle Q'_\tau, \mathit{add | remove | change} \rangle$.\\
\textbf{Output}&: A subset of \textcolor{white}{.}$ \mathbb{Q} $ to be rolled back and replayed.
\end{tabular}
\label{setup:setup}
\end{setup}

Setup~\ref{setup:setup} describes the input and output of \ultraverse's query analysis. The input is $\mathbb{D}$ (a database), $\mathbb{Q}$ (a set of all committed queries), $Q'_\tau$ (a retroactive target query to be added or changed to), and the type of retroactive operation on $Q'_\tau$ (i.e., add, remove, or change it). Note that in case of retroactive removal of the query at the commit index $\tau$, the retroactive target query $Q'_\tau$ in the \textbf{\textit{Input}} is $Q_\tau$. The output is a subset of $\mathbb{Q}$. Rolling back and replaying the output queries result in a correct retroactive operation.

\ultraverse's query analysis is comprised of two components: column-wise query dependency analysis and row-wise query dependency analysis. We will first describe column-wise query dependency analysis and then extend to row-wise query dependency analysis. To show the correctness of performing retroactive operations using query analysis, we first assume that the retroactive operation is either adding or removing a query, and address the case of retroactively changing a query at the end of this section. 

\subsection{Column-wise Query Dependency Analysis}
\label{appendix:columnwise-dependency-analysis}

\begin{terminology}[Query Dependency Analysis] \textcolor{white}{.}

\begin{tabular}{@{}l@{}l}
$\mathbb{D}$&: A given database\\
$\mathbb{Q}$&: A set of all committed queries in $\mathbb{D}$\\
$\mathbb{Q}_{ \langle i, j \rangle}$&: A subset of $\mathbb{Q}$ from the i-th to j-th queries\\
$\bm{Q_n}$&: a query with index $n$ in $\mathbb{Q}$\\
$\bm{\tau}$&: a retroactive target query's index in $\mathbb{Q}$\\
$\bm{Q'_\tau}$&: The retroactive target query to add or change\\
$\bm{Q_n}\triangleright\bm{T}$&: $\bm{Q_n}$ is a \texttt{"CREATE/DROP TRIGGER"} query\\
$\bm{Q_n}\triangleright\bm{P}$&: $\bm{Q_n}$ is a \texttt{"CREATE/DROP PROCEDURE"} query\\
$\bm{R_c(Q_n)}$&: $Q_n$'s column-wise read set\\
$\bm{W_c(Q_n)}$&: $Q_n$'s column-wise write set\\
$\bm{c}$&: a table's column\\
$\bm{Q_n} \rightarrow \bm{Q_m}$&: $Q_n$ depends on $\bm{Q_m}$ \\
$\bm{Q_m} \curvearrowright \bm{Q_n}$&: $Q_m$ is an influencer of $Q_n$\\
\end{tabular}
\label{not:notation1}
\end{terminology}

\begin{definition}[Read/Write Set]
A query $Q_i$'s read set is the set of column(s) that $Q_i$ operates on with read access. $Q_i$'s write set is the set of column(s) that $Q_i$ operates on with write access. 
\end{definition}

For each type of SQL statements, its read \& write sets are determined according to the policies described in \autoref{tab:set-policy}.

Loosely speaking, given $\mathbb{D}$ and $\mathbb{Q}$, we define that $Q_i$ depends on $Q_j$ if some retroactive operation on $Q_i$ \textit{could} change the result of $Q_j$ (i.e., $Q_j$'s return value or the state of the resulting table that $Q_j$ writes to). In this section, when we say query dependency, it always implies column-wise query dependency (discussed in \autoref{subsec:columnwise-dependency-analysis}). We present the formal definition of query dependency in Definition~\ref{def-dependency}.

\begin{definition}[Query Dependency]
Given a database $\mathbb{D}$ and a set of all committed queries $\mathbb{Q}$, one query depends on another if they satisfy Proposition~\ref{prop-dependency-rule1} or \ref{prop-dependency-rule2}.
\label{def-dependency}
\end{definition}

\begin{proposition}
$\exists c ((c \in W_c(Q_m)) \wedge (c \in R_c(Q_n))) \wedge (m < n) \Longrightarrow Q_n \rightarrow Q_m
$
\label{prop-dependency-rule1}
\end{proposition}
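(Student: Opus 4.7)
The plan is to prove that the syntactic condition of Proposition~\ref{prop-dependency-rule1} is sound with respect to the semantic notion of dependency set out in Definition~\ref{def-dependency}: whenever a later query $Q_n$ shares a column $c$ with the write set of an earlier query $Q_m$, there exists a retroactive modification of $Q_m$ whose cascaded replay could change $Q_n$'s return value or the table state that $Q_n$ writes. Concretely, I want to exhibit a witness $Q'_m$ and trace how the perturbation at column $c$ propagates into $Q_n$'s evaluation.

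The first step is to fix the witness column $c$ guaranteed by the hypothesis and construct a perturbing variant $Q'_m$ of $Q_m$. Since $c \in W(Q_m)$, I can choose $Q'_m$ to be syntactically identical to $Q_m$ except that the value assigned to $c$ is altered (for an \texttt{UPDATE}/\texttt{INSERT}, change a literal in the \texttt{SET}/\texttt{VALUES} clause; for a \texttt{DELETE} treated as a virtual write, adjust its \texttt{WHERE} clause so that one additional row is retained). By the read/write policy of \autoref{tab:set-policy}, $Q'_m$ has the same $R/W$ sets as $Q_m$, and the database state immediately after $Q'_m$ differs from the state after $Q_m$ on at least one cell of column $c$.

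The second step is a case split on why $c \in R(Q_n) \cup W(Q_n)$. In the read case $c \in R(Q_n)$, the evaluation of $Q_n$ directly reads the perturbed value, so $Q_n$'s output (either its return value or the tuples it writes) may differ from the original execution; this is a direct witness of semantic dependency. In the write case $c \in W(Q_n)$, I rely on the conservative over-approximation discussed in \autoref{subsec:dependency}: SQL permits update expressions such as ``\texttt{UPDATE T SET c = c + 1}'' that implicitly re-read $c$ before writing, and since \ultraverse performs no deeper static analysis to rule such reads out, it must treat $Q_n$'s write as potentially dependent on $c$'s prior value. Thus a change in $Q_m$'s write to $c$ can be observed in $Q_n$'s write as well.

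The main obstacle I anticipate is the pure-overwrite sub-case of write--write, where $Q_n$ deterministically assigns $c$ a constant independent of its previous value. For such a query the dependency declared by the rule is spurious in a purely semantic sense. I would address this by framing Proposition~\ref{prop-dependency-rule1} as a soundness (rather than tightness) claim: the rule is justified by the paper's explicit choice to over-approximate $R/W$ sets in the presence of conditional branches, triggers, and \texttt{PROCEDURE}s where a read-before-write may or may not occur, so that correctness is preserved at the cost of a potentially larger dependency graph. A secondary technicality is that intermediate queries $Q_{m+1}, \ldots, Q_{n-1}$ might overwrite $c$ and mask the perturbation before it reaches $Q_n$; this does not affect the statement of Proposition~\ref{prop-dependency-rule1} itself, but it is precisely the phenomenon captured by Proposition~\ref{prop-dependency-rule2}, whose transitive composition with the rule here gives the full account of chained propagation.
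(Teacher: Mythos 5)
There is a genuine mismatch between what you set out to prove and what the paper actually does with this statement. In the paper, Proposition~\ref{prop-dependency-rule1} is never proved: Definition~\ref{def-dependency} \emph{defines} the relation $Q_n \rightarrow Q_m$ as holding exactly when one of Propositions~\ref{prop-dependency-rule1}--\ref{prop-dependency-rule4} is satisfied, so the implication you were asked about is a defining clause of the dependency relation, accompanied only by an informal gloss (``captures the cases where two queries operate on the same column\dots''). The substantive proof obligation lives in Theorem~\ref{theorem-dependency}, and the direction it needs is the opposite of the one you argue: the induction there requires that every query \emph{excluded} from $\mathbb{I}$ provably produces the same results as before the retroactive operation, i.e., that the syntactic rules over-approximate true dependence. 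Nothing in the paper's argument consumes the claim that a syntactically flagged pair is genuinely (semantically) dependent.

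Your proposal instead tries to prove semantic soundness -- that whenever the column condition holds there exists a perturbation of $Q_m$ whose effect could reach $Q_n$ -- and your own text exposes why that literal claim cannot be established. In the write--write pure-overwrite sub-case (e.g., $Q_n$ assigns $c$ a constant independent of its prior value) the semantic dependency is, as you say, spurious; your repair of ``reframing the rule as a conservative over-approximation'' is not a proof of the implication under your semantic reading but a retreat to the paper's definitional reading, under which there is nothing left to prove. The same problem infects the read case: the intermediate queries $Q_{m+1},\ldots,Q_{n-1}$ that you defer to Proposition~\ref{prop-dependency-rule2} can deterministically overwrite $c$ before $Q_n$ reads it, so your witness perturbation need not reach $Q_n$ at all, and the pairwise ``could change $Q_n$'s result'' claim fails there too. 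So the attempt is misdirected: on the paper's reading the statement is definitional and requires no witness construction, and on your reading the statement as written is false in general, which is precisely why the paper frames correctness as conservative over-approximation established at the level of Theorem~\ref{theorem-dependency} rather than rule-by-rule semantic soundness.
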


Proposition~\ref{prop-dependency-rule1} states that if $Q_n$ reads the table/view's column after $Q_m$ writes to it, then $Q_n$ depends on $Q_m$. Proposition~\ref{prop-dependency-rule1} captures the cases where one query reads the same column that was retroactively modified by some prior query, which can potentially change the result of the latter query due to the changed state of the common column they access. column's state that the later query accesses. 

\begin{proposition}
$(Q_n \rightarrow Q_m) \wedge (Q_m \rightarrow Q_l) \Longrightarrow Q_n \rightarrow Q_l$
\label{prop-dependency-rule2}
\end{proposition}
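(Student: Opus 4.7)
My plan is to prove transitivity by instantiating Proposition~\ref{prop-dependency-rule1} on each hypothesis to obtain syntactic witnesses, and then composing the two semantic data-flow effects into a single chain from $Q_l$ to $Q_n$. From $Q_m \rightarrow Q_l$ I would extract a column $c_2$ with $c_2 \in W(Q_l) \cap (R(Q_m) \cup W(Q_m))$ and $l < m$, so that any retroactive modification of $Q_l$ can alter the value of $c_2$ that is live when $Q_m$ executes. From $Q_n \rightarrow Q_m$ I would extract a column $c_1$ with $c_1 \in W(Q_m) \cap (R(Q_n) \cup W(Q_n))$ and $m < n$. The chain then reads: modifying $Q_l$ can change $c_2$, which can change the value $Q_m$ writes to $c_1$, which can change what $Q_n$ subsequently reads or writes via $c_1$, hence can change $Q_n$'s result. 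Since $l < m < n$, the ordering needed by the underlying semantics is automatic.

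The crux is the single propagation link across $Q_m$: that a perturbation of $Q_m$'s input on $c_2$ can induce a perturbation of $Q_m$'s output on $c_1$. Because Definition~\ref{def-dependency} is existential --- it asks whether \emph{some} retroactive operation could change the downstream result --- it suffices to exhibit one concrete witness: a database state, a modification of $Q_l$, and a semantics for $Q_m$ in which the flow $c_2 \to c_1$ is observably nontrivial. The $R/W$-set framework summarised in Table~\ref{tab:set-policy} treats such flows as potential by construction, so furnishing a single such witness is unobstructed and implies the semantic characterisation of $Q_n \rightarrow Q_l$.

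The step I expect to be the main obstacle is handling intervening queries committed between $Q_m$ and $Q_n$ that overwrite $c_1$ before $Q_n$ observes it, which would syntactically mask the propagation. I would address this by appealing to the over-approximating nature of the dependency relation and by a short induction on the number of committed queries between $Q_m$ and $Q_n$. In the base case $Q_n$ is the first query after $Q_m$ to touch $c_1$, so propagation is direct. In the inductive step, any intervening $Q_k$ writing $c_1$ satisfies both $Q_k \rightarrow Q_m$ (by Proposition~\ref{prop-dependency-rule1}, since $c_1 \in W(Q_m) \cap R(Q_k)$ whenever $Q_k$ actually reads $c_1$, and otherwise $Q_k$ is itself a new first link that inherits the dependency via $W(Q_k)$) and $Q_n \rightarrow Q_k$, allowing the induction hypothesis to carry the chain across. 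Concluding, the semantic characterisation of dependency in Definition~\ref{def-dependency} yields $Q_n \rightarrow Q_l$, establishing Proposition~\ref{prop-dependency-rule2}.
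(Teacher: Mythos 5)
There is a genuine mismatch here: the paper never proves Proposition~\ref{prop-dependency-rule2}, because it is not a theorem in the paper's formalization. Definition~\ref{def-dependency} \emph{defines} the dependency relation as the disjunction of Propositions~\ref{prop-dependency-rule1}--\ref{prop-dependency-rule4}; the sentence about ``some retroactive operation could change the result'' is only the informal motivation (``loosely speaking''), not the formal definition. Under the formal definition, transitivity holds by fiat -- it is one of the defining clauses, justified in the text only by the informal remark that $Q_m$ acts as a data-flow bridge -- and the actual correctness burden is discharged later, in Theorem~\ref{theorem-dependency}, which shows that replaying the syntactically closed set $\mathbb{I}$ reproduces the exhaustive rollback-and-replay result. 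Your proposal instead treats the proposition as a semantic claim to be derived, which is not what the paper asserts and is not needed for its argument.

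Moreover, read as a semantic claim your argument does not go through. The pivotal step -- that a perturbation of $Q_m$'s input on $c_2$ induces a perturbation of its output on $c_1$ -- cannot be established by ``exhibiting a semantics for $Q_m$ in which the flow $c_2 \to c_1$ is observably nontrivial'': $Q_m$ is a fixed committed query, and its write to $c_1$ may be entirely insensitive to $c_2$ (e.g., it writes a constant), in which case no retroactive change to $Q_l$ affects $Q_n$ through this chain. This is precisely why the relation is an over-approximation defined syntactically from $R/W$ sets rather than a semantic dependence; the semantic version of transitivity can simply fail, and the paper never claims it. Two further gaps: you extract witnesses for both hypotheses via Proposition~\ref{prop-dependency-rule1}, but a hypothesis may itself hold only via the trigger rules (Propositions~\ref{prop-dependency-rule3}, \ref{prop-dependency-rule4}) or via transitivity, so your case analysis is incomplete without an induction over derivations; and your induction over ``masking'' intermediate writers addresses a semantic concern that the syntactic rule, as defined, is indifferent to. The short answer is that the correct treatment is the paper's: take the rule as part of the definition of $\rightarrow$ and prove soundness of the resulting replay set once, in Theorem~\ref{theorem-dependency}.
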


Proposition~\ref{prop-dependency-rule2} states that if $Q_n$ depends on $Q_m$ and $Q_m$ depends $O_l$, then $Q_n$ also depends on $Q_l$ (transitivity). Proposition~\ref{prop-dependency-rule2} captures the cases where two queries, $Q_n$ and $Q_l$, do not operate on the same column, but there exists some intermediate query $Q_m$ which operates on some same column as each of $Q_n$ and $Q_l$. In such cases, $Q_m$ acts as a data flow bridge between $Q_l$'s column and $Q_n$'s column, and therefore, a retroactive operation on $Q_l$ could change the column's state that $Q_n$ accesses. Therefore, we regards that $Q_n$ depends on $Q_l$ transitively.

\begin{definition}[$\mathbb{I}$]
$\mathbb{I}$ is an intermediate set of all queries that are selected to be rolled back and replayed for a retroactive target query $Q'_\tau$.
\label{def-intermediate-set}
\end{definition}

We define the $\mathbb{I}$ set for three purposes. First, we add the queries dependent on the retroactive target query $Q'_\tau$ to $\mathbb{I}$, as candidate queries to be rolled back and replayed. Second, we further add more queries that need to be rolled back and replayed in order to replay \textit{consulted table(s)} (discussed in \autoref{subsec:rollback-replay}). Third, we remove those queries that do not row-wies depend on the retroactive target query $Q'_\tau$ (discussed in \autoref{subsec:rowwise-analysis}).  

\begin{proposition}
$(Q_i \rightarrow Q'_\tau) \wedge (W_c(Q_i) \neq \emptyset) \Longrightarrow Q_i \in \mathbb{I}$.
\label{prop-i-set}
\end{proposition}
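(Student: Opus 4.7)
The plan is to treat Proposition~\ref{prop-i-set} as a necessity claim about the replay set: any query satisfying $(Q_i \rightarrow Q'_\tau) \wedge (W(Q_i) \neq \emptyset)$ must be included in $\mathbb{I}$, because otherwise the final database state after \ultraverse's optimized rollback/replay would diverge from the naive baseline in which every $Q_\tau, \ldots, Q_{|\mathbb{Q}|}$ is rolled back and re-executed. I would prove this by contradiction: assume such a $Q_i$ is omitted from $\mathbb{I}$, and exhibit a column of some table on which the two final database states disagree.

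First I would unwind the dependency $Q_i \rightarrow Q'_\tau$ using Definition~\ref{def-dependency}. If the dependency arises from Proposition~\ref{prop-dependency-rule1} directly, then there is some column $c$ with $c \in W(Q'_\tau) \cap (R(Q_i) \cup W(Q_i))$ and $Q'_\tau$ committed before $Q_i$. A retroactive add/remove of $Q'_\tau$ can therefore change the value of $c$ visible to $Q_i$ at its replay time. Since $W(Q_i) \neq \emptyset$, replaying $Q_i$ on the updated $c$ can produce a different value in some column $c' \in W(Q_i)$ than the value recorded at the original commit, whereas skipping $Q_i$ leaves $c'$ at its pre-retroactive value. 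This mismatch violates equivalence with the naive baseline, yielding the contradiction. If the dependency instead arises transitively via Proposition~\ref{prop-dependency-rule2}, I would induct on the length of the $\rightarrow$-chain from $Q_i$ to $Q'_\tau$, applying the base-case argument at each link to show that divergence propagates along the chain and ultimately reaches $W(Q_i)$. The trigger cases (Propositions~\ref{prop-dependency-rule3} and~\ref{prop-dependency-rule4}) reduce to the same column-level argument once we treat the trigger body's $R/W$ sets as part of each linked query.

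The hard part will be pinning down precisely \emph{what} ``could produce a different value'' means given non-deterministic SQL functions and the fact that \ultraverse records and replays concrete return values of $\mathtt{CURTIME()}$, $\mathtt{RAND()}$, etc. I would formalize this as: for each query $Q$, its resulting writes are a deterministic function of the current database state, the query statement, and the recorded non-determinism log. Under this formalization, if the state on which $Q_i$ executes differs between the baseline and \ultraverse's run on any column in $R(Q_i) \cup W(Q_i)$, then the values $Q_i$ writes can differ, and hence skipping $Q_i$'s replay cannot in general reproduce the baseline. A secondary subtlety is empty read sets together with non-empty write sets (e.g., constant \texttt{INSERT}): here Proposition~\ref{prop-dependency-rule1} still fires through the write-write case, so the argument goes through unchanged.

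Finally I would note the scope of what this proposition alone guarantees: it establishes only the first (necessary) population rule for $\mathbb{I}$. Sufficiency of the overall scheme requires the subsequent propositions covering consulted tables (\autoref{subsec:rollback-replay}), trigger reactivation via Propositions~\ref{prop-dependency-rule3}--\ref{prop-dependency-rule4}, and the row-wise refinement of \autoref{subsec:query-clustering}. My proof plan therefore concludes by stating that the current proposition is proved in isolation as a necessary inclusion rule, with sufficiency deferred to a combined correctness theorem that aggregates all rules populating $\mathbb{I}$ and discharges the remaining obligations (read-only queries are safe to skip because $W(Q_i) = \emptyset$ implies no column is altered, and out-of-cluster queries are safe to skip by the row-wise independency rule of Table~\ref{tab:dependency-rule2}).
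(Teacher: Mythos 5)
You have treated Proposition~\ref{prop-i-set} as a theorem asserting that including $Q_i$ in $\mathbb{I}$ is \emph{necessary} for correctness, and you attempt a proof by contradiction. In the paper, however, this proposition is not proved at all: it is a population rule, part of the constructive specification of how $\mathbb{I}$ is built (the ``first purpose'' of $\mathbb{I}$), exactly parallel to Propositions~\ref{prop-dependency-rule1}--\ref{prop-dependency-rule4} and Proposition~\ref{prop-influencer}. The entire correctness burden is carried by Theorem~\ref{theorem-dependency}, which proves the opposite direction --- \emph{sufficiency}: rolling back and replaying only the queries placed in $\mathbb{I}$ by these rules reproduces the state of the exhaustive rollback-and-replay baseline. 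So your plan directs its effort at a claim the paper neither states nor needs for its argument.

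Beyond the mismatch of scope, the necessity claim you target is not provable in this framework, and your key step does not establish it. The column-wise $R/W$ sets are deliberately conservative (the paper explicitly accepts over-estimation, e.g.\ for branches in procedures and triggers), so $Q_i \rightarrow Q'_\tau$ only means the result of $Q_i$ \emph{could} change; your own argument says the replayed $Q_i$ ``can produce a different value,'' which is a possibility rather than a guaranteed divergence, so omitting $Q_i$ yields no contradiction. Concretely, the paper itself later drops queries that satisfy the premise of Proposition~\ref{prop-i-set}: Theorem~\ref{theorem-retroactive} replays only the co-clustered subset $\mathbb{I}_K \subseteq \mathbb{I}$, and Hash-jumper may terminate the replay early, yet correctness is preserved --- which directly refutes strict necessity of membership in the replay set for every column-wise dependent, writing query. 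The route consistent with the paper is to take Proposition~\ref{prop-i-set} as a definition of the candidate set and concentrate the proof on Theorem~\ref{theorem-dependency}'s induction: every query outside $\mathbb{I}$ neither depends on any query in $\mathbb{I}$ nor feeds (via the influencer rule, Proposition~\ref{prop-influencer} with Definition~\ref{def-influencer}) any query in $\mathbb{I}$, hence its results are unchanged and it is safe to skip.
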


Proposition~\ref{prop-i-set} states if $Q_i$ depends on the retroactive target query $Q'_\tau$ and $Q_i$'s write set is not empty, then $Q_i$ is added to $\mathbb{I}$ (we do not rollback and replay if $Q_i$'s write set is empty, because a read-only query does not change the database's state). Proposition~\ref{prop-i-set} presents our first purpose of using $\mathbb{I}$.

To find the queries to be rolled back and replayed in order to replay \textit{consulted table(s)}, we introduce a new term, \emph{read-then-writer}.

\begin{definition}[Read-then-Writer] 
$\exists c, f ((c \in R_c(Q_i)) \wedge (f = \argmin_{j} ( (j > i) \wedge ( c \in W_c(Q_j) ) ) ) )
\Longrightarrow Q_f	\curvearrowright_c Q_i$
\label{def-read-then-writer}
\end{definition}

Definition~\ref{def-read-then-writer} states that if $Q_j$ is the first query to write to column $c$ after $Q_j$ reads it, then $Q_f$ is defined to be the read-then-writer of $Q_i$ on column $c$.

\begin{proposition}
$\exists i,f ((Q_i \in \mathbb{I}) \wedge (Q_f \curvearrowright_c Q_i)) \Longrightarrow Q_f \in \mathbb{I}$
\label{prop-read-then-writer1}
\end{proposition}

\begin{proposition}
$\exists i,j,c ( (i < j) \wedge (Q_i \in \mathbb{I}) \wedge (c \in R_c(Q_i)) \wedge (c \in W_c(Q_j)) \Longrightarrow Q_j \in \mathbb{I})$
\label{prop-read-then-writer2}
\end{proposition}

Proposition~\ref{prop-read-then-writer1} states that if $Q_i$ is a read-then-writer of some replay query in $\mathbb{I}$, then $Q_i$ is added to $\mathbb{I}$. 

Proposition~\ref{prop-read-then-writer2} states that if $Q_j$ writes to column $c$ at any time after some replay query $Q_i$ reads column $c$, then $Q_j$ is also added to the replay set $\mathbb{I}$. 

Proposition~\ref{prop-read-then-writer1} and \ref{prop-read-then-writer2} ensure to replay all \textit{consulted tables}.

Once Proposition~\ref{prop-dependency-rule1}, \ref{prop-dependency-rule2}, \ref{prop-i-set}, \ref{prop-read-then-writer1}, and \ref{prop-read-then-writer2} complete repetition until there are no more new queries to be added to $\mathbb{I}$, column-wise query dependency analysis is complete.

\begin{theorem}
For a retroactive operation for adding or removing the target query $Q'_\tau$ based on the column-wise query dependency analysis, it is sufficient to do the following: \textit{(i)} rollback the queries that belong to $\mathbb{I}$; \textit{(ii)} either execute $Q'_\tau$ (in case of retroactively adding $Q'_\tau$) or roll back $Q'_\tau$ (in case of retroactively removing $Q'_\tau$); \textit{(iii)} replay all queries in $\mathbb{I}$.
\label{theorem-column-dependency}
\end{theorem}

\begin{proof}
 Let the database state after the retroactive operation of adding the target query $Q'_\tau$ be $\mathbb{D'} = \mathcal{M}(\mathcal{M}(\mathcal{M}^{-1}({\mathbb{D}, \mathbb{Q_{\langle \tau, \psi \rangle}}}), Q'_\tau), \mathbb{Q_{\langle \tau, \psi \rangle}})$. Let $b_i$ be the $i$-th oldest query index that satisfies the following: $(b_i > \tau) \wedge (Q_{b_i} \not\in \mathbb{I})$. For example, $Q_{b_1}$ is the oldest query in $\mathbb{Q_{\langle \tau, \psi \rangle}}$ that does not belong to $\mathbb{I}$, and $Q_{b_2}$ is the second-oldest query in $\mathbb{Q_{\langle \tau, \psi \rangle}}$ that does not belong to $\mathbb{I}$. Note that every query that does not belong to $\mathbb{I}$ also does not depend on any query in $\mathbb{I}$ (otherwise, it should have been put into $\mathbb{I}$ by Proposition~\ref{prop-dependency-rule1}, \ref{prop-dependency-rule2}, \ref{prop-i-set}, \ref{prop-read-then-writer1}, and \ref{prop-read-then-writer2}). We prove Theorem~\ref{theorem-column-dependency} by finite induction.

$ $

\textbf{Case $\bm{Q_{b_1}}$:} 
Let $\mathbb{D}_{b_1} = \mathcal{M}(\mathcal{M}(\mathcal{M}^{-1}(\mathbb{D}, \mathbb{Q_{\langle \tau, \psi \rangle}} - \{ Q_{b_1} \}), Q'_\tau), \mathbb{Q_{\langle \tau, \psi \rangle}} - \{ Q_{b_1} \})$, which is equivalent to rolling back all queries in $\mathbb{Q_{\langle \tau, \psi \rangle}}$ except for $Q_{b_1}$, executing $Q'_\tau$, and replaying all queries in $\mathbb{Q_{\langle \tau, \psi \rangle}}$ except for $Q_{b_1}$. 

\begin{lemma}

$\mathbb{D'} = \mathbb{D}_{b_1}$, which is equivalent to:

$\mathcal{M}(\mathcal{M}(\mathcal{M}^{-1}({\mathbb{D}, \mathbb{Q_{\langle \tau, \psi \rangle}}}), Q'_\tau), \mathbb{Q_{\langle \tau, \psi \rangle}}) $

$ = \mathcal{M}(\mathcal{M}(\mathcal{M}^{-1}(\mathbb{D}, \mathbb{Q_{\langle \tau, \psi \rangle}} - \{ Q_{b_1} \}), Q'_\tau), \mathbb{Q_{\langle \tau, \psi \rangle}} - \{ Q_{b_1} \})$.
\label{lemma-1}
\end{lemma}

The definition of $Q_{b_1}$ implies that in $\mathbb{Q}_{\langle \tau, \psi \rangle}$, any query committed before $Q_{b_1}$ belongs to $\mathbb{I}$. 
Proposition~\ref{prop-dependency-rule2} and  \ref{prop-i-set} guarantee that $Q_{b_1}$ does not depend on any query in $\mathbb{I}$ (because otherwise, $Q_{b_1}$ would have been put into $\mathbb{I}$). This means that the results of $Q_{b_1}$ (i.e., the resulting state of its write set columns) will not be affected by the retroactive operation. Therefore, $Q_{b_1}$ needs not be rolled back \& replayed while generating $\mathbb{D}'$. Thus, Lemma~\ref{lemma-1} is true. 

$ $

\textbf{Case $\bm{Q_{b_2}}$:} 
Let $\mathbb{Q}_{b_1}$ be the set of rolled back \& replayed queries for generating $\mathbb{D}_{b_1}$. Let $\mathbb{D}_{b_2} = \mathcal{M}(\mathcal{M}(\mathcal{M}^{-1}(\mathbb{D}, \mathbb{Q_{\langle \tau, \psi \rangle}} - \{ Q_{b_1}, Q_{b_2} \}), Q'_\tau),\\ \mathbb{Q_{\langle \tau, \psi \rangle}} - \{ Q_{b_1}, Q_{b_2} \})$, which is equivalent to rolling back all queries in $\mathbb{Q_{\langle \tau, \psi \rangle}}$ except for $\{ Q_{b_1}, Q_{b_2} \}$, executing $Q'_\tau$, and replaying all queries in $\mathbb{Q_{\langle \tau, \psi \rangle}}$ except for $\{ Q_{b_1}, Q_{b_2} \}$. 

\begin{lemma}
$\mathbb{D}_{b_1} = \mathbb{D}_{b_2}$, which is equivalent to:

$ \mathcal{M}(\mathcal{M}(\mathcal{M}^{-1}(\mathbb{D}, \mathbb{Q_{\langle \tau, \psi \rangle}} - \{ Q_{b_1} \}), Q'_\tau), \mathbb{Q_{\langle \tau, \psi \rangle}} - \{ Q_{b_1} \})$.

$ = \mathcal{M}(\mathcal{M}(\mathcal{M}^{-1}(\mathbb{D}, \mathbb{Q_{\langle \tau, \psi \rangle}} - \{ Q_{b_1}, Q_{b_2}  \}), Q'_\tau), \mathbb{Q_{\langle \tau, \psi \rangle}} - \{ Q_{b_1}, Q_{b_2} \})$.
\label{lemma-2}
\end{lemma}

The definition of $Q_{b_2}$ implies that in $\mathbb{Q}_{\langle \tau, \psi \rangle}$, any query committed before $Q_{b_1}$ belongs to $\mathbb{I} \cup \{ Q_{b_1} \}$. But in case of $\mathbb{D}_{b_1}$, $\mathbb{Q}_{b_1}$ does not contain $Q_{b_1}$, and thus in $\mathbb{Q}_{b_1}$, any query committed before $Q_{b_1}$ belongs to $\mathbb{I}$. Then, based on the same reasoning used for proving Lemma~\ref{lemma-1}, the results of $Q_{b_2}$ are not affected by the retroactive operation and its results are to be accessed only by those queries committed after $Q_{b_2}$. Thus, $Q_{b_2}$ needs not be rolled back \& replayed while generating $\mathbb{D}_{b_1}$. Thus, Lemma~\ref{lemma-2} is true.

\begin{flushleft}
\textcolor{white}{....}\textbf{Case $\bm{Q_{b_{\psi - |\mathbb{I}|}}}$:} Let $\mathbb{Q}_{b_{\psi - |\mathbb{I}| - 1}}$ be the set of rolled back \& replayed queries for generating $\mathbb{D}_{b_{\psi - |\mathbb{I}| - 1}}$. Let $\mathbb{D}_{b_{\psi - |\mathbb{I}|}} = \mathcal{M}(\mathcal{M}(\mathcal{M}^{-1}(\mathbb{D}, \mathbb{Q_{\langle \tau, \psi \rangle}} - \{ Q_{b_1}, Q_{b_2}, \cdots Q_{b_{\psi - |\mathbb{I}|}} \}), Q'_\tau), \mathbb{Q_{\langle \tau, \psi \rangle}} - \{ Q_{b_1}, Q_{b_2}, \cdots Q_{b_{\psi - |\mathbb{I}|}} \})$, which is equivalent to rolling back all queries in $\mathbb{Q_{\langle \tau, \psi \rangle}}$ except for \\ $\{ Q_{b_1}, Q_{b_2}, \cdots Q_{b_{\psi - |\mathbb{I}|}} \}$, executing $Q'_\tau$, and replaying all queries in $\mathbb{Q_{\langle \tau, \psi \rangle}}$ except for $\{ Q_{b_1}, Q_{b_2}, \cdots Q_{b_{\psi - |\mathbb{I}|}} \}$. 
\end{flushleft}

\begin{lemma}
$\mathbb{D}_{b_{\psi - |\mathbb{I}| - 1}} = \mathbb{D}_{b_{\psi - |\mathbb{I}|}}$, which is equivalent to:

\begin{flushleft}
\noindent$\mathcal{M}(\mathcal{M}(\mathcal{M}^{-1}(\mathbb{D},\mathbb{Q_{\langle \tau, \psi \rangle}} - \{ Q_{b_1}, Q_{b_2}, \cdots Q_{b_{\psi - |\mathbb{I}| - 1}} \}), Q'_\tau),$
\textcolor{white}{...}$\mathbb{Q_{\langle \tau, \psi \rangle}} - \{ Q_{b_1}, Q_{b_2}, \cdots Q_{b_{\psi - |\mathbb{I}| - 1}} \})$
$ = \mathcal{M}(\mathcal{M}(\mathcal{M}^{-1}(\mathbb{D}, \mathbb{Q_{\langle \tau, \psi \rangle}} - \{ Q_{b_1}, Q_{b_2}, \cdots Q_{b_{\psi - |\mathbb{I}|}} \}), Q'_\tau),$
\textcolor{white}{...}$\mathbb{Q_{\langle \tau, \psi \rangle}} - \{ Q_{b_1}, Q_{b_2}, \cdots Q_{b_{\psi - |\mathbb{I}|}} \})$
\end{flushleft}
\label{lemma-3}
\end{lemma}

The definition of $Q_{b_{\psi - |\mathbb{I}|}}$ implies that in $\mathbb{Q}_{\langle \tau, \psi \rangle}$, any query committed before $Q_{b_{\psi - |\mathbb{I}|}}$ belongs to $\mathbb{I} \cup \{ Q_{b_1}, Q_{b_2}, \cdots Q_{b_{\psi - |\mathbb{I}|}} \}$. But in case of $\mathbb{D}_{b_{\psi - |\mathbb{I}| - 1}}$, $\mathbb{Q}_{b_{\psi - |\mathbb{I}| - 1}}$ does not contain $\{ Q_{b_1}, Q_{b_2}, \cdots Q_{b_{\psi - |\mathbb{I}|}} \}$, and thus in $\mathbb{Q}_{b_{{\psi - |\mathbb{I}| - 1}}}$, any query committed before $Q_{b_{{\psi - |\mathbb{I}|}}}$ belongs to $\mathbb{I}$. Then, based on the same reasoning used for proving Lemma~\ref{lemma-1}, the results of $Q_{b_{{\psi - |\mathbb{I}|}}}$ are the same as before the retroactive operation and its results are to be used only by those queries committed after $Q_{b_{{\psi - |\mathbb{I}|}}}$. Thus, $Q_{b_{{\psi - |\mathbb{I}|}}}$ needs not be rolled back \& replayed while generating $\mathbb{D}_{b_{\psi - |\mathbb{I}| - 1}}$. Thus, Lemma~\ref{lemma-3} is true.

$ $

According to Lemma~\ref{lemma-1}, \ref{lemma-2} and \ref{lemma-3}, 

$\mathbb{D'} = \mathbb{D}_{b_1} = \mathbb{D}_{b_2} = \cdots = \mathbb{D}_{b_{\psi - |\mathbb{I}|}}$

$ = \mathcal{M}(\mathcal{M}(\mathcal{M}^{-1}(\mathbb{D}, \mathbb{I}), Q'_\tau), \mathbb{I})$

$ $

Therefore, during the retroactive operation of adding the target query $Q'_\tau$, all queries that do not belong to $\mathbb{I}$ need not be rolled back \& replayed, and the resulting database's state is still consistent. 

If the retroactive operation is removing $Q'_\tau$, then the resulting database's state is 

$\mathbb{D'} = \mathcal{M}(\mathcal{M}^{-1}({\mathbb{D}, \mathbb{Q}_{\langle \tau, \psi \rangle}}), \mathbb{Q}_{\langle \tau + 1, \psi \rangle})$. 

Then, a similar induction proof used for Lemma~\ref{lemma-1}, \ref{lemma-2}, \ref{lemma-3} can be applied to derive the following:

$\mathcal{M}(\mathcal{M}^{-1}({\mathbb{D}, \mathbb{Q}_{\langle \tau, \psi \rangle}}), \mathbb{Q}_{\langle \tau + 1, \psi \rangle})$ 

$ = \mathcal{M}(\mathcal{M}^{-1}(\mathbb{D}, \mathbb{Q}_{\langle \tau, \psi \rangle} - \{ Q_{b_1} \}), \mathbb{Q}_{\langle \tau + 1, \psi \rangle} - \{ Q_{b_1} \})$

$ = \mathcal{M}(\mathcal{M}^{-1}(\mathbb{D}, \mathbb{Q}_{\langle \tau, \psi \rangle} - \{ Q_{b_1}, Q_{b_2}  \}), \mathbb{Q}_{\langle \tau + 1, \psi \rangle} - \{ Q_{b_1}, Q_{b_2} \})$

$\cdots$

$ = \mathcal{M}(\mathcal{M}^{-1}(\mathbb{D}, \mathbb{Q}_{\langle \tau, \psi \rangle} - \{ Q_{b_1}, Q_{b_2}, \cdots Q_{b_{\psi - |\mathbb{I}|}} \}), \mathbb{Q}_{\langle \tau + 1, \psi \rangle} $

$- \{ Q_{b_1}, Q_{b_2}, \cdots Q_{b_{\psi - |\mathbb{I}|}} \})$

$ = \mathcal{M}(\mathcal{M}^{-1}(\mathcal{M}^{-1}(\mathbb{D}, \mathbb{I}), Q'_\tau), \mathbb{I})$

\end{proof}

\subsection{Row-wise Query Dependency Analysis}
\label{appendix:rowwise-dependency-analysis} 

Next, we describe row-wise query dependency analysis to further reduce the number of queries in $\mathbb{I}$. First, we present additional notations as described in Terminology~\ref{not:notation2}.

\begin{terminology}[Row-wise Query Dependency Analysis] \textcolor{white}{.}

\begin{tabular}{ll}
$\bm{R_r(Q_n)}$ | $\bm{W_r(Q_n)}$&: $Q_n$'s row-wise read/write sets\\
\end{tabular}
\label{not:notation2}
\end{terminology}

\begin{definition}[Row Identifier (RI) Column]
A row identification (RI) column is the column of a table in a given database $\mathbb{D}$, whose value represent the table row(s) a query accesses. 
\label{def-ri-column}
\end{definition}

\begin{definition}[Row Identifier (RI) Key]
$\langle c: v \rangle$ is a row identifier (RI) key, where $c$ indicates the RI column and $v$ is the value, set, or range of the data record(s) of the RI column. 
\label{def-ri-key}
\end{definition}

\begin{definition}[Row-wise Read/Write Set]
$\bm{R_r(Q_n)}$ and $\bm{W_r(Q_n)}$ are row-wise read/write sets, containing zero or more RI keys as elements. 
\label{def-ri-set}
\end{definition}

\begin{proposition}
$\exists c : ((\langle c: v \rangle \in W_r(Q_m)) \wedge (\langle c: v \rangle \in R_r(Q_n))) \wedge (m < n) \Longrightarrow Q_n \rightarrow Q_m$
\label{prop-row-dependency-rule1}
\end{proposition}

Proposition~\ref{prop-row-dependency-rule1} states that if $Q_n$ reads or writes the table/view's row after $Q_m$ writes to it, then $Q_n$ depends on $Q_m$. Proposition~\ref{prop-row-dependency-rule1} captures the case where one query reads the same row that was retroactively modified by some prior query, which can potentially change the result of the latter query due to the changed state of the common row they access. 

Proposition~\ref{prop-dependency-rule2} from the column-wise dependency analysis applies the same to the row-wise dependency analysis. 

As for the definitions for the intermediate set $\mathbb{I}$ and read-then-writer $\curvearrowright$ in the row-wise dependency analysis, we use the same definition in Definition~\ref{def-intermediate-set} and Definition~\ref{def-read-then-writer} from the column-wise dependency analysis. As for the row-wise dependency rule, we use Preposition~\ref{prop-row-dependency-rule1} and \ref{prop-dependency-rule2}. As for the property of the $\mathbb{I}$ set, we use the same Preposition~\ref{prop-i-set}, \ref{prop-read-then-writer1}, and \ref{prop-read-then-writer2} from the column-wise dependency analysis.

\begin{theorem}
For a retroactive operation for adding or removing the target query $Q'_\tau$ based on the row-wise query dependency analysis, it is sufficient to do the following: \textit{(i)} rollback the queries that belong to $\mathbb{I}$; \textit{(ii)} either execute $Q'_\tau$ (in case of retroactively adding $Q'_\tau$) or roll back $Q'_\tau$ (in case of retroactively removing $Q'_\tau$); \textit{(iii)} replay all queries in $\mathbb{I}$.
\label{theorem-row-dependency}
\end{theorem}

\begin{proof}
This proof follows the identical structure as the one used in the column-wise query dependency analysis in Theorem~\ref{theorem-column-dependency}, while the difference is the granularity of data records. The column-wise dependency analysis divides data records vertically as columns in tables (by using column names that each query accesses), whereas the row-wise dependency analysis divides data records horizontally as rows in tables (by using each table's RI column's values as rows that each query accesses). Regardless of whether the granularity of the data records is vertical or horizontal, the identical proof of induction holds. In particular, we replace
Proposition~\ref{prop-dependency-rule1} by Proposition~\ref{prop-row-dependency-rule1} to modify granularity of data records from columns to rows, while keeping the other propositions the same. 
\end{proof}

\subsection{Final Query Dependency Analysis}
\label{appendix:dependency-analysis} 

To avoid confusion, we rename the $\mathbb{I}$ set used in ~\autoref{theorem-column-dependency} (i.e., the column-wise dependency analysis) as $\mathbb{I}_c$, and the $\mathbb{I}$ set used in ~\autoref{theorem-row-dependency} (i.e., the column-wise dependency analysis) as $\mathbb{I}_r$.

Combining ~\autoref{theorem-column-dependency} and ~\autoref{theorem-row-dependency}, we finally reach ~\autoref{theorem-dependency}:

\begin{theorem}
For a retroactive operation for adding or removing the target query $Q'_\tau$, it is sufficient to do the following: \textit{(i)} rollback the queries that belong to $\mathbb{I}_c \cap \mathbb{I}_r$; \textit{(ii)} either execute $Q'_\tau$ (in case of retroactively adding $Q'_\tau$) or roll back $Q'_\tau$ (in case of retroactively removing $Q'_\tau$); \textit{(iii)} replay all queries in $\mathbb{I}_c \cap \mathbb{I}_r$.
\label{theorem-dependency}
\end{theorem}

\begin{proof}

For a retroactive operation of adding/removing the target query $Q'_\tau$, ~\autoref{theorem-column-dependency} states that it is sufficient to roll back and replay only those queries in $\mathbb{I}_c$, and ~\autoref{theorem-row-dependency} states that it is sufficient to roll back and replay only those queries in $\mathbb{I}_r$. Since both theorems state the condition of sufficiency, combining the conditions of these two theorems gives us the corollary that for a retroactive operation of adding/removing the target query $Q'_\tau$, it is sufficient to roll back and replay only those queries in $\mathbb{I}_c \cap \mathbb{I}_r$. 

\end{proof}

\newpage


\subsection{Collision Rate of \ultraverse's Table Hashes}
\label{appendix:hashing}

\ultraverse computes a table's hash by hashing its each row with a collision-resistant hash function and summing them up. 
By assuming that the collision-resistant hash function's output is uniformly distributed in $[0, p - 1]$, 
we will prove that given two tables $T_1$ and $T_2$, 
\ultraverse's Hash-jumper's hash collision rate is upper-bounded by  
{$\frac{1}{p}$} (i.e., with a probability no more than $\frac{1}{p}$ producing the same hash value for $T_1$ and $T_2$, when $T_2 \neq T_1$). 

Suppose the Hash-jumper outputs a hash value $h \in [0, p - 1]$ for $T_1$. Without loss of generality, we assume $T_2$ has $n$ 
rows. Then we prove by induction.

$ $


\noindent \textbf{Case} $\mathbf{n = 1}$: 
Because the collision-resistant hash function's output is uniformly distributed in $[0, p - 1]$, it is easy to see that the probability that the Hash-jumper outputs $h$ for any $T_2$ is $\frac{1}{p}$. 

$ $


\noindent \textbf{Case} $\mathbf{n = k}$: For each row of $T_2$, let $x_i$ denote the collision-resistant hash function's output. Then there are $p^k$ possibilities of $(x_1, x_2, ..., x_k)$ for the $k$ rows. Because the collision-resistant hash function's output is uniformly distributed in $[0, p - 1]$, all these possibilities have the same probability $\frac{1}{p^k}$. Consider the output of the Hash-jumper. For any $h_k \in [0, p - 1]$, we assume there are $p^{k-1}$ possibilities of $(x_1, x_2, ..., x_k)$ such that $\sum_i^k x_i = h_k$ $\mathbf{mod}$ $p$. This holds for $k = 1$, as we have seen for \textbf{Case} ${\mathbf{n = 1}}$. 

$ $


\noindent \textbf{Case} $\mathbf{n = k + 1}$: There are $p^{k + 1}$ possibilities of $(x_1, x_2, ..., x_{k+1})$ for the $(k+1)$ rows. Because the collision-resistant hash function's output is uniformly distributed in $[0, p - 1]$, all these possibilities have the same probability $\frac{1}{p^{k+1}}$. For any $(x_1, x_2, ..., x_k)$ such that $\sum_i^k x_i = h_k$ $\mathbf{mod}$ $p$, there exists exactly one $x_{k+1}$ such that $h_k + x_{k+1} = h$ $\mathbf{mod}$ $p$. By the assumption in \textbf{Case} $\mathbf{n = k}$, for each $h_k \in [0, p - 1]$, there are exactly $p^{k-1}$ possibilities of $(x_1, x_2, ..., x_{k+1})$ such that the output of the Hash-jumper is $h$. Because there are $p$ possible $h_k$ values in $[0, p - 1]$, there are $p^k$ possibilities of $(x_1, x_2, ..., x_{k+1})$ such that the output of the Hash-jumper is $h$. Therefore, the probability that that the Hash-jumper outputs $h$ for a table $T_2$ of $(k+1)$ rows is $\frac{p^k}{p^{k+1}} = \frac{1}{p}$. 

$ $

Because the above probability is independent of the number of rows $n$, for any $T_2$, the Hash-jumper output $h$ with a probability of $\frac{1}{p}$. Therefore, the hash collision rate is upper-bounded by $\frac{1}{p}$ when $T_2 \neq T_1$.

\textbf{False Positives:} From the security perspective, there is yet a non-zero chance that a malicious user fabricates two row hash values $(x_1', x_2')$ such that $(x_1' + x_2')$ $\mathbf{mod}$ $p = (x_1 + x_2)$ $\mathbf{mod}$ $p$ and tricks the database server into believing in a false positive on a table hash hit. To address this, whenever a table hash hit is found, Hash-jumper optionally makes a literal table comparison between two table versions at the same commit time (one evolved during the replay; the other newly rolled back from the original database to this same point in time) and verifies if they really match. If the literal table comparison returns a true positive before finishing to replay rest of the queries, \ultraverse still ends up reducing its replay time. 

\textbf{False Negatives:} A subtlety occurs when a query uses the \texttt{LIMIT} keyword without \texttt{ORDER} \texttt{BY}, because each replay of this query may return different row(s) of a table in a non-deterministic manner, which may lead Hash-jumper to making a false negative decision and missing the opportunity of a legit hash-jump. However, note that missing the opportunity of optimization does not affect the correctness of retroactive operations.


\end{document}